\newcommand{\resetCurThmBraces}{%
\gdef\curThmBraceOpen{(}%
\gdef\curThmBraceClose{)}}
\newcommand{\removeThmBraces}{%
\gdef\curThmBraceOpen{}%
\gdef\curThmBraceClose{}}
\newenvironment{notheorembrackets}{\removeThmBraces}{\resetCurThmBraces}
\patchcmd{\thmhead}{(#3)}{\curThmBraceOpen #3\curThmBraceClose}{}{}
\newcommand{\defaultshowkeysformat}[1]{%
\StrSubstitute{#1}{ }{\textvisiblespace}[\TEMP]%
\parbox[t]{\marginparwidth}{\raggedright\normalfont\small\ttfamily\(\{\){\color{red!50!black}\expandafter\seqsplit\expandafter{\TEMP}}\(\}\)}%
}
\renewcommand*\showkeyslabelformat[1]{%
\noexpandarg%
\defaultshowkeysformat{#1}%
}
\newcommand{\midmid}{\hspace{0.2ex}{\rule[-0.1ex]{0.6pt}{1.65ex}}\hspace{0.2ex}}
\newcommand{\scriptmidmid}{\hspace{0.2ex}{\rule[-0.1ex]{0.6pt}{1.1ex}}\hspace{0.2ex}}
\newcommand{\compr}{:}
\newcommand{\newletter}[1]{{\midmid}#1}
\newcommand{\scriptnew}[1]{{\scriptmidmid}#1}
\newcommand{\F}{\mathcal{F}}
\newcommand{\FN}{\mathsf{FN}}
\renewcommand{\o}{\cdot}
\newcommand{\Names}[1]{\mathsf{N}(#1)}
\newcommand{\free}[1]{\mathsf{FN}(#1)}
\newcommand{\degree}{\mathsf{deg}}
\newcommand{\barA}{\overline{\names}}
\newcommand{\ub}{\mathsf{ub}}
\numberwithin{equation}{section}
\tikzset{
   ->,
   n/.style= {circle,fill,inner sep=1.5pt,node distance=2cm}
  ,acc/.style={circle,draw,inner sep=3pt,node distance=2cm}
  ,phantom/.style={circle},
  ,arr/.style={->, >=stealth, semithick, shorten <= 3pt, shorten >= 3pt}
}
\newcommand{\CO}{\mathcal{O}}
\newcommand{\ol}{\overline}
\newcommand{\epito}{\twoheadrightarrow}
\newcommand{\monoto}{\hookrightarrow}
\newcommand{\xto}[1]{\xrightarrow{~#1~}}
\newcommand{\seq}{\subseteq}
\newcommand{\tl}{\widetilde}
\newcommand{\dash}{\mathop{-}}
\newcommand{\Perm}{\mathsf{Perm}}
\newcommand{\pow}{\mathcal{P}}
\def\moverlay{\mathpalette\mov@rlay}
\def\mov@rlay#1#2{\leavevmode\vtop{%
   \baselineskip\z@skip \lineskiplimit-\maxdimen
   \ialign{\hfil$\m@th#1##$\hfil\cr#2\crcr}}}
\newcommand{\charfusion}[3][\mathord]{
    #1{\ifx#1\mathop\vphantom{#2}\fi
        \mathpalette\mov@rlay{#2\cr#3}
      }
    \ifx#1\mathop\expandafter\displaylimits\fi}
\newcommand{\dom}{\mathsf{dom}}
\newcommand{\braket}[1]{\langle #1 \rangle}
\renewcommand{\epsilon}{\varepsilon}
\newcommand{\PSPACE}{\mathrm{PSPACE}}
\newcommand{\orb}{\mathsf{orb}}
\newcommand{\fs}{\mathsf{fs}}
\newcommand{\Nat}{{\mathbb{N}}}
\newcommand{\Set}{\mathsf{Set}}
\newcommand{\op}{{\mathsf{op}}}
\newcommand{\Pow}{\mathcal{P}}
\newcommand{\ufs}{{\mathsf{ufs}}}
\DeclareMathOperator{\Nom}{\mathbf{Nom}}
\newcommand{\takeout}[1]{\empty}
\newcommand{\names}{\mathbb{A}}
\DeclareMathOperator{\supp}{\mathsf{supp}} 
\newcommand{\fresh}{\mathbin{\#}}
\newcommand{\powfs}{\pow_{\fs}}
\theoremstyle{definition}
\newtheorem{defn}[theorem]{Definition}
\newtheorem{expl}[theorem]{Example}
\newtheorem{rem}[theorem]{Remark}
\newtheorem{nota}[theorem]{Notation}
\title{Nominal Büchi Automata with Name Allocation}
\author{Henning Urbat}{Friedrich-Alexander-Universit\"at Erlangen-N\"urnberg, Germany}%
{henning.urbat@fau.de}%
{https://orcid.org/0000-0002-3265-7168}%
{Supported by Deutsche Forschungsgemeinschaft (DFG) under project SCHR~1118/15-1}
\author{Daniel Hausmann}{Gothenburg University, Sweden}%
{daniel.hausmann@fau.de}%
{https://orcid.org/0000-0002-0935-8602}%
{Supported by Deutsche Forschungsgemeinschaft (DFG) under project MI~717/7-1
and by the European Research Council (ERC) under project 772459}
\author{Stefan Milius}%
{Friedrich-Alexander-Universit\"at Erlangen-N\"urnberg, Germany}%
{mail@stefan-milius.eu}%
{https://orcid.org/0000-0002-2021-1644}%
{Supported by Deutsche Forschungsgemeinschaft (DFG) under project MI~717/7-1}
\author{Lutz Schr\"oder}{Friedrich-Alexander-Universit\"at Erlangen-N\"urnberg, Germany}%
{lutz.schroeder@fau.de}%
{https://orcid.org/0000-0002-3146-5906}%
{Supported by Deutsche Forschungsgemeinschaft (DFG) under project \mbox{SCHR~1118/15-1}}
\authorrunning{H.~Urbat, D.~Hausmann, S.~Milius, L.~Schr\"oder}
\keywords{Data languages, infinite words, nominal sets, inclusion checking}
\begin{document}
\maketitle
\begin{abstract}
Infinite words over infinite alphabets serve as models of the temporal development of the allocation and (re-)use of resources over linear time. We approach $\omega$-languages over infinite alphabets in the setting of nominal sets, and  study languages of \emph{infinite bar strings}, i.e.~infinite sequences of names that feature binding of fresh names; binding corresponds roughly to reading  letters from input words in automata models with registers. We introduce \emph{regular nominal nondeterministic Büchi automata} (\emph{Büchi RNNAs}), an automata model for languages of infinite bar strings, repurposing the previously introduced \emph{RNNAs} over finite bar strings. Our machines feature explicit binding (i.e.~resource-allocating) transitions and process their input via a Büchi-type acceptance condition. They emerge from the abstract perspective on name binding given by the theory of nominal sets. As our main result we prove that, in contrast to most other nondeterministic automata models over infinite alphabets, language inclusion of Büchi RNNAs is decidable and in fact elementary. This makes Büchi RNNAs a suitable tool for applications in model checking.
\end{abstract}
\newpage
\section{Introduction}\label{sec:intro}

Classical automata models and formal languages for finite words over
finite alphabets have been extended to infinity in both directions:
Infinite words model the long-term temporal development of systems,
while infinite alphabets model \emph{data}, such as
nonces~\cite{KurtzEA07}, object identities~\cite{GrigoreEA13}, or
abstract resources~\cite{CianciaSammartino14}. We approach \emph{data
  $\omega$-languages}, i.e.~languages of infinite words over infinite
alphabets, in the setting of nominal sets~\cite{Pitts13} where elements
of sets are thought of as carrying (finitely many) \emph{names} from a
fixed countably infinite reservoir. Following the paradigm of nominal
automata theory~\cite{BojanczykEA14}, we take the set of names as the
alphabet; we work with infinite words containing explicit name binding
in the spirit of previous work on nominal formalisms for finite words
such as nominal Kleene algebra~\cite{GabbayCiancia11} and regular
nondeterministic nominal automata (RNNAs)~\cite{SchroderEA17}. Name
binding may be viewed as the allocation of resources, or (yet) more
abstractly as an operation that reads fresh names from the input. We
refer to such infinite words as \emph{infinite bar strings}, in honour
of the vertical bar notation we employ for name binding. In the
present paper, we introduce a notion of nondeterministic nominal
automata over infinite bar strings, and show that it admits inclusion
checking in elementary complexity.

\enlargethispage{5pt} Specifically, we reinterpret the mentioned RNNAs
to accept infinite rather than finite bar strings by equipping them
with a B\"uchi acceptance condition; like over finite alphabets,
automata with more expressive acceptance conditions including Muller
acceptance can be translated into the basic (nondeterministic) B\"uchi
model. Our mentioned main result then states that language inclusion
of B\"uchi RNNAs is decidable in parametrized polynomial
space~\cite{StockhusenTantau13}, with a parameter that may be thought
as the number of registers. This is in sharp contrast to other
nondeterministic automata models for infinite words over infinite
alphabets, which sometimes have decidable emptiness problems but whose
inclusion problems are typically either undecidable or of
prohibitively high complexity even under heavy restrictions (inclusion
is not normally reducible to emptiness since nondeterministic models
typically do not determinize, and in fact tend to fail to be closed
under complement); details are in the related work section.

Infinite bar strings can be concretized to infinite strings of names
(i.e.~essentially to data words) by interpreting name binding as
reading either globally fresh letters (in which case B\"uchi RNNAs may
essentially be seen as a variant of session automata~\cite{BolligEA14}
for infinite words) or locally fresh letters. Both interpretations
arise from disciplines of $\alpha$-renaming as known from
$\lambda$-calculus~\cite{barendregt85}, with global freshness corresponding to a
discipline of \emph{clean} naming where bound names are never
shadowed, and local freshness corresponding to an unrestricted naming
discipline that does allow shadowing. The latter implies that local
freshness can only be enforced w.r.t.~names that are expected to be
seen again later in the word. It is precisely this fairly
reasonable-sounding restriction that buys the comparatively low
computational complexity of the model, which on the other hand allows
full nondeterminism (and, e.g., accepts the language `some letter
occurs infinitely often', which is not acceptable by deterministic
register-based models) and unboundedly many registers. B\"uchi RNNAs
thus provide a reasonably expressive automata model for infinite data
words whose main reasoning problems are decidable in elementary
complexity.

\subparagraph*{Related Work} B\"uchi RNNAs generally adhere to the
paradigm of register automata, which in their original incarnation
over finite words~\cite{KaminskiFrancez94} are equivalent to the
nominal automaton model of nondeterministic orbit-finite
automata~\cite{BojanczykEA14}. Ciancia and Sammartino
\cite{CianciaSammartino14} study \emph{deterministic} nominal Muller
automata accepting 
infinite strings of names, and show 
Boolean closure and decidability of inclusion. This model is
incomparable to ours; details are in \autoref{sec:relation}. For
alternating register automata over infinite data words, emptiness is
undecidable even when only one register is allowed (which over finite
data words does ensure decidability)~\cite{DemriLazic09}. For the
one-register safety fragment of the closely related logic Freeze LTL,
inclusion (i.e.~refinement) is decidable, but even the special case of
validity is not primitive recursive~\cite{Lazic06}, in particular not
elementary.

Many automata models and logics for data words deviate rather
substantially from the register paradigm, especially models in the
vicinity of data automata~\cite{BojanczykEA11}, whose emptiness
problem is decidable but at least as hard as Petri net reachability,
which by recent results is not elementary~\cite{CzerwinskiEA21}, in
fact Ackermann-complete~\cite{Leroux21,CzerwinskiOrlikowski21}. For
weak B\"uchi data automata~\cite{KaraEA12}, the emptiness problem is
decidable in elementary complexity. Similarly, Büchi generalized data
automata~\cite{ColcombetManuel14} have a decidable emptiness problem;
their B\"uchi component is deterministic. (Throughout, nothing appears
to be said about inclusion of data automata.) Variable finite
automata~\cite{GrumbergEA10} apply to both finite and infinite words;
in both versions, the inclusion problem of the nondeterministic
variant is undecidable.

\section{Preliminaries: Nominal Sets}
\label{sec:prelim}
Nominal sets form a convenient formalism for dealing with names and
freshness; for our present purposes, names play the role of data. We
briefly recall basic notions and facts and refer to~\cite{Pitts13} for a
comprehensive introduction. Fix a countably infinite set $\names$ of \emph{names}, and let $\Perm(\names)$
denote the group of finite permutations on $\names$, which is
generated by the \emph{transpositions} $(a\, b)$ for $a\neq b\in\names$
(recall that $(a\, b)$ just swaps~$a$ and~$b$). A \emph{nominal set} is a
set~$X$ equipped with a (left) group action $\Perm(\names)\times X\to X$, denoted $(\pi,x)\mapsto \pi\o x$, such that
every element $x\in X$ has a finite \emph{support}~$S\subseteq\names$,
i.e.~$\pi\cdot x=x$ for every $\pi\in \Perm(\names)$ such that $\pi(a)=a$ for all
$a\in S$. Every element~$x$ of a nominal set $X$ has a least finite
support, denoted $\supp(x)$. Intuitively, one should think of $X$ as a set of syntactic objects (e.g.~strings, $\lambda$-terms, programs), and of $\supp(x)$ as the set of names
 needed to describe an element $x\in X$. A name $a\in\names$ is \emph{fresh}
for~$x$, denoted $a\fresh x$, if $a\notin\supp(x)$. The \emph{orbit} of an element $x\in X$ is given by $\{ \pi\o x: \pi\in\Perm(\names)\}$. The orbits form a partition of $X$. The nominal set $X$ is \emph{orbit-finite} if it has only finitely many orbits.

Putting $\pi\cdot a = \pi(a)$ makes
$\names$ into a nominal set. Moreover,~$\Perm(\names)$ acts on subsets
$A\subseteq X$ of a nominal set~$X$ by
$\pi\cdot A = \{\pi\cdot x \compr x \in A\}$. A subset $A\seq X$ is \emph{equivariant} if $\pi\o A=A$ for all $\pi\in \Perm(\names)$. More generally, it is \emph{finitely
  supported} if it has finite support w.r.t.\ this action, i.e.~there exists a finite set $S\seq \names$ such that $\pi\o A = A$ for all $\pi\in \Perm(\names)$ such that $\pi(a)=a$ for all $a\in S$. The set $A$ is \emph{uniformly finitely supported} if $\bigcup_{x\in A} \supp(x)$ is a finite set. This implies that $A$ is finitely supported, with least support $\supp(A)=\bigcup_{x\in A} \supp(x)$~\cite[Theorem~2.29]{gabbay2011}. (The converse does not
hold, e.g.~the set $\names$ is finitely supported but not uniformly
finitely supported.) Uniformly finitely supported
orbit-finite sets are always finite (since an orbit-finite set contains only finitely many elements with a given finite support).  We respectively denote by $\Pow_\ufs(X)$ and $\powfs(X)$ the nominal sets of (uniformly) finitely supported subsets of a nominal set $X$. 

A map $f\colon X\to Y$ between nominal sets is \emph{equivariant} if $f(\pi\o x)=\pi\o f(x)$ for all $x\in X$ and $\pi\in \Perm(\names)$. Equivariance implies $\supp(f(x))\seq \supp(x)$ for all $x\in X$. The function $\supp$ itself is
equivariant, i.e.~$\supp(\pi\cdot x)=\pi\cdot \supp(x)$ for
$\pi\in \Perm(\names)$.  Hence $|\supp(x_1)| = |\supp(x_2)|$ whenever $x_1$, $x_2$
are in the same orbit of a nominal set. 

We denote by $\Nom$ the category of nominal sets and equivariant maps. The object maps $X\mapsto \Pow_\ufs(X)$ and $X\mapsto \powfs(X)$ extend to endofunctors $\Pow_\ufs\colon \Nom\to \Nom$ and $\powfs\colon \Nom\to \Nom$
sending an equivariant map $f\colon X\to Y$ to the map $A\mapsto f[A]$. 

The coproduct $X+Y$ of nominal sets $X$ and $Y$ is given by their disjoint union with the group action inherited from the two summands. Similarly, the product $X \times Y$ is given by the cartesian product with the componentwise group action; we have $\supp(x,y)
= \supp(x) \cup \supp(y)$. Given a nominal set $X$ equipped with an
equivariant equivalence relation, i.e.~an equivalence relation $\sim$ that is equivariant as a 
subset $\mathord{\sim} \subseteq X \times X$, the quotient
$X/\mathord{\sim}$ is a nominal set under the expected group action
defined by $\pi \cdot [x]_\sim = [\pi \cdot x]_\sim$.

A key role in the technical development is played by
\emph{abstraction sets}, which provide a semantics for binding
mechanisms~\cite{GabbayPitts99}. Given a nominal set $X$, an equivariant equivalence relation $\sim$ on
  $\names \times X$ is defined by $(a,x)\sim (b,y)$ iff
  $(a\, c)\cdot x=(b\, c)\cdot y$ for some (equivalently, all)
  fresh~$c$. The \emph{abstraction set} $[\names]X$ is the quotient
  set $(\names\times X)/\mathord{\sim}$. The $\sim$-equivalence class
  of $(a,x)\in\names\times X$ is denoted by
  $\braket{a} x\in [\names]X$. We may think of~$\sim$ as an abstract notion of $\alpha$-equivalence,
and of~$\braket{a}$ as binding the name~$a$. Indeed we have
$\supp(\braket{a} x)= \supp(x)\setminus\{a\}$ (while
$\supp(a,x)=\{a\}\cup\supp(x)$), as expected in binding constructs.

 The object map $X\mapsto [\names]X$ extends to an endofunctor
$[\names]\colon \Nom \to \Nom$
sending an equivariant map $f\colon X\to Y$ to the equivariant map $[\names]f\colon [\names]X\to [\names]Y$ given by $\braket{a}x\mapsto \braket{a}f(x)$ for $a\in \names$ and $x\in X$.


\section{The Notion of $\boldsymbol{\alpha}$-Equivalence for Bar Strings}\label{sec:alpha-equiv}
In the following we investigate automata consuming input words over the infinite alphabet 
\[
  \barA := \names \cup \{\newletter a: a\in \names\}.
\]
A \emph{finite bar string} is a finite word $\sigma_1\sigma_2\cdots \sigma_n$ over $\barA$, and \emph{infinite bar string} is an infinite word $\sigma_1\sigma_2 \sigma_3\cdots$ over $\barA$. We denote the sets of finite and infinite bar strings by  $\barA^*$ and $\barA^\omega$, respectively. Given $w\in \barA^*\cup \barA^\omega$ the set of \emph{names} in $w$ is defined by 
\[ \Names{w} = \{ a\in \names \compr \text{the letter $a$ or $\newletter a$ occurs in $w$} \}. \]
An infinite bar string $w$ is \emph{finitely supported} if $\Names{w}$ is a finite set; we let $\barA^\omega_\fs\seq \barA^\omega$ denote the set of finitely supported infinite bar strings. Note that $\barA^*$ and $\barA^\omega_\fs$ are nominal sets w.r.t.\ the group action defined pointwise. The least support of a bar string is its set of names.

A bar string containing only letters from $\names$ is called a \emph{data word}. We denote by $\names^*\seq \barA^*$, $\names^\omega\seq \barA^\omega$ and $\names^\omega_\fs\seq \barA^\omega_\fs$ the sets of finite data words, infinite data words, and finitely supported infinite data words, respectively. 

We interpret $\newletter a$ as binding the name $a$ to the right. Accordingly, a name $a\in \names$ is said to be \emph{free} in a bar string $w\in \barA^*\cup \barA^\omega$ if (i) the letter $a$ occurs in $w$, and (ii) the first occurrence of $a$ is not preceded by any occurrence of $\newletter a$. For instance, the name $a$ is free in $a\newletter aba$ but not free in $\newletter aaba$, while the name $b$ is free in both bar strings. We put 
\[\free{w} = \{ a\in \names\compr \text{$a$ is free in $w$} \}. \]
We obtain a natural notion of $\alpha$-equivalence for both finite and infinite bar strings; the finite case is taken from~\cite{SchroderEA17}.

\begin{defn}[$\alpha$-equivalence]\label{def:alpha-fin}
  Let $=_\alpha$ be the least equivalence relation on $\barA^*$ such that
  \[
    x\newletter av =_\alpha x\newletter bw
    \qquad
    \text{for all $a,b\in \names$ and $x,v,w\in \barA^*$ such that
      $\braket{a} v = \braket{b} w$}.
  \]
This extends to an equivalence relation $=_\alpha$ on $\barA^\omega$  given by \[v=_\alpha w \qquad\text{iff}\qquad v_n=_\alpha w_n \quad\text{for all $n\in \Nat$},\] 
where $v_n$ and $w_n$ are the prefixes of length $n$ of $v$ and $w$. We denote by $\barA^*/\mathord{=_\alpha}$ and $\barA^\omega/\mathord{=_\alpha}$ the sets of $\alpha$-equivalence classes of finite and infinite bar strings, respectively, and we write $[w]_\alpha$ for the $\alpha$-equivalence class of $w\in \barA^*\cup \barA^\omega$.
\end{defn}
\begin{rem}\label{rem:alpheq-technical}
  \begin{enumerate}
  \item For any $v,w\in \barA^*$ the condition $\braket{a} v = \braket{b} w$ holds if
    and only if
    \[
      \text{$a=b$ and $v=w$,}
      \qquad \text{or}\qquad
      \text{$b\fresh v$ and $(a\, b)\o v = w$.}
    \]
  \item The $\alpha$-equivalence relation is a \emph{left congruence}: 
    \[
      v=_\alpha w\quad\text{implies}\quad xv=_\alpha xw
      \qquad\qquad
      \text{for all $v,w\in \barA^*\cup \barA^\omega$ and $x\in \barA^*$.}
    \]
    Moreover, the \emph{right cancellation property} holds:
    \[
      vx=_\alpha wx \quad\text{implies}\quad v=_\alpha w
      \qquad\qquad
      \text{for all $v,w\in \barA^*$ and $x\in \barA^*\cup \barA^\omega$.}
    \]
  \item The equivalence relation $=_\alpha$ is equivariant. Therefore, both $\barA^*/{=_\alpha}$ and $\barA^\omega_\fs/{=_\alpha}$ are nominal sets with the group action $\pi\o [w]_\alpha = [\pi\o w]_\alpha$ for $\pi\in \Perm(\names)$ and $w\in \barA^*\cup \barA^\omega_\fs$.
  The least support of $[w]_\alpha$ is
    the set $\free{w}$ of free names of $w$.
  \end{enumerate}
\end{rem}

\begin{rem}
  Our notion of $\alpha$-equivalence on infinite bar strings differs
  from the equivalence relation generated by relating $x\newletter av$
  and $x\newletter bw$ whenever $\braket{a} v = \braket{b} w$ (as in
  \autoref{def:alpha-fin} but now for infinite bar strings
  $v,w\in \barA^\omega_\fs$): The latter equivalence relates two bar
  strings iff they can be transformed into each other by finitely many
  $\alpha$-renamings, while the definition of $\alpha$-equivalence as
  per \autoref{def:alpha-fin} allows infinitely many simultaneous
  $\alpha$-renamings; e.g.~the infinite bar string
  $(|aa)^\omega= \newletter aa \newletter aa\newletter aa\newletter aa
  \cdots$ is $\alpha$-equivalent to
  $(|aa|bb)^\omega = \newletter aa \newletter bb\newletter aa
  \newletter bb\cdots$.
\end{rem}

\begin{defn}
  A \emph{literal language}, \emph{bar language} or \emph{data
    language} is a subset of $\barA^*$, $\barA^*/{=_\alpha}$ or
  $\names^*$, respectively.  Similarly, a \emph{literal
    $\omega$-language}, \emph{bar $\omega$-language} or \emph{data
    $\omega$-language} is a subset of $\barA^\omega$,
  $\barA^\omega/{=_\alpha}$ or $\names^\omega$, respectively.
\end{defn}
\begin{nota}\label{not:local-semantics}
  Given a finite or infinite bar string
  $w\in \barA^*\cup \barA^\omega$ we let $\ub(w)$ denote the data word
  obtained by replacing every occurrence of $\newletter a$ in $w$ by $a$; e.g.\
  $\ub(\newletter aa\newletter a\newletter bb)=aaabb$. To every bar ($\omega$-)language $L$ we associate the data
  ($\omega$-)language $D(L)$ given by
  \[
    D(L) = \{ \ub(w) \compr [w]_\alpha \in L \}.
  \]
\end{nota}
\begin{defn}\label{D:clean}
  A finite or infinite bar string $w$ is \emph{clean} if for each
  $a\in \free{w}$ the letter~$\newletter a$ does not occur in $w$, and for each
  $a\not\in \free{w}$ the letter $\newletter a$ occurs at most once.
\end{defn}
\begin{lemma}\label{lem:clean}
  Every bar string $w\in \barA^*\cup \barA^\omega_\fs$ is $\alpha$-equivalent to a (not necessarily finitely supported) clean bar string.
\end{lemma}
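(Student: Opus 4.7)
My plan is to replace every binder in $w$ by a new binder using a completely fresh name, with distinct binders receiving distinct names, and to rename bound occurrences accordingly. Since $\Names{w}$ is finite for any $w\in \barA^*\cup \barA^\omega_\fs$, I would first fix a countably infinite reservoir $B = \{b_1, b_2, \ldots\} \subseteq \names\setminus\Names{w}$.

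Define $w'$ via a left-to-right traversal: let the binder positions of $w$ be $p_1 < p_2 < \cdots$ with $\sigma_{p_k} = \newletter{a_k}$, and set $w'_{p_k} := \newletter{b_k}$; for a non-binder position $j$ with $\sigma_j = a \in \names$, set $w'_j := b_k$ if $k$ is maximal with $p_k < j$ and $a_k = a$, and $w'_j := a$ otherwise. The $b_k$ are distinct elements of $\names\setminus\Names{w}$, free occurrences of $w$ are preserved unchanged, and every bound occurrence is renamed to the chosen fresh name of its governing binder. Consequently $\free{w'} = \free{w}\subseteq\Names{w}$ is disjoint from $B$, each $\newletter{b}$ for $b\in B$ occurs at most once in $w'$, and no free name of $w'$ is ever bound in $w'$. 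Thus $w'$ is clean.

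To establish $w =_\alpha w'$, I first handle finite $w$ by induction on the number of binders. The base case (no binders) is immediate since then $w = w'$. For the step, write $w = x\newletter{a}v$ with $x\in\barA^*$ binder-free. Since $b_1\notin\Names{v}$ we have $b_1\fresh v$, and \autoref{rem:alpheq-technical}(1) gives $\braket{a}v = \braket{b_1}((a\,b_1)\o v)$, hence $\newletter{a}v =_\alpha \newletter{b_1}((a\,b_1)\o v)$ by the defining clause of $=_\alpha$. Left congruence (\autoref{rem:alpheq-technical}(2)) yields $w =_\alpha x\newletter{b_1}((a\,b_1)\o v)$. Now $(a\,b_1)\o v$ has one fewer binder, and $\Names{(a\,b_1)\o v}\subseteq \Names{w}\cup\{b_1\}$ is disjoint from $\{b_2,b_3,\ldots\}$, so the induction hypothesis applies with reservoir $\{b_2,b_3,\ldots\}$ and produces a clean $v''$ with $(a\,b_1)\o v =_\alpha v''$ whose binders lie in $\{b_2,b_3,\ldots\}$. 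Applying left congruence once more gives $w =_\alpha x\newletter{b_1}v''$, and a case analysis on positions (distinguishing whether a bound name equals $a$ and whether it is re-bound inside $v$) verifies that $x\newletter{b_1}v''$ coincides with the directly defined $w'$.

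For $w\in\barA^\omega_\fs$, the construction of $w'$ is local: $w'_j$ depends only on the prefix $w_j$. Hence the length-$n$ prefix of $w'$ equals the clean bar string produced by the same procedure from $w_n$, and the finite case shows that this prefix is $\alpha$-equivalent to $w_n$. By the prefix-wise definition of $=_\alpha$ on $\barA^\omega$, it follows that $w =_\alpha w'$. The main obstacle lies in the inductive step of the finite case: in order for $x\newletter{b_1}v''$ to be clean, one must carry along the invariant that all binders of the inductively produced $v''$ are drawn from $B\setminus\{b_1\}$, so that binders inside $v''$ cannot accidentally re-bind any name occurring in $x$; only with this strengthened invariant does cleanness propagate through the induction.
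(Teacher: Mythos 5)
Your proof is correct and takes essentially the same route as the paper: both obtain the clean representative by renaming binders to globally fresh, pairwise distinct names and transposing the corresponding suffixes. Your write-up merely makes explicit what the paper's one-line ``iterate this'' argument leaves implicit, namely the strengthened induction on finite bar strings (tracking the reservoir of fresh binder names) and the passage to infinite bar strings via the prefix-wise definition of $=_\alpha$.
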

\begin{proof}
  Since $w$ has finite support, for every occurrence of the letter
  $\newletter a$ for which $a$ or $\newletter a$ has already occurred before, one can
  replace $\newletter a$ by $\newletter b$ for some fresh name $b$ and replace the
  suffix~$v$ after $\newletter a$ by $(a\, b)v$. Iterating this 
  yields a clean bar string $\alpha$-equivalent to~$w$.
\end{proof}

\begin{expl}
\begin{enumerate}
\item The finitely supported infinite bar string $\newletter aa\newletter bb\newletter aa\newletter bb\cdots$
  is $\alpha$-equivalent to the clean bar string
  $\newletter a_1a_1\newletter a_2a_2\newletter a_3a_3\newletter a_4a_4\cdots$ where the $a_i$ are pairwise
  distinct names. Note that $\newletter aa\newletter bb\newletter aa\newletter bb\cdots$ is not $\alpha$-equivalent to any finitely supported clean bar string.
  
\item For non-finitely supported bar
  strings the lemma generally fails: if $\names = \{a_1,a_2,a_3,\cdots\}$ then the bar string
  $a_1\newletter a_1a_2a_3a_4a_5a_6a_7\cdots$ is not $\alpha$-equivalent to any other bar string, in particular not to a clean one.
\end{enumerate}
\end{expl}

\noindent For readers familiar with the theory of coalgebras we note that on infinite bar strings with finite support, $\alpha$-equivalence
naturally emerges from a coinductive point of view. Kurz et
al.~\cite{kpsv13} use coinduction to devise a general notion of
$\alpha$-equivalence for infinitary terms over a binding signature,
which form the final colgebra for an associated endofunctor on $\Nom$.
The following is the special case for the endofunctors
\[
  GX = \barA\times X \cong \names \times X + \names \times X
  \qquad\text{and}\qquad
  FX = \names\times X + [\names]X.
\]
Let $\nu F$ and $\nu G$ denote their respective final coalgebras.
The coalgebra $\nu G$ is carried by the nominal set
$\barA^\omega_\fs$ 
with the usual coalgebra structure
$\langle \mathsf{hd},\mathsf{tl}\rangle\colon \barA^\omega_\fs \to
\barA \times \barA^\omega_\fs$ decomposing an infinite string into its
head and tail. The natural transformation
\[
  \sigma_X\colon GX\twoheadrightarrow FX\qquad\text{given by}\qquad
  (a,x) \mapsto (a,x),\quad (\newletter {a},x) \mapsto \langle
  a\rangle x,
\]
then induces a canonical map $e_\alpha\colon \barA^\omega_\fs \to \nu F$,
viz.~the unique
homomorphism from the $F$-coalgebra
$\nu G \xto{\langle \mathsf{hd},\mathsf{tl}\rangle} G(\nu G)
\xto{\sigma_{\nu G}} F(\nu G)$ into the final coalgebra $\nu F$.
\begin{proposition}\label{lem:alpha-equiv-coalgebraic}
  For every $v,w\in \barA^\omega_\fs$ we have $v=_\alpha w$ if and
  only if $e_\alpha(v)=e_\alpha(w)$.
\end{proposition}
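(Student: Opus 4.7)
My plan proceeds in three stages, centred on comparing the quotient $Q := \barA^\omega_\fs/{=_\alpha}$ with the final $F$-coalgebra $\nu F$ via the map $e_\alpha$. The crucial auxiliary observation is a one-step decomposition of $=_\alpha$ on infinite bar strings: if $v=_\alpha w$ then either (a) both $v$ and $w$ start with the same free letter $a\in\names$ and $\mathsf{tl}(v)=_\alpha \mathsf{tl}(w)$, or (b) they start with bound letters $\newletter a$ and $\newletter b$ respectively and $\braket{a}[\mathsf{tl}(v)]_\alpha = \braket{b}[\mathsf{tl}(w)]_\alpha$ holds in $[\names]Q$. I would derive this by applying the prefix definition of \autoref{def:alpha-fin} at prefix lengths $1$ and $n{+}1$, together with the left-congruence and right-cancellation properties from \autoref{rem:alpheq-technical} and the characterization of equality of abstractions recorded in \autoref{rem:alpheq-technical}(1).

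Granted this decomposition, the assignments $[av]_\alpha\mapsto (a,[v]_\alpha)$ and $[\newletter a v]_\alpha \mapsto \braket{a}[v]_\alpha$ are well-defined on representatives and yield an $F$-coalgebra structure $\bar\zeta\colon Q \to FQ$ with respect to which the quotient map $q\colon \barA^\omega_\fs \to Q$ is an $F$-coalgebra homomorphism from the coalgebra $\sigma_{\nu G}\circ\langle\mathsf{hd},\mathsf{tl}\rangle$ on $\barA^\omega_\fs$. By finality of $\nu F$, the homomorphism $e_\alpha$ factors uniquely as $e_\alpha = \tilde{e}_\alpha \circ q$ for some $\tilde{e}_\alpha\colon Q \to \nu F$, which immediately gives the soundness direction $v=_\alpha w \Rightarrow e_\alpha(v) = e_\alpha(w)$.

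For the converse, I would show that $\tilde{e}_\alpha$ is injective by proving that $(Q,\bar\zeta)$ is itself a final $F$-coalgebra. The cleanest route is to verify that $\bar\zeta$ is an isomorphism, with inverse $(a,[v]_\alpha)\mapsto[av]_\alpha$ and $\braket{a}[v]_\alpha\mapsto[\newletter a v]_\alpha$; Lambek's lemma together with finality then forces $\tilde{e}_\alpha$ to be bijective. The main obstacle is the well-definedness of the second clause of the inverse: one must show that whenever $\braket{a}[v]_\alpha = \braket{c}[v']_\alpha$ holds in $[\names]Q$ (meaning $(a\,e)\cdot v =_\alpha (c\,e)\cdot v'$ for some fresh $e$), one actually has $\newletter a v =_\alpha \newletter c v'$ in $\barA^\omega_\fs$. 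This is delicate because, as the preceding remark on $(|aa)^\omega$ highlights, $\alpha$-equivalence on infinite bar strings permits infinitely many simultaneous renamings and is therefore \emph{not} generated by finite chains of elementary renamings; the argument must instead verify the required $\alpha$-equivalence prefix-by-prefix, combining a single outer renaming of the leading binder with the prefix-wise data supplied by $(a\,e)\cdot v=_\alpha (c\,e)\cdot v'$ and using equivariance of $=_\alpha$ to propagate the swap consistently through every prefix.
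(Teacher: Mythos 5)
The forward direction of your plan (well-definedness of the quotient coalgebra structure $\bar\zeta$ on $Q=\barA^\omega_\fs/{=_\alpha}$, the quotient map being a homomorphism, and hence $e_\alpha$ factoring through it) is workable, and it is a different route from the paper, which never forms the quotient coalgebra at all. However, your completeness direction has a genuine gap: from ``$\bar\zeta$ is an isomorphism'' you cannot conclude, via Lambek's lemma and finality of $\nu F$, that $(Q,\bar\zeta)$ is final or even that $\tilde e_\alpha$ is injective. Lambek's lemma is only a necessary condition for finality; an arbitrary fixed point of $F$ need not be final, and the unique-up-to-factorization map from it into $\nu F$ need not be injective. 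A minimal counterexample to the principle you invoke: for the stream functor $FX=A\times X$ on $\Set$, the coalgebra $A^\omega\times\{0,1\}$ with structure $(s,i)\mapsto\bigl(\mathsf{hd}(s),(\mathsf{tl}(s),i)\bigr)$ has an isomorphism as structure map, yet its canonical homomorphism to the final coalgebra $A^\omega$ is the non-injective projection $(s,i)\mapsto s$. So even after you settle the delicate well-definedness of $\bar\zeta^{-1}$ on abstractions, the key inclusion $\ker(e_\alpha)\subseteq{=_\alpha}$ remains unproved.

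The paper closes exactly this gap by a continuity argument you do not have: $F=\names\times(\dash)+[\names](\dash)$ preserves limits of $\omega^{\mathsf{op}}$-chains (products and coproducts behave as in $\Set$, and $[\names](\dash)$ is a right adjoint), so $\nu F=\lim_n F^n1$, and an induction shows $F^n1\cong\barA^n/{=_\alpha}$ with the projections of $e_\alpha$ sending $w$ to $[w_n]_\alpha$. Equality $e_\alpha(v)=e_\alpha(w)$ is then literally $v_n=_\alpha w_n$ for all $n$, which is the definition of $=_\alpha$ in \autoref{def:alpha-fin}. To repair your proof you should either adopt this limit argument, or prove injectivity of $\tilde e_\alpha$ directly: use Lambek's lemma in its correct direction (the structure map of $\nu F$ is injective) to extract from $e_\alpha(v)=e_\alpha(w)$ the one-step decomposition of the pair $(v,w)$, and then show $v_n=_\alpha w_n$ by induction on $n$, using equivariance of $e_\alpha$ and the finite-string facts you already isolate. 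Note also that finality of $(Q,\bar\zeta)$, i.e.\ bijectivity of $\tilde e_\alpha$, is more than the proposition asserts and would need a separate surjectivity argument; only injectivity on $\alpha$-classes is required.
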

\section{Automata over Infinite Bar Strings}\label{sec:rnna}

We proceed to introduce {Büchi RNNAs}, our nominal automaton model for bar $\omega$-languages and data $\omega$-languages. It modifies regular nominal nondeterministic automata
(RNNAs)~\cite{SchroderEA17}, originally introduced for bar languages and data languages of finite words, to accept infinite words. Roughly, Büchi RNNAs are to RNNAs what classical Büchi automata~\cite{GraedelThomasWilke02} are to nondeterministic finite automata. We first recall:
\begin{defn}[RNNA~\cite{SchroderEA17}]
\begin{enumerate}
\item  An \emph{RNNA} $A=(Q,R,q_0,F)$ is given by an orbit-finite nominal set $Q$
  of \emph{states}, an equivariant relation
  $R\seq Q\times \barA \times Q$ specifying \emph{transitions}, an
  \emph{initial state}~ $q_0\in Q$ and an equivariant set $F\seq Q$ of
  \emph{final states}. We write $q\xto{\sigma}q'$ if
  $(q,\sigma,q')\in R$. The transitions are subject to two conditions:
  \begin{enumerate}[(1)]
  \item \emph{$\alpha$-invariance}: if $q\xto{\scriptnew a}q'$ and
    $\braket{a} q'=\braket{b} q''$, then $q\xto{\scriptnew b}q''$.
    
  \item \emph{Finite branching up to $\alpha$-invariance:} For each $q\in Q$ the sets $\{ (a,q') \compr q\xto{a}q'\}$ and
    $\{ \braket{a} q' \compr q\xto{\scriptnew a} q' \}$ are finite (equivalently, uniformly finitely supported).
  \end{enumerate}
  The \emph{degree} of $A$, denoted $\degree(A)$, is the maximum of
  all $|\supp(q)|$ where $q\in Q$.
\item  Given a finite bar string
    $w=\sigma_1\sigma_2\cdots \sigma_n\in \barA^*$ and a state $q\in Q$, a \emph{run} for
    $w$ from $q$ is a sequence of transitions 
    \[
      q\xto{\sigma_1}q_1\xto{\sigma_2}\cdots \xto{\sigma_n}q_n.
    \]
    The run is \emph{accepting} if $q_n$ is final. The state $q$ \emph{accepts} $w$ if there exists an accepting run for $w$ from $q$, and the automaton $A$ \emph{accepts} $w$ if its initial state $q_0$ accepts $w$.
    We define
    \begin{align*}
      L_0(A)
      & = \{w\in \barA^*\compr \text{$A$ accepts $w$}\},
      & \text{the \emph{literal language accepted by $A$},}
      \\
      L_{\alpha}(A)
      & = \{ [w]_\alpha \compr w\in \barA^*, \, \text{$A$ accepts $w$} \},
      & \text{the \emph{bar language accepted by $A$},}
      \\
      D(A)
      & = D(L_{\alpha}(A)),& \text{the \emph{data language accepted by $A$}}. 
    \end{align*}
\end{enumerate}
\end{defn}
\begin{rem}\label{rem:coalgebra}
  Equivalently, an RNNA is an orbit-finite coalgebra
  \[
    Q\longrightarrow 2\times \Pow_\ufs(\names\times Q)\times
    \Pow_\ufs([\names]Q)
  \]
  for the functor
  $FX = 2\times \Pow_\ufs(\names\times X)\times
  \Pow_\ufs([\names]X)$ on $\Nom$, 
  with an initial state $q_0\in Q$.
\end{rem}

\begin{defn}[Büchi RNNA]
 A \emph{Büchi RNNA} is an RNNA $A=(Q,R,q_0,F)$ used to recognize infinite bar strings as follows. Given
    $w=\sigma_1\sigma_2\sigma_3\cdots \in \barA^\omega$ and a state $q\in Q$, a \emph{run}
    for $w$ from $q$ is an infinite sequence of transitions
    \[
      q_0\xto{\sigma_1}q_1\xto{\sigma_2} q_2\xto{\sigma_3} \cdots.
    \]
    The run is \emph{accepting} if $q_n$ is final for infinitely many
    $n\in \Nat$. The state $q$ \emph{accepts} $w$ if there exists an accepting run for $w$ from $q$, and the automaton $A$ \emph{accepts} $w$ if its initial state $q_0$ accepts $w$. We define
    \begin{align*}
      L_{0,\omega}(A)
      & = \{w\in \barA^\omega\compr \text{$A$ accepts $w$}\}, &
      \text{the \emph{literal $\omega$-language accepted by $A$},}
      \\
      L_{\alpha,\omega}(A)
      & = \{ [w]_\alpha \compr w\in \barA^\omega, \, \text{$A$ accepts $w$}\},
      & \text{the \emph{bar $\omega$-language accepted by $A$},}
      \\
      D_\omega(A)
      & = D(L_{\alpha,\omega}(A)), 
      & \text{the \emph{data $\omega$-language accepted by $A$}}.
    \end{align*}
\end{defn}

\begin{example}\label{ex:rnna}
Consider the Büchi RNNA $A$ with states $\{q_0\} \cup \names\times \{0,1\}$ and transitions as displayed below, where $a,b$ range over distinct names in 
$\names$. Note that the second node represents the orbit $\names\times \{0\}$ and the third one the orbit $\names\times \{1\}$.
\begin{center}
\begin{tikzpicture}[align=center,node distance=2cm, state/.style={circle, draw, minimum size=1cm, inner sep=1pt}] 
\node[state, initial] (q0) {$q_0$};
\node[state, right of=q0, accepting] (a0) {$(a,0)$};
\node[state, right of=a0] (a1) {$(a,1)$};
\draw
(q0) edge[loop above] node{$\newletter a$} (q0)
(q0) edge[above] node{$\newletter a$} (a0)
(a0) edge[loop above] node{$a$} (a0)
(a0) edge[above, bend left] node{$\newletter b$} (a1)
(a1) edge[loop above] node{$\newletter b$} (a1)
(a1) edge[below, bend left] node{$a$} (a0);
\end{tikzpicture}
\end{center}
 The data $\omega$-language $D_\omega(A)$ consists of all infinite words $w\in \names^\omega$ where some name $a\in \names$ occurs infinitely often.
\end{example}

\begin{rem}
  For automata consuming infinite words over infinite alphabets, a slightly subtle point is whether to admit arbitrary infinite words
  as inputs or restrict to finitely supported ones. For the bar language semantics of Büchi RNNAs this choice is inconsequential: we shall see in
  \autoref{lem:restrict_num_names} below that modulo
  $\alpha$-equivalence all infinite bar strings accepted by Büchi RNNA
  are finitely supported. However, for the data language semantics admitting strings with infinite support is important for the decidability of language inclusion, see \autoref{rem:data-language-finite-support}.
\end{rem}
%
\begin{notheorembrackets}
  \begin{lemma}[{\cite[Lem.~5.4]{SchroderEA17}}]\label{lem:rnna-props}
    Let $A=(Q,R,q_0,F)$ be an RNNA, $q,q'\in Q$ and $a\in \names$.
    \begin{enumerate}
    \item\label{lem:rnna-props:1} If $q\xto{a} q'$ then $\supp(q')\cup \{a\}\seq \supp(q)$.
    \item\label{lem:rnna-props:2} If $q\xto{\scriptnew a} q'$ then $\supp(q')\seq \supp(q)\cup \{a\}$. 
    \item\label{lem:rnna-props:3} For every bar string $w\in \barA^*\cup \barA^\omega$ with a run from $q$ one has $\FN(w)\seq \supp(q)$.
    \end{enumerate}
  \end{lemma}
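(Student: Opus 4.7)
The plan is to derive (1) and (2) from the support calculus implicit in the finite-branching condition, and then obtain (3) by a backward induction along the run.

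For (1), I would consider the set $T_q := \{(b, q'') \in \barA \times Q \compr q \xto{b} q''\}$, which is finite by the first finite-branching clause, hence uniformly finitely supported. Since the transition relation $R$ is equivariant, any permutation $\pi \in \Perm(\names)$ with $\pi \cdot q = q$ preserves $T_q$ setwise; in particular, every $\pi$ fixing $\supp(q)$ pointwise fixes $T_q$, so $\supp(T_q) \subseteq \supp(q)$. By uniform finite support (\cite[Theorem~2.29]{gabbay2011}), we have $\supp(T_q) = \bigcup_{(b,q'')\in T_q}(\{b\}\cup\supp(q''))$, and specialising to the given $(a, q') \in T_q$ yields $\{a\} \cup \supp(q') \subseteq \supp(q)$, which is~(1).

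Part (2) is entirely analogous: I replace $T_q$ by $B_q := \{\braket{a'} q'' \compr q \xto{\scriptnew{a'}} q''\}$, which is finite by the second branching clause, and the same equivariance argument gives $\supp(B_q) \subseteq \supp(q)$. For a transition $q \xto{\scriptnew{a}} q'$ we then have $\supp(q')\setminus\{a\} = \supp(\braket{a} q') \subseteq \supp(q)$, which rearranges to~(2).

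For (3), fix $a \in \FN(w)$, witnessed by the first position $i$ with $\sigma_i = a$. By definition of $\FN$, no $\sigma_k$ with $k < i$ equals $\newletter a$. Applying (1) to $q_{i-1} \xto{a} q_i$ gives $a \in \supp(q_{i-1})$. A backward induction on $k < i$ then propagates the invariant $a \in \supp(q_k)$: at a free-letter step, (1) yields $\supp(q_k) \subseteq \supp(q_{k-1})$; at a bound step $\newletter b$ with $b \neq a$, (2) yields $\supp(q_k) \subseteq \supp(q_{k-1}) \cup \{b\}$, so $a \in \supp(q_k)$ still forces $a \in \supp(q_{k-1})$. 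Reaching $k = 0$ gives $a \in \supp(q_0) = \supp(q)$. No step poses a substantial obstacle; the only point requiring care is the observation in~(3) that no binding step before position $i$ binds $a$ itself, which is precisely what it means for $a$ to be free in $w$.
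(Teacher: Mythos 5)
Your proof is correct, and it is essentially the standard argument: the paper states this lemma without proof, citing \cite{SchroderEA17}, and the justification there amounts to exactly your reasoning---equivariance of the assignments $q\mapsto\{(a,q')\compr q\xto{a}q'\}$ and $q\mapsto\{\braket{a}q'\compr q\xto{\scriptnew a}q'\}$ (i.e.\ of the coalgebra structure in \autoref{rem:coalgebra}) combined with uniform finite support of these finite sets, followed by a backward induction along the run for part~(3). One small correction: your $T_q$ should range over labels $b\in\names$ only, since the first branching clause covers just plain-letter transitions and the set of \emph{pairs} $(\newletter b,q'')$ arising from bar transitions is in general infinite by $\alpha$-invariance (only the set of abstractions $\braket{b}q''$ is required to be finite); this is harmless, as your argument only ever uses the plain-letter part of $T_q$.
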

\end{notheorembrackets}

\noindent The next proposition will turn out to be crucial in the development that follows. It asserts that, up to $\alpha$-equivalence, one can always restrict the inputs of a Büchi RNNA to bar strings with a finite number of names, bounded by the degree of the automaton.
\begin{proposition}\label{lem:restrict_num_names}
  Let $A$ be a Büchi RNNA accepting the infinite bar string $w\in
  \barA^\omega$. Then it also accepts some $w'\in \barA^\omega_\fs$ such
  that
  \[
    w'=_\alpha w
    \qquad\text{and}\qquad
    |\supp (q_0) \cup \Names{w'}|\leq \degree(A)+1.
  \]
\end{proposition}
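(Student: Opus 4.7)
The plan is, given an accepting run $\rho\colon q_0 \xto{\sigma_1} q_1 \xto{\sigma_2} \cdots$ for $w$, to reshape it via $\alpha$-renaming of bound letters into an accepting run whose labels all come from a fixed finite pool of size $\degree(A)+1$. Fix $P\subseteq\names$ with $|P|=\degree(A)+1$ and $\supp(q_0)\subseteq P$. I would inductively construct states $p_n$ and finite permutations $\pi_n\in\Perm(\names)$ with $\pi_0=\id$, $p_0=q_0$, $p_n=\pi_n\cdot q_n$, under the invariant $\supp(p_n)\subseteq P$. By equivariance of the transition relation, $q_n\xto{\sigma_{n+1}} q_{n+1}$ yields $p_n\xto{\pi_n(\sigma_{n+1})}\pi_n q_{n+1}$. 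If the induced label already lies in $P$ --- automatic for free letters by \autoref{lem:rnna-props} --- keep the transition and set $\pi_{n+1}:=\pi_n$. Otherwise $\sigma_{n+1}=\newletter a$ with $a':=\pi_n(a)\notin P$; since $|\supp(\pi_n q_{n+1})|\leq \degree(A)<|P|$, pick $b\in P\setminus\supp(\pi_n q_{n+1})$. By $\alpha$-invariance there is a transition $p_n\xto{\newletter b}(a'\,b)\cdot\pi_n q_{n+1}$; set $\pi_{n+1}:=(a'\,b)\pi_n$ and $\sigma_{n+1}':=\newletter b$.

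The invariant $\supp(p_{n+1})\subseteq P$ is maintained because \autoref{lem:rnna-props} gives $\supp(\pi_n q_{n+1})\subseteq\supp(p_n)\cup\{a'\}\subseteq P\cup\{a'\}$, while $(a'\,b)$ sends $a'$ to $b\in P$ and fixes every element of $P\setminus\{a',b\}$. Equivariance of $F$ ensures $p_n\in F$ iff $q_n\in F$, so the new run $\rho'$ is accepting; by construction all letters of $w':=\sigma_1'\sigma_2'\cdots$ come from $P$, so $w'\in\barA^\omega_\fs$ and $|\supp(q_0)\cup\Names{w'}|\leq|P|=\degree(A)+1$.

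It then remains to show $w'=_\alpha w$, which by \autoref{def:alpha-fin} reduces to $u_n'=_\alpha u_n$ for each prefix length $n$, where $u_n,u_n'$ are the length-$n$ prefixes of $w,w'$. Here I would argue by a second induction that unrolls the case~2b renamings one at a time: the $k$-th such step rewrites a finite bar string of the form $v\,\newletter{a'}\,s$ into $v\,\newletter b\,(a'\,b)\cdot s$, where $v$ is the already-fixed prefix and $s$ is the remaining suffix carrying the renamings already accumulated at positions ${>}k$. By \autoref{lem:rnna-props} the free names of the original suffix $\sigma_{k+1}\cdots\sigma_n$ are contained in $\supp(q_k)$; hence after applying the cumulative permutation $\pi_{k-1}$ the free names of $s$ lie in $\supp(\pi_{k-1}q_k)$, which does not contain the chosen $b$. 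This is exactly the freshness condition $\braket{a'}s=\braket{b}(a'\,b)\cdot s$ that licenses a single $\alpha$-renaming step (\autoref{rem:alpheq-technical}), so consecutive strings in the chain are $=_\alpha$-related and by transitivity $u_n=_\alpha u_n'$.

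The main obstacle, as anticipated, is this last $\alpha$-equivalence bookkeeping: one must track how the cumulative permutations $\pi_{k-1}$ act on the suffix of each prefix and invoke \autoref{lem:rnna-props} to bound the free names of that suffix by $\supp(q_k)$, so that the name $b$ picked in step $k$ is certifiably fresh for the material still to come. Everything else is a routine consequence of equivariance of the transition relation and of $\alpha$-invariance.
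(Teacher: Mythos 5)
Your run construction itself is sound and genuinely different from the paper's: instead of proving a finite-prefix statement by induction and then gluing prefixes with K\H{o}nig's lemma, you build the renamed accepting run directly along the given one, carrying cumulative permutations $\pi_n$; the existence of the transitions (equivariance plus $\alpha$-invariance with $b$ fresh for $\pi_n\cdot q_{n+1}$), the invariant $\supp(p_n)\subseteq P$, the acceptance via equivariance of $F$, and the bound $|\supp(q_0)\cup\Names{w'}|\leq\degree(A)+1$ all go through as you say.

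The gap is in the $\alpha$-equivalence step. You justify the single rewriting $v\,\newletter{a'}\,s \leadsto v\,\newletter{b}\,(a'\,b)\cdot s$ by the abstraction equality $\braket{a'}s=\braket{b}(a'\,b)\cdot s$, claiming it only needs $b\notin\FN(s)$. But by \autoref{rem:alpheq-technical} that equality (for $a'\neq b$) requires $b\fresh s$ in the nominal sense, and the least support of a bar string is its \emph{entire} set of names, so the condition is $b\notin\Names{s}$, not $b\notin\FN(s)$. In your construction this stronger condition will typically fail: $b$ is drawn from the fixed pool $P$ of size $\degree(A)+1$, and the (partially renamed) suffix $s$ reuses exactly these pool names for later binders --- and the original $w$ may itself contain $b$ bound further on --- so $b\in\Names{s}$ is entirely possible, and then $\braket{a'}s\neq\braket{b}(a'\,b)s$. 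The conclusion $v\,\newletter{a'}\,s=_\alpha v\,\newletter{b}\,(a'\,b)\cdot s$ is still true under the weaker hypothesis $b\notin\FN(s)$, but it is not a single generator step: it needs an auxiliary lemma (prove it, e.g., by first $\alpha$-renaming the inner $\newletter b$-blocks of $s$ to genuinely fresh names, applying the generator, and renaming back), which your proposal neither states nor proves. The paper's finite-prefix induction is arranged precisely to dodge this: there the generator step is always applied with a name \emph{outside} the pool, hence totally fresh for the already-normalized suffix (whose names all lie inside the pool), so the strong freshness $b\notin\Names{s}$ holds by construction. Either add the capture-tolerant renaming lemma, or reorganize the bookkeeping along the paper's lines.
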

\begin{proof}[Proof sketch]
  Put $m:=\degree(A)$, and choose $m+1$ pairwise distinct names $a_1,\ldots, a_{m+1}$ such that $\supp(q_0)\seq \{a_1,\ldots, a_{m+1}\}$. 

\begin{enumerate} \item One first shows that for every finite bar string $\sigma_1\sigma_2\cdots \sigma_n\in \barA^*$ and every run
\[ q_0\xto{\sigma_1} q_1\xto{\sigma_2} q_2 \xto{\sigma_3} \cdots \xto{\sigma_n} q_n\]
in $A$ there exists $\sigma_1'\sigma_2'\cdots \sigma_n'\in \barA^*$ and a run
  \begin{equation}\label{eq:runparsketch}
    q_0 \xto{\sigma_1'} q_1' \xto{\sigma_2'} q_2' \xto{\sigma_3'} \cdots
    \xto{\sigma_n'} q_n'
  \end{equation}
such that (1) $q_i$ and $q_i'$ lie in the same orbit for $i=1,\ldots, n$, (2) $\sigma_1'\sigma_2'\cdots \sigma_n' =_\alpha \sigma_1\sigma_2\cdots \sigma_n$, and (3) $\Names{\sigma_1'\sigma_2'\cdots \sigma_n'}\seq \{a_1,\ldots, a_{m+1}\}$. The proof is by induction on $n$ and rests on the observation that for every state $q$ at least one of the names $a_1,\ldots, a_{m+1}$ is fresh for $q$.

\item Now suppose that the infinite bar string $w=\sigma_1\sigma_2\sigma_3\cdots\in \barA^\omega$ is accepted by $A$ via the accepting run
  \[
    q_0 \xto{\sigma_1} q_1 \xto{\sigma_2} q_2 \xto{\sigma_3} \cdots
  \]
 Consider the set of all
  partial runs \eqref{eq:runparsketch} satisfying the above conditions (1)--(3). This set organizes into a tree with the edge relation
  given by extension of runs; its nodes of depth $n$ are exactly
  the runs \eqref{eq:runparsketch}. By part 1 at least one
  such run exists for each $n\in \Nat$, i.e.~the tree is infinite. It
  is finitely branching because
  $\supp(q_i')\seq \{a_1,\ldots, a_{m+1}\}$ for all $i$ by \autoref{lem:rnna-props}, i.e.~there
  are only finitely many runs \eqref{eq:runparsketch} for each $n$. Hence, by
  K\H{o}nig's
  lemma the tree contains an infinite path. This yields an infinite run
  \[
    q_0 \xto{\sigma_1'} q_1' \xto{\sigma_2'} q_2' \xto{\sigma_3'}
    \cdots
  \]
  such that $q_i'$ and $q_i$ lie in the same orbit for each $i$ and
  $\sigma_1'\cdots \sigma_n'=_\alpha \sigma_1\cdots \sigma_n$ for each
  $n$. It follows that the Büchi RNNA $A$ accepts the infinite bar string
  \[
    w'= \sigma_1'\sigma_2'\sigma_3'\cdots.
  \]
 Moreover $w'=_\alpha w$ and
  $\supp(q_0)\cup \Names{w'}\seq \{a_1,\ldots, a_{m+1}\}$ as required.\qedhere
\end{enumerate}
\end{proof}

\section{Name-Dropping Modification}\label{sec:name-dropping}
Although the transitions of a Büchi RNNA are $\alpha$-invariant, its
literal $\omega$-language generally fails to be closed under
$\alpha$-equivalence. For instance, consider the Büchi RNNA with
states $\{q_0\}\cup \names \cup \names^2$ and transitions displayed
below, where $a,b$ range over distinct names in $\names$:
\begin{center}
  \tikzset{every state/.style={minimum size=20pt}}
  \begin{tikzpicture}[align=center,node distance=2cm, state/.style={circle, draw, minimum size=1cm, inner sep=1pt}]
    \node[state,initial] (q0) {$q_0$};
    \node[state, right of=q0] (a) {$a$};
    \node[state, right of=a, accepting] (ab) {$(a,b)$};
    \draw
    (q0) edge[above] node{$\newletter a$} (a)
    (a) edge[above] node{$\newletter b$} (ab)
    (ab) edge[loop right] node{$\newletter b$} (ab);
  \end{tikzpicture}
\end{center}
The automaton accepts the infinite bar string
$\newletter a\newletter b\newletter b\newletter b\newletter
b\newletter b\cdots $ for $a\neq b$ but does not accept the
$\alpha$-equivalent
$\newletter a \newletter a \newletter a \newletter a \newletter a
\newletter a\cdots$: the required transitions $a\xto{\scriptnew a} (a,b)$ and
$(a,b)\xto{\scriptnew a} (a,b)$ do not exist since $a$ is in the least support of the state $(a,b)$, which prevents $\alpha$-renaming of the given transitions $a\xto{\scriptnew b} (a,b)$ and $(a,b)\xto{\scriptnew b}(a,b)$.
A possible fix is to add a new
state ``$(\bot, b)$'' that essentially behaves like $(a,b)$ but has the name
$a$ dropped from its least support.

This idea can be generalized: We will show below how to transform any
Büchi RNNA $A$ into a Büchi RNNA $\tl{A}$ such that
$L_{0,\omega}(\tl{A})$ is the closure of $L_{0,\omega}(A)$ under
$\alpha$-equivalence, using the \emph{name-dropping modification}
(\autoref{def:ndc}). The latter simplifies a construction of the same
name previously given for RNNA over finite words~\cite{SchroderEA17},
and it is the key to the decidability of bar language inclusion proved
later on. First, some technical preparations:
\begin{rem}[Strong nominal sets]\label{rem:strong-nominal-sets}
  A nominal set $Y$ is called \emph{strong}~\cite{tzevelekos08} if for
  every $\pi\in \Perm(\names)$ and $y\in Y$ one has $\pi\o y = y$ if
  and only if $\pi(a)=a$ for all $a\in \supp(y)$. (Note that ``if''
  holds in all nominal sets.) As shown
  in~\cite[Lem.~2.4.2]{petrisan11} or~\cite[Cor.~B.27(2)]{mu19} strong
  nominal sets are up to isomorphism precisely the nominal sets of the
  form $Y=\sum_{i\in I} \names^{\# X_i}$ where $X_i$ is a finite set
  and $\names^{\# X}$ denotes the nominal set of all injective maps
  from~$X$ to~$\names$, with the group action defined pointwise;
  $\names^{\# X}$ may be seen as the set of possible configurations of
  an $X$-indexed set of registers containing pairwise distinct
  names. The set of orbits of~$Y$ is in bijection with~$I$; in
  particular,~$Y$ is orbit-finite iff~$I$ is finite. Strong nominal
  sets can be regarded as analogues of free algebras in categories of
  algebraic structures:
  \begin{enumerate}
  \item For every nominal set $Z$ there exists a strong nominal set $Y$ and a surjective equivariant map $e\colon Y\epito Z$ that is
    \emph{$\supp$-nondecreasing}, i.e.~$\supp(e(y))=\supp(y)$ for all
    $y\in Y$~\cite[Cor.~B.27.1]{mu19}. (Recall that
    $\supp(e(y))\seq \supp(y)$ always holds for equivariant maps.)
    Specifically, one can choose $Y=\sum_{i\in I} \names^{\# n_i}$
    where $I$ is the set of orbits of $Z$ and $n_i$ is the size of the
    support of any element of the orbit $i$. In particular, if~$Z$ is
    orbit-finite then~$Y$ is orbit-finite with the same number of
    orbits.
  
  \item Strong nominal sets are \emph{projective} w.r.t.~
    $\supp$-nondecreasing quotients~\cite[Lem.~B.28]{mu19}: given a
    strong nominal set $Y$, an equivariant map $h\colon Y\to Z$ and a
    $\supp$-nondecreasing quotient $e\colon X\epito Z$,
    there exists an equivariant map $g\colon Y\to X$ such that
    $h=e\o g$.
  \end{enumerate}
\end{rem}
\begin{proposition}\label{prop:strong}
  For every Büchi RNNA there exists a Büchi RNNA accepting the same
  literal~$\omega$-lan\-guage whose states form a strong
  nominal set.
\end{proposition}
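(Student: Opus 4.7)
The plan is to construct $A'$ by replacing the state set $Q$ of the given Büchi RNNA $A=(Q,R,q_0,F)$ with a strong nominal set that projects onto $Q$, and pulling back the transition structure. Using \autoref{rem:strong-nominal-sets}(1), I would choose a strong orbit-finite nominal set $Y$ together with a $\supp$-nondecreasing surjective equivariant map $e\colon Y \epito Q$; fix some $q_0' \in e^{-1}(q_0)$ and define $F' = e^{-1}[F]$ together with the transition relation
\[
  R' = \{(y,\sigma,y') \in Y \times \barA \times Y : (e(y),\sigma,e(y')) \in R\}.
\]
The candidate automaton is $A' = (Y, R', q_0', F')$.

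Next, I would verify that $A'$ is indeed a Büchi RNNA. Equivariance of $R'$ and $F'$ is inherited from equivariance of $e$, $R$, and $F$. For $\alpha$-invariance: if $y \xto{\scriptnew a} y'$ in $R'$ and $\braket{a} y' = \braket{b} y''$ in $Y$, then equivariance of $e$ yields $\braket{a} e(y') = \braket{b} e(y'')$, so $\alpha$-invariance of $A$ gives $e(y) \xto{\scriptnew b} e(y'')$, i.e.\ $y \xto{\scriptnew b} y''$ in $R'$. For finite branching, the key observation is that every fiber $e^{-1}(q)$ is finite: since $e$ is $\supp$-nondecreasing, its elements all have support $\supp(q)$, and in an orbit-finite nominal set only finitely many elements share a given finite support. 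Combined with finite branching of $A$, this gives finiteness of $\{(a,y') : y \xto{a} y'\}$ for every $y \in Y$. For bound transitions, \autoref{lem:rnna-props} applied in $A$ together with the $\supp$-nondecreasing property yields $\supp(\braket{a}y') \seq \supp(y)$ whenever $y \xto{\scriptnew a} y'$ in $R'$; since $[\names] Y$ is orbit-finite, only finitely many of its elements share this support, so the set $\{\braket{a}y' : y \xto{\scriptnew a} y'\}$ is finite.

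Finally, I would prove $L_{0,\omega}(A') = L_{0,\omega}(A)$. For the inclusion from left to right, every accepting run in $A'$ projects through $e$ to an accepting run in $A$, since transitions of $A'$ are by construction lifts of transitions of $A$, and visits to $F'$ correspond to visits to $F$. For the converse, given an accepting run $q_0 \xto{\sigma_1} q_1 \xto{\sigma_2} q_2 \xto{\sigma_3} \cdots$ in $A$, I would inductively lift it to $A'$ by picking at step $n$ any element $y_n \in e^{-1}(q_n)$; by definition of $R'$ these choices always yield transitions in $A'$, and the lifted run is accepting because $F' = e^{-1}[F]$. The most delicate step in the whole argument is the verification of finite branching for bound transitions, which crucially exploits both the $\supp$-nondecreasing property of $e$ and orbit-finiteness of $[\names] Y$.
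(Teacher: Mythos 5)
Your proof is correct, but it takes a different route from the paper's. The paper works coalgebraically: it views the Büchi RNNA as a coalgebra $Q\xto{\gamma} 2\times \Pow_\ufs(\names\times Q)\times \Pow_\ufs([\names]Q)$, takes the same $\supp$-nondecreasing strong cover $e\colon P\epito Q$, and then invokes \emph{projectivity} of strong nominal sets (\autoref{rem:strong-nominal-sets}, part 2) together with the fact that the type functor preserves $\supp$-nondecreasing quotients to lift $\gamma$ to a coalgebra structure $\beta$ on $P$ with $e$ a coalgebra homomorphism; the RNNA axioms for $P$ are then automatic from the coalgebraic format, and the run correspondence follows from the homomorphism property (every transition of $e(p)$ lifts to a transition of $p$ with matching image). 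You instead take the \emph{largest} lifting, the full preimage relation $R'$, which makes the run correspondence trivial (project down, or lift by choosing arbitrary preimages), but then you must verify the RNNA axioms by hand. Your verification is sound, and you correctly isolate the only delicate point: finite branching. It holds because $e$ is $\supp$-nondecreasing, so each fiber $e^{-1}(q)$ consists of elements with support exactly $\supp(q)$ and is therefore finite by orbit-finiteness of $Y$, and because \autoref{lem:rnna-props} bounds $\supp(\braket{a}y')$ by $\supp(y)$, so that orbit-finiteness of $[\names]Y$ caps the set of outgoing abstraction classes. (With an arbitrary, non-$\supp$-nondecreasing cover the pullback automaton could fail finite branching, so this is exactly where your argument earns its keep.) In exchange for the hands-on verification, your proof avoids projectivity and the quotient-preservation property of the functor altogether, using only part 1 of \autoref{rem:strong-nominal-sets}; the paper's approach buys the automaton structure abstractly but then has to unfold the homomorphism square into the transition-level statements used for the language equality.
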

\begin{proof}[Proof sketch]
 Given a Büchi RNNA viewed as a coalgebra
    \[
      Q\xto{\gamma} 2\times \Pow_\ufs(\names\times Q)\times
      \Pow_\ufs([\names]Q),
    \]
    express the nominal set $Q$ of states as a $\supp$-nondecreasing
    quotient $e\colon P\epito Q$ of an orbit-finite strong nominal set
    $P$ using \autoref{rem:strong-nominal-sets}. Note that the type
    functor
    $F(\dash) = 2\times \Pow_\ufs(\names\times \dash)\times
    \Pow_\ufs([\names](\dash))$ preserves $\supp$-nondecreasing quotients
    because all the functors
    $\Pow_\ufs(\dash), \names\times \dash$ and $[\names](\dash)$
    do. Therefore, the right vertical arrow in the square below is
    $\supp$-nondecreasing, so projectivity of $P$ yields an equivariant
    map $\beta$ making it commute:
\[
      \begin{tikzcd}
        P \ar[d,"e"', two heads] \ar[dashed]{r}{\beta}
        &
        2\times \Pow_\ufs(\names\times P)\times \Pow_\ufs([\names]P)
        \ar[d,"2\times \Pow_\ufs(\names\times e)\times
        \Pow_\ufs({[\names]}e)", two heads]
        \\
        Q \ar[r,"\gamma"]
        &
        2\times \Pow_\ufs(\names\times Q)\times \Pow_\ufs([\names]Q) 
      \end{tikzcd}
\]
    Thus $e$ is a coalgebra homomorphism from $(P,\beta)$ to
    $(Q,\gamma)$. It is not difficult to verify that for every $p\in P$ the states $p$ and $e(p)$ accept the same literal $\omega$-language. In particular, equipping $P$ with an initial state $p_0\in P$  such that $e(p_0)=q_0$ we see that the Büchi RNNAs $P$ and $Q$ accept the same literal $\omega$-language.
\end{proof}

\noindent A Büchi RNNA with states $Q=\sum_{i\in I} \names^{\# X_i}$ can be
interpreted as an automaton with a finite set $I$ of control states
each of which comes equipped with an $X_i$-indexed set of
registers. The following construction shows how to modify such an
automaton, preserving the accepted bar $\omega$-language, to become lossy in the
sense that after each transition some of the register contents may be
nondeterministically erased. Technically, this involves replacing~$Q$
with $\tl{Q}=\sum_{i\in I} \names^{\$ X_i}$ where $\names^{\$ X_i}$ is
a the nominal set of \emph{partial} injective maps from $X_i$ to
$\names$, again with the pointwise group action. (Note that while
$\names^{\# X_i}$ has only one orbit, $\names^{\$ X_i}$ has one orbit
for every subset of~$X_i$.) We represent elements of $\tl{Q}$ as pairs
$(i,r)$ where $i\in I$ and $r\colon X_i\to \names$ is a partial
injective map. The least support of $(i,r)$ is given by
\[
  \supp(i,r) = \supp(r) = \{ r(x) \compr x\in  \dom(r)\},
\]
where $\dom(r)$ is the \emph{domain} of $r$, i.e. the set of all
$x\in X_i$ for which $r(x)$ is defined. We say that a partial map
$\ol{r}\colon X_i\to \names$ \emph{extends} $r$ if
$\dom(r)\seq \dom(\ol{r})$ and $r(x)=\ol{r}(x)$ for all
$x\in \dom(r)$.

\begin{defn}[Name-dropping modification]\label{def:ndc}
  Let $A$ be a Büchi RNNA whose states form a strong nominal set
  $Q=\sum_{i\in I} \names^{\# X_i}$ (with~$I$ and all~$X_i$
  finite). The \emph{name-dropping modification} of $A$ is the Büchi RNNA
  $\widetilde{A}$ defined as follows:
  \begin{enumerate}
  \item The states are given by $\tl{Q} = \sum_{i\in I}\names^{\$ X_i}$.
  \item The initial state of $\tl{A}$ is equal to the initial state of $A$.
  \item $(i,r)$ is final in $\tl{A}$ iff some (equivalently,
    every\footnote{This is due to the equivariance of the subset $F
      \subseteq Q$ of final states.})
    state $(i,\dash)$ in $A$ is final.

  \item $(i,r)\xto{a} (j,s)$  in $\tl{A}$ iff $\supp(s)\cup \{a\}
    \seq \supp(r)$ and $(i,\ol{r})\xto{a} (j,\ol{s})$ in $A$ for some
    $\ol{r},\ol{s}$ extending $r, s$.
    
  \item $(i,r)\xto{\scriptnew a} (j,s)$ in $\tl{A}$ iff $\supp(s)\seq
    \supp(r)\cup \{a\}$ and $(i,\ol{r})\xto{\scriptnew b} (j,\ol{s})$ in $A$ for
    some $b\fresh s$ and some $\ol{r},\ol{s}$ extending $r,(a\,b)s$.
  \end{enumerate}
\end{defn}
\begin{theorem}\label{thm:ncd_buechi_rnna}
  For every Büchi RNNA $A$, the literal $\omega$-language
  of $\tl A$ is the closure of that of~$A$ under $\alpha$-equivalence.
\end{theorem}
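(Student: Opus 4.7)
I would prove the theorem by establishing three subsidiary claims whose conjunction yields $L_{0,\omega}(\tl A) = [L_{0,\omega}(A)]_\alpha$, where $[\cdot]_\alpha$ denotes the $\alpha$-closure: (i) $L_{0,\omega}(A) \subseteq L_{0,\omega}(\tl A)$, (ii) $L_{0,\omega}(\tl A) \subseteq [L_{0,\omega}(A)]_\alpha$, and (iii) $L_{0,\omega}(\tl A)$ is closed under $=_\alpha$. Claims (i) and (iii) together give the inclusion $[L_{0,\omega}(A)]_\alpha \subseteq L_{0,\omega}(\tl A)$, and (ii) gives the converse.

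Claim (i) is immediate by viewing every total register $r$ in $A$ as a state of $\tl A$: free transitions satisfy the clause of \autoref{def:ndc} with $\ol r = r$ and $\ol s = s$ (using \autoref{lem:rnna-props} for the support condition), and for a bound transition $(i,r) \xto{\scriptnew a} (j,s)$ of $A$, picking $b \neq a$ with $b \fresh s$ and applying $\alpha$-invariance of $A$ yields $(i,r) \xto{\scriptnew b} (j, (a\,b)s)$, the required witness with $\ol r = r$ and $\ol s = (a\,b)s$. For claim (ii), given an accepting run $(i_0, r_0) \xto{\sigma_1} (i_1, r_1) \xto{\sigma_2} \cdots$ of $\tl A$, I construct a run of $A$ step by step, maintaining a total register $t_k$ extending $r_k$. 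At step $k+1$, \autoref{def:ndc} supplies a witness transition $(i_k, \ol r_k) \xto{\tau_{k+1}} (i_{k+1}, \ol s_{k+1})$ in $A$; since $\ol r_k$ and $t_k$ are two total extensions of $r_k$, they differ by a permutation $\pi$ fixing $\supp(r_k)$, and by equivariance this permutation transports the witness to a transition from $(i_k, t_k)$ labelled $\pi^{-1} \cdot \tau_{k+1}$. Choosing each witness fresh name (at bound positions) distinct from all names already used in the construction ensures, by \autoref{def:alpha-fin}, that every finite prefix of the resulting $A$-bar-string is $=_\alpha$ to the corresponding prefix of the $\tl A$-bar-string; hence the two infinite bar strings are $=_\alpha$. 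Acceptance transfers because the control-state sequence $(i_k)$ is preserved along the $A$-run.

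For claim (iii), given $w \in L_{0,\omega}(\tl A)$ accepted by some run $\rho$ and $w' =_\alpha w$, each finite prefix $w'_n$ is obtainable from $w_n$ by finitely many single-position bar renamings. I show by induction on the number of such renamings that $\tl A$ admits a finite run on $w'_n$ whose final state has the same control state (hence the same final status) as $\rho$ at time $n$: a renaming whose target is fresh for the current state is handled by $\alpha$-invariance of $\tl A$, and a renaming whose target lies in the current state's support is accommodated by first using a name-dropping clause of \autoref{def:ndc} to remove the offending name from the support, then applying $\alpha$-invariance. The collection of such finite runs for all $n$ forms a tree that is finitely branching because the supports of reachable states are uniformly bounded (by \autoref{lem:rnna-props} combined with \autoref{lem:restrict_num_names}), so K\H{o}nig's lemma yields an infinite branch corresponding to an accepting run on $w'$. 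The main obstacle in the proof is the delicate name bookkeeping in (ii) and (iii): in (ii) the independently chosen witnesses at consecutive steps must be reconciled by permutations that preserve the accumulated $\alpha$-equivalence with the $\tl A$-bar-string, and in (iii) name-dropping must be invoked at exactly those positions where $\alpha$-invariance alone would fail due to capture in intermediate states.
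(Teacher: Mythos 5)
Your decomposition into (i)--(iii) is valid, and two of the three claims are in good shape. Claim (i) is correct and is in fact a simplification of the paper's treatment of that direction: the paper only establishes $L_{\alpha,\omega}(A)=L_{\alpha,\omega}(\tl A)$ and obtains the inclusion via a prefix-tree/K\H{o}nig argument, whereas the literal inclusion $L_{0,\omega}(A)\seq L_{0,\omega}(\tl A)$ holds verbatim, exactly as you argue (using $\alpha$-invariance of $A$ to supply the witness for bound transitions). Claim (iii) follows the paper's route (\autoref{lem:ndc_alpha_closed} plus K\H{o}nig's lemma); just note that ``removing the offending name from the support'' cannot be done at a single position only --- dropping a register entry at one state may invalidate later free transitions, so the restriction has to be propagated along the entire remaining prefix run (registers trimmed to the names free in the residual word), which is precisely part~1 of the paper's proof of \autoref{lem:ndc_alpha_closed}.

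The genuine gap is in claim (ii). The invariant ``maintain a total register $t_k$ extending $r_k$'' cannot be kept past a bound step in which the bound name has to change. Concretely, if $(i_k,r_k)\xto{\scriptnew a}(i_{k+1},r_{k+1})$ in $\tl A$ with $a\in\supp(r_{k+1})$, the witness from \autoref{def:ndc} is a transition $(i_k,\ol r_k)\xto{\scriptnew b}(i_{k+1},\ol s_{k+1})$ where $\ol s_{k+1}$ extends $(a\,b)r_{k+1}$, and in general its label cannot be renamed back to $a$ (this would require $a\fresh \ol s_{k+1}$, which may fail). So the constructed $A$-state $t_{k+1}$ carries the new name where $r_{k+1}$ carries $a$, hence $t_{k+1}$ does \emph{not} extend $r_{k+1}$, and your next step --- ``$\ol r_{k+1}$ and $t_{k+1}$ are two total extensions of $r_{k+1}$, so they differ by a permutation fixing $\supp(r_{k+1})$'' --- no longer applies; moreover the later free labels of the constructed run are then renamed copies of the $\sigma_n$, so the asserted prefix-wise $\alpha$-equivalence does not follow merely from choosing the bound labels globally fresh. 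The construction is repairable, but only by strengthening the invariant: carry an accumulated permutation $\rho_k$ with $t_k\supseteq \rho_k\cdot r_k$ together with a compatibility condition such as ``appending $\rho_k\cdot u$ to the constructed prefix is $\alpha$-equivalent to appending $u$ to $\sigma_1\cdots\sigma_k$, for all $u$ with $\FN(u)\seq\supp(r_k)$'', and verify its preservation at both free and bound steps. This is exactly the bookkeeping the paper builds into the finite-word induction instead (clause~(1) in the $\supseteq$-direction of \autoref{lem:ndc_lang}: the total extension is only required to agree with $r$ on positions whose content is free in the remaining word), before passing to infinite words by K\H{o}nig's lemma. As written, the central assertion of (ii) --- that every finite prefix of the constructed bar string is $\alpha$-equivalent to the corresponding prefix of $w$ --- is claimed rather than proved, and that is precisely the hard point of this direction.
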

\begin{proof}[Proof sketch]
  \begin{enumerate}
  \item\label{thm:ncd_buechi_rnna:1} One first verifies that the
    literal $\omega$-language of $\tl{A}$ is closed under
    $\alpha$-equi\-va\-lence. To this end, one proves more generally
    that given $\alpha$-equivalent infinite bar strings
    $w=\sigma_1\sigma_2\sigma_3\cdots$ and
    $w'=\sigma_1'\sigma_2'\sigma_3'\cdots$ and a run
    \[
      (i_0,r_0)\xto{\sigma_1} (i_1,r_1)\xto{\sigma_2}
      (i_2,r_2)\xto{\sigma_3} \cdots
    \]
    for $w$ in $\tl{A}$ there exists a run of the form
    \[
      (i_0,r_0')\xto{\sigma_1'} (i_1,r_1')\xto{\sigma_2'}
      (i_2,r_2')\xto{\sigma_3'} \cdots.
    \]
    for $w'$ in $\tl{A}$. In particular, by definition of the final
    states of $\tl{A}$, the first run is accepting iff the second one
    is. Similar to the proof of \autoref{lem:restrict_num_names}, one
    first establishes the corresponding statement for finite runs by
    induction on their length, and then extends to the infinite case
    via K\H{o}nig's lemma.
  
  \item In view of part \ref{thm:ncd_buechi_rnna:1} it remains to prove
    that $A$ and $\tl{A}$ accept the same bar
    $\omega$-language. Again, this follows from a more general
    observation: for every infinite run
    \[
      (i_0,r_0)\xto{\sigma_1} (i_1,r_1)\xto{\sigma_2}
      (i_2,r_2)\xto{\sigma_3} \cdots
    \]
    in $A$ there exists an infinite run of the form 
    \[
      (i_0,r_0')\xto{\sigma_1'} (i_1,r_1')\xto{\sigma_2'}
      (i_2,r_2')\xto{\sigma_3'} \cdots
    \]
    in $\tl{A}$ such that
    $\sigma_1'\sigma_2'\sigma_3'\cdots =_\alpha
    \sigma_1\sigma_2\sigma_3\cdots$. Moreover, the symmetric statement
    holds where the roles of $A$ and $\tl{A}$ are swapped. As before,
    one first shows the corresponding statement for finite bar strings
    and then invokes K\H{o}nig's lemma.\qedhere
  \end{enumerate}
\end{proof}
\section{Decidability of Inclusion}\label{sec:decidability}
With the preparations given in the previous sections, we
arrive at our main result: language inclusion of Büchi RNNA is
decidable, both under bar language semantics and data language
semantics. The main ingredients of our proofs below are the
name-dropping modification (\autoref{thm:ncd_buechi_rnna}) and the
name restriction property stated in
\autoref{lem:restrict_num_names}. We will employ them to show that the
language inclusion problems for Büchi RNNA reduce to the inclusion
problem for ordinary Büchi automata over finite alphabets. The latter is
well-known to be decidable with elementary complexity; in fact, it is $\PSPACE$-complete~\cite{kv96}.
\begin{rem}
  For algorithms deciding properties of Büchi RNNAs, a finite
  representation of the underlying nominal sets of states and
  transitions is required. A standard representation of a single orbit
  $X$ is given by picking an arbitrary element $x\in X$ with
  $\supp(x)=\{a_1,\ldots, a_m\}$ and forming the subgroup
  $G_X\seq \Perm(\{a_1,\ldots,a_m\})$ of all permutations
  $\pi\colon \{a_1,\ldots, a_m\}\to \{a_1,\ldots, a_m\}$ such that
  $\pi\o x = x$. The finite group $G_X$ determines $X$ up to
  isomorphism~\cite[Thm. 5.13]{Pitts13}. More generally, an
  orbit-finite nominal set consisting of the orbits $X_1,\ldots, X_n$
  is represented by a list of $n$ finite permutation groups
  $G_{X_1},\ldots, G_{X_n}$.
\end{rem}
We first consider the bar language semantics:
\begin{theorem}\label{thm:inclusion_decidable}
  Bar $\omega$-language inclusion of Büchi RNNAs is decidable in parametrized
  polynomial space.
\end{theorem}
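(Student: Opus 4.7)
The plan is to reduce bar $\omega$-language inclusion of Büchi RNNAs to classical Büchi $\omega$-language inclusion over a finite alphabet, which is $\PSPACE$-complete~\cite{kv96}.

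\emph{Reduction 1 (name-dropping on $B$).} First apply \autoref{prop:strong} to assume that $B$ has a strong nominal state space, and pass to its name-dropping modification $\widetilde B$. By \autoref{thm:ncd_buechi_rnna}, $L_{0,\omega}(\widetilde B)$ is the $\alpha$-closure of $L_{0,\omega}(B)$; in particular $L_{0,\omega}(B)\subseteq L_{0,\omega}(\widetilde B)$. One then shows
\[
L_{\alpha,\omega}(A)\subseteq L_{\alpha,\omega}(B)\iff L_{0,\omega}(A)\subseteq L_{0,\omega}(\widetilde B).
\]
$(\Rightarrow)$: for $w\in L_{0,\omega}(A)$, the assumption $[w]_\alpha\in L_{\alpha,\omega}(B)$ yields some $w'=_\alpha w$ accepted by $B$, hence $w'\in L_{0,\omega}(\widetilde B)$, and by $\alpha$-closure $w\in L_{0,\omega}(\widetilde B)$. $(\Leftarrow)$: any $[w]_\alpha\in L_{\alpha,\omega}(A)$ has a literal representative in $L_{0,\omega}(A)\subseteq L_{0,\omega}(\widetilde B)$, which, again by \autoref{thm:ncd_buechi_rnna}, is $\alpha$-equivalent to a string in $L_{0,\omega}(B)$.

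\emph{Reduction 2 (finite alphabet).} Pick a finite $N\subseteq \names$ with $\supp(q_0^A)\cup \supp(q_0^{\widetilde B})\subseteq N$ and $|N|\geq \degree(A)+1$, and set $\barA_N := N\cup \{\newletter a \compr a\in N\}$. Then
\[
L_{0,\omega}(A)\subseteq L_{0,\omega}(\widetilde B)\iff L_{0,\omega}(A)\cap \barA_N^\omega\subseteq L_{0,\omega}(\widetilde B)\cap \barA_N^\omega.
\]
The nontrivial direction uses \autoref{lem:restrict_num_names}: any counterexample $w\in L_{0,\omega}(A)\setminus L_{0,\omega}(\widetilde B)$ admits, after choosing the representative names in the proposition to be elements of $N$, some $w'=_\alpha w$ accepted by $A$ with $\Names{w'}\subseteq N$; by $\alpha$-closure of $L_{0,\omega}(\widetilde B)$, this $w'$ is still not accepted by $\widetilde B$.

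\emph{Finite Büchi automata and complexity.} By \itemref{lem:rnna-props}{lem:rnna-props:3}, every state reachable from the initial state via an input in $\barA_N^\omega$ has support contained in $N$. In any orbit-finite nominal set, the elements with support in a fixed finite set form a finite set; concretely, each orbit $\names^{\# X}$ of $A$ (resp.\ $\names^{\$ X}$ of $\widetilde B$) contributes at most $(|N|+1)^{|X|}\leq 2^{O(k\log k)}$ such elements, where $k=\max(\degree(A),\degree(B))$. Consequently both restricted literal $\omega$-languages are recognised by finite classical Büchi automata over $\barA_N$ of size polynomial in the numbers of orbits of $A$ and $B$ and of order $2^{O(k\log k)}$ in $k$. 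Running the standard $\PSPACE$ Büchi inclusion algorithm on the fly thus decides the original inclusion in parametrized polynomial space with parameter $k$. The principal obstacle is ensuring that the two reductions preserve (non-)inclusion, which relies crucially on the $\alpha$-closure of $L_{0,\omega}(\widetilde B)$ (\autoref{thm:ncd_buechi_rnna}) together with the name bound of \autoref{lem:restrict_num_names}; the complexity analysis then follows by direct inspection of the reachable state space.
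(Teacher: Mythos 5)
Your proposal is correct and follows essentially the same route as the paper: pass to the name-dropping modification $\tl B$ (\autoref{thm:ncd_buechi_rnna}), restrict to a finite alphabet of about $\degree(A)+1$ names via \autoref{lem:restrict_num_names}, and reduce to $\PSPACE$ inclusion of classical Büchi automata, with the same parametrized complexity analysis; splitting the paper's single equivalence into your two reductions is only a cosmetic difference. (One tiny nit: the bound on supports of reachable states follows from \autoref{lem:rnna-props}\ref{lem:rnna-props:1}--\ref{lem:rnna-props:2} rather than item \ref{lem:rnna-props:3}.)
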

\begin{proof}
  Let $A$ and $B$ be Büchi RNNAs; the task is to decide whether $L_{\alpha,\omega}(A)\seq L_{\alpha,\omega}(B)$. Put $m=\degree(A)$, and choose a set
  $S\seq \names$ of $m+1$ distinct names containing $\supp(q_{0})$, where~$q_0$ is the initial state of $A$. Moreover,
 form the finite alphabet
$
    \ol{S}=S\cup \{\newletter s\compr s\in S\}.
$ 
  \begin{enumerate}
  \item We claim that
    \begin{equation}\label{eq:acc_equiv} L_{\alpha,\omega}(A)\seq
      L_{\alpha,\omega}(B)\qquad\text{iff}\qquad L_{0,\omega}(A)\cap
      \ol{S}^\omega \seq L_{0,\omega}(\tl{B})\cap
      \ol{S}^\omega, \end{equation}
    where $\tl{B}$ is the name-dropping modification of $B$.

    \smallskip\noindent ($\Rightarrow$) Suppose that
    $L_{\alpha,w}(A)\seq L_{\alpha,\omega}(B)$. Then
    $L_{0,\omega}(A)\seq L_{0,\omega}(\tl{B})$ because
    $L_{0,\omega}(\tl{B})$ is the closure of $L_{0,\omega}(B)$ under
    $\alpha$-equivalence by \autoref{thm:ncd_buechi_rnna}. In
    particular,
    $L_{0,\omega}(A)\cap \ol{S}^\omega \seq L_{0,\omega}(\tl{B})\cap
    \ol{S}^\omega$.

    \smallskip\noindent ($\Leftarrow$) Suppose that
    $L_{0,\omega}(A)\cap \ol{S}^\omega \seq L_{0,\omega}(\tl{B})\cap
    \ol{S}^\omega$, and let $[w]_\alpha \in
    L_{\alpha,\omega}(A)$. Thus, $w=_\alpha v$ for some
    $v\in L_{0,\omega}(A)$. By \autoref{lem:restrict_num_names} we
    know that there exists $v'\in \ol{S}^\omega$ accepted by $A$ such
    that $v'=_\alpha v$. Then
    \[
      v'\in L_{0,\omega}(A)\cap \ol{S}^\omega \seq
      L_{0,\omega}(\tl{B})\cap \ol{S}^\omega \seq
      L_{0,\omega}(\tl{B}).
    \]
    Thus
    $[w]_\alpha=[v]_\alpha = [v']_\alpha \in
    L_{\alpha,\omega}(\tl{B})=L_{\alpha,\omega}(B)$, proving
    $L_{\alpha,\omega}(A)\seq L_{\alpha,\omega}(B)$.

  \item Now observe that both $L_{0,\omega}(A)\cap \ol{S}^\omega$ and
    $L_{0,\omega}(\tl{B})\cap \ol{S}^\omega$ are $\omega$-regular
    languages over the alphabet $\ol{S}$: they are accepted by the
    Büchi automata obtained by restricting the Büchi RNNAs $A$ and
    $B$, respectively, to the finite set of states with support
    contained in $S$ and transitions labeled by elements of
    $\ol{S}$. Inclusion of $\omega$-regular languages is decidable in
    polynomial space~\cite{kv96}. Let $k_A/k_B$ and $m_A/m_B$ denote
    the number of orbits and the degree of $A$/$B$. Since the support
    of every state of the Büchi automaton derived from $A$ is contained in
    the set $S$ and the latter has $(m_A+1)!$ permutations, there are at most
    \[
      k_A\cdot (m_A+1)!
      \in 
      \CO(k_A\cdot 2^{(m_A+1)\log (m_A+1)})
    \]
    states.
    The Büchi automaton derived from $\tl{B}$ has at most
    \[
      k_B\cdot 2^{m_B}\cdot (m_A+1)!
      \in
      \CO(k_B\cdot 2^{m_B + (m_A+1)\log (m_A+1)})
    \]
    states 
    since $\tl{B}$ has at most $k_B\cdot 2^{m_B}$ orbits. Thus, the
    space required for the inclusion check is polynomial in $k_A, k_B$
    and exponential in $m_B+(m_A+1)\log (m_A+1)$.\qedhere
  \end{enumerate}
\end{proof}

\noindent Next, we turn to the data language semantics. 
\begin{nota}
  Given infinite bar strings $w=\sigma_1\sigma_2\sigma_3\cdots$ and
  $w'=\sigma_1'\sigma_2'\sigma_3'\cdots$ we write $w\sqsubseteq w'$ if,
  for all $a\in \names$ and $n\in \Nat$,
  \[ \sigma_n=a
    \quad\text{implies}\quad
    \sigma_n'\in \{a,\newletter a\}
    \qquad\text{and}\qquad
    \sigma_n=\newletter a
    \quad\text{implies}\quad
    \sigma_n'=\newletter a.
  \]
  Thus, $w'$ arises from $w$ by arbitrarily putting bars in front of letters in $w$.
\end{nota}
\begin{lemma}\label{lem:order-rel}
  If $v=_\alpha w$ and $v\sqsubseteq v'$, then there exists
  $w'\in \barA^\omega$ such that $w\sqsubseteq w'$ and $v'=_\alpha w'$.
\end{lemma}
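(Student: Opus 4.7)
The plan is to reduce to a corresponding finite statement and then lift via the prefix definition of $=_\alpha$ on $\barA^\omega$. The natural candidate for $w'$ is obtained from $w$ by barring exactly those positions where $v'$ has a bar but $v$ does not; by construction $w\sqsubseteq w'$, so the task reduces to showing $v'=_\alpha w'$.

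A key preliminary observation is that $\alpha$-equivalent finite bar strings have bars at the same positions: the generating step $x\newletter{a}u_1 \leftrightarrow x\newletter{b}u_2$ preserves bar positions since $\braket{a}u_1=\braket{b}u_2$ forces $u_1$ and $u_2$ to agree up to the action of the transposition $(a\,b)$, which does not alter the barred/unbarred status of any position. The finite-case lemma to establish is therefore: for $\hat v,\hat w\in\barA^*$ with $\hat v=_\alpha\hat w$ and $n\in\Nat$ an unbarred position of both, the bar strings $\hat v^{+n}$ and $\hat w^{+n}$ obtained by barring position $n$ satisfy $\hat v^{+n}=_\alpha\hat w^{+n}$. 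I would prove this by induction on a witnessing chain of generating $\alpha$-renaming steps. For a single step $\hat v = x\newletter{a}u_1$, $\hat w=x\newletter{b}u_2$ with $\braket{a}u_1=\braket{b}u_2$, the subcase $n\leq|x|$ is immediate, since the same generating step relates $\hat v^{+n}$ and $\hat w^{+n}$; the subcase $n=|x|+1$ is excluded since that position is already barred; and in the subcase $n>|x|+1$, the insertion occurs at position $k=n-|x|-1$ of $u_1,u_2$, and one verifies $\braket{a}u_1^{+k}=\braket{b}u_2^{+k}$ using that bar insertion at a fixed position commutes with the action of $(a\,b)$ and preserves the set of names occurring (so that $b\fresh u_1$ implies $b\fresh u_1^{+k}$).

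Passing to the infinite case, for each $n\in\Nat$ the prefix $v'_n$ arises from $v_n$ by barring finitely many positions, and $w'_n$ arises from $w_n$ by barring those same positions. Iterating the finite-case lemma one bar at a time yields $v'_n=_\alpha w'_n$ for every $n$, whence $v'=_\alpha w'$ by definition of $=_\alpha$ on $\barA^\omega$. The main technical obstacle is the suffix subcase of the single-step finite argument, namely verifying that bar insertion at a fixed position commutes with the renaming $(a\,b)$; once this is in place, the reduction to prefixes and the overall construction are straightforward bookkeeping.
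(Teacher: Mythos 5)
Your construction of $w'$ is exactly the paper's (bar in $w$ precisely those positions that are barred in $v'$ but not in $v$, keeping $w$'s own names), and your verification that $v'=_\alpha w'$ — reducing to prefixes and showing, by induction along a chain of generating steps and the characterisation in \autoref{rem:alpheq-technical}, that barring a common unbarred position preserves $=_\alpha$ — is correct. The paper's proof simply asserts $w\sqsubseteq w'$ and $v'=_\alpha w'$ without further detail, so your argument supplies exactly the bookkeeping it leaves implicit; this is the same approach, just spelled out.
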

\begin{proof}
  Let $v'=\sigma_1'\sigma_2'\sigma_3'\cdots$ and
  $w=\rho_1\rho_2\rho_3\cdots$.  We define $w'$ to be the following
  modification of $w$: for every $n\in \Nat$, if $\rho_n=a$ and
  $\sigma_n'=\newletter b$ for some $a,b\in \names$, replace $\rho_n$
  by $\newletter a$. Then $w\sqsubseteq w'$ and $v'=_\alpha w'$ as
  required.
\end{proof}

\noindent In the following, we write $D(w)$ for $D(\{[w]_\alpha\})$; thus,
$D(w)$ is the set of all $\ub(v)\in \names^\omega$ where
$v=_\alpha w$ (cf.~\autoref{not:local-semantics}).
\begin{lemma}\label{lem:word-inclusion}
  Let $L$ be a bar $\omega$-language accepted by some Büchi RNNA and
  let $w\in \barA^\omega_\fs$. Then $D(w)\seq D(L)$ if and only if
  there exists $w'\sqsupseteq w$ such that $[w']_\alpha\in L$.
\end{lemma}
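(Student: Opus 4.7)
The backward direction is immediate from \autoref{lem:order-rel}: given $w'\sqsupseteq w$ with $[w']_\alpha\in L$ and any $d=\ub(v)\in D(w)$ with $v=_\alpha w$, apply the lemma to $w=_\alpha v$ and $w\sqsubseteq w'$ to obtain $v''\sqsupseteq v$ with $v''=_\alpha w'$; then $[v'']_\alpha=[w']_\alpha\in L$ and $\ub(v'')=\ub(v)=d$ (adding bars does not affect $\ub$), so $d\in D(L)$.

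For the forward direction, the plan is to pick an $\alpha$-representative of $w$ whose bound names are so fresh that \emph{any} accepted bar string with the same underlying data word is forced to bind them. Fix a Büchi RNNA $B$ accepting $L$ with initial state $q_0$ and let $S:=\supp(q_0)\cup\supp(w)\seq\names$, a finite set. First, invoke \autoref{lem:clean} to obtain a clean $\tilde w=_\alpha w$, arranging additionally that every name $c$ occurring as $\newletter c$ in $\tilde w$ lies outside $S$. This requires a mild strengthening of the construction in the proof of \autoref{lem:clean}, but is unproblematic: at each renaming step the new name only needs to avoid the finite set consisting of the current suffix's names, the previously introduced fresh names, and $S$. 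By cleanness, every bar $\newletter c$ of $\tilde w$ sits at the first occurrence of the name $c$ in $\ub(\tilde w)$.

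Now use the hypothesis: $\ub(\tilde w)\in D(w)\seq D(L)$ yields some $v'\in\barA^\omega$ with $[v']_\alpha\in L$ and $\ub(v')=\ub(\tilde w)$. Since $[v']_\alpha\in L_{\alpha,\omega}(B)$, there is some $u=_\alpha v'$ literally accepted by $B$, and item~\ref{lem:rnna-props:3} of \autoref{lem:rnna-props} then gives $\free{v'}=\free{u}\seq\supp(q_0)\seq S$ (using that $\alpha$-equivalent bar strings share free names). Every bound name $c$ of $\tilde w$, however, lies outside $S$ and occurs in $\ub(v')=\ub(\tilde w)$, so $c$ cannot be free in $v'$; hence its first occurrence in $v'$ must be a bar $\newletter c$, which matches the corresponding bar position of $\tilde w$ by the above. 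At all other positions $v'$ carries either the same unbar letter as $\tilde w$ or its barred version, so altogether $\tilde w\sqsubseteq v'$.

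A final application of \autoref{lem:order-rel} to $w=_\alpha\tilde w$ and $\tilde w\sqsubseteq v'$ produces the desired $w'\sqsupseteq w$ with $v'=_\alpha w'$, whence $[w']_\alpha\in L$. The main obstacle is the preparatory cleaning step — securing a clean representative whose bound names globally avoid a prescribed finite set. Once that is done, item~\ref{lem:rnna-props:3} of \autoref{lem:rnna-props} does the real work, which is precisely what makes the automaton structure of $L$ (and not just its $\alpha$-closure) essential for this direction.
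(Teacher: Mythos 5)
Your proof is correct and follows the same overall strategy as the paper's: fix a clean representative (via a strengthened \autoref{lem:clean}) whose bound names avoid the relevant finite set, use $D(w)\seq D(L)$ to obtain $v'$ with $\ub(v')=\ub(\tilde w)$ and $[v']_\alpha\in L$, argue via \autoref{lem:rnna-props}\ref{lem:rnna-props:3} that $v'$ must bind every bound name of $\tilde w$ at the same position (so $\tilde w\sqsubseteq v'$), and finish with \autoref{lem:order-rel}; the backward direction is identical to the paper's. The one genuine difference is how you get the free-name bound on $v'$: the paper first invokes \autoref{thm:ncd_buechi_rnna} so that the accepting automaton may be assumed to have an $\alpha$-closed literal $\omega$-language, making $v'$ itself literally accepted and \autoref{lem:rnna-props}\ref{lem:rnna-props:3} directly applicable, whereas you apply that lemma to some literally accepted representative $u=_\alpha v'$ and transfer the bound using that $\alpha$-equivalent bar strings have the same free names (which indeed holds for arbitrary infinite bar strings, by the prefix-wise definition of $=_\alpha$ together with \autoref{rem:alpheq-technical}, even though $v'$ need not be finitely supported --- it would be worth spelling this out). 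This buys a slightly more self-contained proof of the lemma, independent of the name-dropping construction, at the cost of the small extra observation about $\alpha$-invariance of free names; it does not change the role the lemma plays downstream, since the decision procedure in \autoref{thm:inclusion_decidable_local} still needs the name-dropping modification anyway.
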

\begin{corollary}\label{cor:lang-inclusion}
  Let $K$ and $L$ be bar $\omega$-languages accepted by Büchi
  RNNA. Then $D(K)\seq D(L)$ if and only if for every
  $w\in \barA^\omega_\fs$ with $[w]_\alpha\in K$ there exists
  $w'\sqsupseteq w$ such that $[w']_\alpha\in L$.
\end{corollary}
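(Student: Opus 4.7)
My plan is to derive the corollary directly from the word-inclusion lemma (Lemma~\ref{lem:word-inclusion}) applied pointwise. The only technical wrinkle is that the definition of $D(K)$ ranges over arbitrary bar strings representing a given $\alpha$-class, while the corollary only quantifies over finitely supported ones, so I will need Proposition~\ref{lem:restrict_num_names} to bridge this gap. Throughout, I will also use the basic fact that $D(w)$ depends only on $[w]_\alpha$, so that $v =_\alpha v''$ implies $D(v) = D(v'')$.

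For the forward direction, assuming $D(K) \seq D(L)$ and taking any $w \in \barA^\omega_\fs$ with $[w]_\alpha \in K$, I observe that $D(w) \seq D(K) \seq D(L)$, and Lemma~\ref{lem:word-inclusion} applied to $L$ and $w$ immediately furnishes some $w' \sqsupseteq w$ with $[w']_\alpha \in L$.

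For the reverse direction, I pick $u \in D(K)$, so that $u = \ub(v)$ for some $v \in \barA^\omega$ with $[v]_\alpha \in K$. A priori $v$ need not be finitely supported, which is the main obstacle: the hypothesis cannot be applied to $v$ directly. To circumvent this I invoke Proposition~\ref{lem:restrict_num_names} for a Büchi RNNA accepting $K$, obtaining a finitely supported $v'' \in \barA^\omega_\fs$ with $v'' =_\alpha v$ and $[v'']_\alpha \in K$. The hypothesis applied to $v''$ then yields $w' \sqsupseteq v''$ with $[w']_\alpha \in L$, and Lemma~\ref{lem:word-inclusion} applied to $L$ and $v''$ gives $D(v'') \seq D(L)$. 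Since $D(v'') = D(v) \ni u$, this shows $u \in D(L)$, establishing the inclusion $D(K) \seq D(L)$.
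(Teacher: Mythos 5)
Your proposal is correct and follows essentially the same route as the paper: apply \autoref{lem:word-inclusion} pointwise to the classes $[w]_\alpha\in K$ (equivalently, use $D(K)=\bigcup_{[w]_\alpha\in K}D(w)$) and invoke \autoref{lem:restrict_num_names} to replace an arbitrary representative by a finitely supported one in the reverse direction. Your write-up just makes explicit the details the paper leaves implicit in its one-sentence argument.
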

\begin{proof}
  This follows from \autoref{lem:word-inclusion} using that
  $D(K)=\bigcup_{[w]_\alpha\in K} D(w)$ and that every
  $\alpha$-equivalence class $[w]_\alpha\in K$ has a finitely
  supported representative by \autoref{lem:restrict_num_names}.
\end{proof}
\begin{theorem}\label{thm:inclusion_decidable_local}
  Data $\omega$-language inclusion of Büchi RNNAs is decidable in parametrized
  polynomial space.
\end{theorem}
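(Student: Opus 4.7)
My plan is to reduce data $\omega$-language inclusion of Büchi RNNAs to inclusion of classical Büchi automata over a finite alphabet, mirroring the strategy of \autoref{thm:inclusion_decidable} but channeled through \autoref{cor:lang-inclusion}. The first key observation is that \autoref{thm:ncd_buechi_rnna} gives $L_{\alpha,\omega}(\tl B) = L_{\alpha,\omega}(B)$ and that $L_{0,\omega}(\tl B)$ is closed under $\alpha$-equivalence, so $[w']_\alpha \in L_{\alpha,\omega}(B)$ is equivalent to the literal condition $w' \in L_{0,\omega}(\tl B)$. Combined with \autoref{cor:lang-inclusion}, this rephrases $D_\omega(A) \seq D_\omega(B)$ as: for every $w \in \barA^\omega_\fs$ with $[w]_\alpha \in L_{\alpha,\omega}(A)$ there exists $w' \sqsupseteq w$ with $w' \in L_{0,\omega}(\tl B)$.

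Next I would restrict to a finite alphabet. Fix $S \seq \names$ containing $\supp(q_0^A) \cup \supp(q_0^B)$ with $|S| \geq \degree(A) + 1$, and put $\ol S = S \cup \{\newletter s \compr s \in S\}$. Using \autoref{lem:restrict_num_names} exactly as in the proof of \autoref{thm:inclusion_decidable}, every $[w]_\alpha \in L_{\alpha,\omega}(A)$ has some representative $v \in L_{0,\omega}(A) \cap \ol S^\omega$; conversely, given a witness $v' \sqsupseteq v$ for such a $v$, \autoref{lem:order-rel} transports it to some $w' \sqsupseteq w$ with $v' =_\alpha w'$, whence $w' \in L_{0,\omega}(\tl B)$ by $\alpha$-closure. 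Since $v \in \ol S^\omega$ forces $v' \in \ol S^\omega$, the characterization simplifies to
\[
  L_{0,\omega}(A) \cap \ol S^\omega \;\seq\; \bigl\{\, v \in \ol S^\omega \;\compr\; \exists\, v' \in \ol S^\omega,\; v \sqsubseteq v',\; v' \in L_{0,\omega}(\tl B) \cap \ol S^\omega \,\bigr\}.
\]

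Both sides are $\omega$-regular over $\ol S$: as in \autoref{thm:inclusion_decidable}, restricting $A$ and $\tl B$ to states with support in $S$ (closed under transitions labelled by elements of $\ol S$ by \autoref{lem:rnna-props}) yields Büchi automata $A_S$ and $\tl B_S$ recognizing the two literal languages intersected with $\ol S^\omega$. From $\tl B_S$ I would construct an automaton $\tl B_S^\downarrow$ for the $\sqsubseteq$-downward closure by adding, for each transition $q \xto{\scriptnew a} q'$, a parallel transition $q \xto{a} q'$; intuitively, on input $v$ this guesses at each $a$-position whether to bind the letter, so $L(\tl B_S^\downarrow)$ equals the right-hand side above. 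It remains to decide $L(A_S) \seq L(\tl B_S^\downarrow)$ by the $\PSPACE$ algorithm of Kupferman and Vardi; since $\tl B_S^\downarrow$ has the same state count as $\tl B_S$ (only with doubled transition count), the complexity analysis of \autoref{thm:inclusion_decidable} carries over verbatim, giving parametrized polynomial space. The delicate step is the displayed equivalence: its $(\Leftarrow)$ direction is precisely where \autoref{lem:order-rel} is needed to pull a finite-alphabet witness back to the original infinite bar string.
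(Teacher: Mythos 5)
Your proposal is correct and follows essentially the same route as the paper's proof: reduce via \autoref{cor:lang-inclusion} and the name-dropping modification to the literal inclusion $L_{0,\omega}(A)\cap \ol{S}^\omega \seq (L_{0,\omega}(\tl{B})\cap \ol{S}^\omega){\downarrow}$, realize the $\sqsubseteq$-downward closure on the restricted finite-alphabet Büchi automaton by adding an $a$-transition for each $\newletter a$-transition, and invoke the $\PSPACE$ inclusion check, with both directions of the equivalence argued exactly as in the paper (via \autoref{lem:restrict_num_names}, \autoref{lem:order-rel}, and $\alpha$-closure of $L_{0,\omega}(\tl B)$). The only deviation — additionally putting $\supp(q_0^B)$ into $S$ — is unnecessary but harmless and does not affect the parametrized polynomial-space bound.
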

\begin{proof}
  Let $A$ and $B$ be Büchi RNNAs; the task is to decide whether
  $D_\omega(A)\seq D_\omega(B)$. Put $m=\degree(A)$, and choose a set $S\seq \names$ of $m+1$
  distinct names containing $\supp(q_{0})$, where $q_0$ is the initial state of $A$. Moreover, form the finite
  alphabet
$
    \ol{S}=S\cup \{\newletter s\compr s\in S\}.
$
  \begin{enumerate}
  \item For every language $L\seq \ol{S}^\omega$ let $L{\downarrow}$
    denote the downward closure of $L$ with respect to $\sqsubseteq$:
    \[
      L{\downarrow} = \{ v\in \ol{S}^\omega\compr \text{there exists $w\in
        L$ such that } v\sqsubseteq w \}.
    \]
    Every Büchi automaton accepting $L$ can be turned into a Büchi
    automaton accepting~$L{\downarrow}$ by adding the transition
    $q\xto{a}q'$ for every transition $q\xto{\scriptnew a} q'$ where
    $a\in S$.
    
  \item We claim that
    \begin{equation}\label{eq:acc_equiv_local}
      D_\omega(A)\seq D_\omega(B)
      \qquad\text{iff}\qquad
      L_{0,\omega}(A)\cap \ol{S}^\omega \seq (L_{0,\omega}(\tl{B})\cap
      \ol{S}^\omega){\downarrow},
    \end{equation}
    where $\tl{B}$ is the name-dropping modification of $B$.

    \smallskip\noindent ($\Rightarrow$) Suppose that
    $D_\omega(A)\seq D_\omega(B)$, that is
    $D(L_{\alpha,w}(A))\seq D(L_{\alpha,\omega}(B))$, and let
    $w\in L_{0,\omega}(A)\cap \ol{S}^\omega$. Then
    $[w]_\alpha\in L_{\alpha,\omega}(A)$, hence by
    \autoref{cor:lang-inclusion} there exists $w'\sqsupseteq w$ such
    that $[w']_\alpha\in L_{\alpha,\omega}(B)$. Then
    $w'\in L_{0,\omega}(\tl{B})\cap \ol{S}^\omega$ because
    $L_{0,\omega}(\tl{B})$ is the closure of $L_{0,\omega}(B)$ under
    $\alpha$-equivalence by \autoref{thm:ncd_buechi_rnna}. Thus
    $w\in (L_{0,\omega}(\tl{B})\cap \ol{S}^\omega){\downarrow}$, which
    proves
    $ L_{0,\omega}(A)\cap \ol{S}^\omega \seq (L_{0,\omega}(\tl{B})\cap
    \ol{S}^\omega){\downarrow}$.

    \smallskip\noindent ($\Leftarrow$) Suppose that
    $L_{0,\omega}(A)\cap \ol{S}^\omega \seq (L_{0,\omega}(\tl{B})\cap
    \ol{S}^\omega){\downarrow}$. To prove
    $D_\omega(A)\seq D_\omega(B)$, that
    is~$D(L_{\alpha,\omega}(A))\seq D(L_{\alpha,\omega}(B))$, we use
    \autoref{cor:lang-inclusion}. Thus, let
    $[w]_\alpha \in L_{\alpha,\omega}(A)$. Then $w=_\alpha v$ for some
    $v\in L_{0,\omega}(A)$. By \autoref{lem:restrict_num_names} we may
    assume that $v\in \ol{S}^\omega$.  By hypothesis this implies
    $v\in (L_{0,\omega}(\tl{B})\cap \ol{S}^\omega){\downarrow}$, that
    is, there exists $v'\sqsupseteq v$ such that
    $v'\in L_{0,\omega}(\tl{B})$. By \autoref{lem:order-rel}, there
    exists $w'\sqsupseteq w$ such that $w'=_\alpha v'$. Then
    $[w']_\alpha = [v']_\alpha \in L_{\alpha,\omega}(\tl{B}) = 
    L_{\alpha,\omega}(B)$, as required.

  \item The decidability of $D_\omega(A)\seq D_\omega(B)$ now follows
    from part~1 and \eqref{eq:acc_equiv_local} in complete analogy to the proof
    of \autoref{thm:inclusion_decidable}.  \qedhere
  \end{enumerate}
\end{proof}
\begin{rem}\label{rem:data-language-finite-support}
  In the above theorem it is crucial to admit non-finitely supported
  data words: it is an open problem whether the inclusion
  $D_\omega(A)\cap \names^\omega_\fs\seq D_\omega(B)\cap
  \names^\omega_\fs$ is decidable. In fact, our decidability proof
  relies on \autoref{cor:lang-inclusion} as a key ingredient, and the
  latter fails if the condition $D(K)\seq D(L)$ is replaced by the
  weaker condition
  $D(K)\cap \names^\omega_\fs \seq D(L)\cap \names^\omega_\fs$.

  To see this, let $K$ and $L$ be the bar $\omega$-languages accepted
  by the two Büchi RNNAs displayed below, where $a,b$ range over names
  in $\names$ and $a\neq b$:
  \begin{center}
    \tikzset{every state/.style={minimum size=20pt}}
    \begin{tikzpicture}[align=center,node distance=2cm] 
      \node[state, initial, accepting] (q0) {$q_0$};
      \draw (q0) edge[loop above] node{$\newletter a$} (q0);
    \end{tikzpicture}
    \qquad\qquad\qquad
    \begin{tikzpicture}[align=center,node distance=2cm] 
      \node[state,initial] (q0) {$q_0$};
      \node[state, right of=q0] (a) {$a$};
      \node[state, right of=a, accepting] (qf) {$q_f$};
      \draw
      (q0) edge[loop above] node{$\newletter a$} (q0)
      (q0) edge[above] node{$\newletter a$} (a)
      (a) edge[loop above] node{$\newletter b$} (a)
      (a) edge[above] node{$a$} (qf)
      (qf) edge[loop above] node{$\newletter a$} (qf);
    \end{tikzpicture}
  \end{center}
  We have $D(K)=\names^\omega$ and $D(L)$ consists of all data words
  in $\names^\omega$ in which some name occurs at least twice. Thus,
  $D(K)\cap \names^\omega_\fs = D(L)\cap \names^\omega_\fs =
  \names^\omega_\fs$, in particular the inclusion
  $D(K)\cap \names^\omega_\fs \seq D(L)\cap \names^\omega_\fs$
  holds. Consider the infinite bar string
  $w=\newletter a\newletter a\newletter a\cdots$ and note that
  $w'\sqsupseteq w$ implies $w'=w$. Then $[w]_\alpha \in K$ but
  $[w]_\alpha\not\in L$ since every bar string accepted by the
  right-hand automaton contains some letter from $\names$. 
  Thus, ``only if'' fails in \autoref{cor:lang-inclusion}.
\end{rem}
\section{Relation to Other Automata Models}\label{sec:relation}
We conclude this paper by comparing Büchi RNNA with two
related automata models over infinite words.  Ciancia and Sammartino
\cite{CianciaSammartino14} consider deterministic nominal automata
accepting data $\omega$-lan\-gu\-ages $L\seq \names^\omega$ via a
Muller acceptance condition. More precisely, a \emph{nominal
  deterministic Muller automaton} ($\emph{nDMA}$)
$A=(Q,\delta,q_0,\F)$ is given by an orbit-finite nominal set $Q$ of
states, and equivariant map $\delta\colon Q\times \names \to Q$
representing transitions, an initial state $q_0\in Q$, and a set
$\F\seq \Pow(\orb(Q))$ where $\orb(Q)$ is the finite set of orbits
of~$Q$. Every input word $w=a_1a_2a_3\cdots \in \names^\omega$ has a
unique run $q_0 \xto{a_1} q_1 \xto{a_2} q_2 \xto{a_3}\cdots $ where
$q_{i+1}=\delta(q_{i},a_{i+1})$ for $i=0,1,2,\ldots$. The word $w$ is
\emph{accepted} if the set $\{ O\in \orb(Q) \compr \text{$q_n\in O$ for
  infinitely many $n$} \}$ lies in $\F$. The data $\omega$-language
\emph{accepted} by the automaton is the set of all words
$w\in \names^\omega$ whose run is accepting.

As for Büchi RNNA, language inclusion is decidable for
nDMA~\cite[Thm. 4]{CianciaSammartino14}. In terms of expressive power the two models
are incomparable, as witnessed by the data $\omega$-languages
\begin{align*}
K &= \{ w\in \names^\omega\compr \text{some $a\in \names$ occurs  infinitely often in $w$} \},\\
L & =\{ w\in \names^\omega\compr \text{$w$ does not have  the suffix
        $a^\omega$ for any $a\in \names$} \}.
\end{align*}
\vspace{-0.6cm}
\begin{proposition}
  \begin{enumerate}
  \item The language $K$ is accepted by a Büchi RNNA but not by any nDMA.
    
  \item The language $L$ is accepted by an nDMA but not by any Büchi RNNA.
  \end{enumerate}
\end{proposition}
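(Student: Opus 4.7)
\textbf{For item (1):} The Büchi RNNA accepting $K$ is precisely the one displayed in~\autoref{ex:rnna}, so we only need to rule out nDMA acceptance. Assume for contradiction that $A=(Q,\delta,q_0,\F)$ is an nDMA with $D(A) = K$. The strategy is to exploit determinism together with orbit-finiteness: we inspect the unique run $r_0,r_1,r_2,\ldots$ on an injective enumeration $v = a_1 a_2 a_3 \cdots$ with pairwise distinct $a_i \notin \supp(q_0)$. Since $v\notin K$, its set of orbits visited infinitely often lies outside $\F$. Applying Ramsey to the finitely many orbits of $Q$ yields indices $i<j$ with $r_i,r_j$ in a common orbit, hence some permutation $\pi\in\Perm(\names)$ with $\pi\cdot r_i = r_j$. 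Iterating this ``orbit loop'' under $\pi$ by reading suitably chosen translations of the block $a_{i+1}\cdots a_j$ -- using a fixed recurring name in one variant and always freshening in the other -- we produce two companion $\omega$-words whose runs visit exactly the same orbits infinitely often, yet only one belongs to $K$. This contradicts $\F$ distinguishing them. The main obstacle is the permutation bookkeeping needed to align the orbit sequences across the two variants.

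\textbf{For item (2), the positive direction:} We construct an nDMA with three orbits -- the singleton $\{q_0\}$, $\names\times\{\mathrm{same}\}$, and $\names\times\{\mathrm{changed}\}$ -- together with an equivariant transition that stores the most recent letter and toggles the flag depending on whether the new letter matches the stored one. The Muller condition accepts all sets of orbits containing $\names\times\{\mathrm{changed}\}$, which amounts to ``infinitely many letter changes'', i.e., exactly the words in $L$.

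\textbf{For item (2), the negative direction:} We suppose some Büchi RNNA $A$ satisfies $D_\omega(A)=L$. Choosing pairwise distinct $a_1,a_2,\ldots\in\names\setminus\supp(q_0)$, the data word $u = a_1 a_2 a_3 \cdots$ lies in $L$, so there is an accepting bar string $w' = \sigma_1\sigma_2\sigma_3\cdots$ with $\ub(w')=u$. The crux is to show that $\sigma_i=\newletter{a_i}$ for every $i$: if instead $\sigma_i = a_i$ were a free transition, then \itemref{lem:rnna-props}{lem:rnna-props:1} would yield $a_i \in \supp(q_{i-1})$; however, a straightforward induction using \itemref{lem:rnna-props}{lem:rnna-props:2} gives $\supp(q_{i-1}) \seq \supp(q_0) \cup \{a_j \compr \sigma_j=\newletter{a_j},\, j<i\}$, and the choice and pairwise distinctness of the $a_k$ preclude $a_i$ from lying in this set. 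Consequently $w' = \newletter{a_1}\newletter{a_2}\newletter{a_3}\cdots$. For any fixed $b \in \names\setminus\supp(q_0)$, a routine induction on prefix length -- applying the $\alpha$-rule of~\autoref{def:alpha-fin} at the outermost binder and then using left congruence from~\autoref{rem:alpheq-technical} to recurse into the suffix -- establishes $w' =_\alpha \newletter{b}\newletter{b}\newletter{b}\cdots$. Hence $b^\omega = \ub(\newletter{b}\newletter{b}\newletter{b}\cdots) \in D_\omega(A) = L$, contradicting $b^\omega \notin L$. The main work lies in the forced-binding claim; the subsequent $\alpha$-equivalence step is then mechanical.
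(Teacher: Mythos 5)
Your part~(1) is correct but follows a genuinely different route from the paper. The paper first uses closure of nDMA-recognizable languages under complement and shows that $\ol{K}$ (every name occurs only finitely often) is not nDMA-recognizable, by repeatedly factoring an accepted word and applying a transposition fresh for the state reached, so that a single rejected word is forced to be accepted. You instead argue directly on $K$: after finding $i<j$ with $r_j=\pi\cdot r_i$ you pump the block $a_{i+1}\cdots a_j$ in two ways, once so that some name recurs infinitely often (word in $K$) and once with completely fresh renamings that agree with the relevant permutation on $\supp(r_i)$ (word not in $K$), and observe that both runs visit exactly the orbits of $r_{i+1},\ldots,r_j$ infinitely often, which no Muller condition on orbits can separate. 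This works: the block letters lie outside $\supp(r_i)$, so the fresh variant exists, and the recurring variant is safest realized by translating the block by the powers $\pi,\pi^2,\ldots$ (since $\pi$ has finite order, block names recur), rather than forcing a pre-chosen name into every block --- that name could enter and remain in the support of the reached states, and injectivity would then block its reuse. Your approach buys independence from complement-closure of nDMAs; the paper's is shorter because complementation lets it manipulate a single word. The positive half of part~(2) is the paper's construction up to an inessential change of the flag and of the Muller family, and is fine.

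In the negative half of part~(2) there is a genuine gap at the first step: from $u=a_1a_2a_3\cdots\in L=D_\omega(A)$ you conclude that there is a bar string $w'$ with an accepting run in $A$ and $\ub(w')=u$. By definition, $D_\omega(A)=\{\ub(x)\compr [x]_\alpha\in L_{\alpha,\omega}(A)\}$, so you only obtain some $x$ with $\ub(x)=u$ that is $\alpha$-equivalent to a literally accepted bar string; $x$ itself need not be literally accepted, because $L_{0,\omega}(A)$ is in general not closed under $\alpha$-equivalence. Concretely, for the automaton at the beginning of \autoref{sec:name-dropping} the data word $a^\omega$ lies in its data $\omega$-language (via $\newletter a\newletter a\newletter a\cdots=_\alpha \newletter a\newletter b\newletter b\cdots$), yet no literally accepted bar string has $\ub$-image $a^\omega$. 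Hence your forced-binding analysis, which needs an actual run for $w'$, does not get off the ground as stated. The paper closes exactly this hole by first invoking \autoref{thm:ncd_buechi_rnna} to assume w.l.o.g.\ that $L_{0,\omega}(A)$ is closed under $\alpha$-equivalence; with that one sentence added, your argument coincides with the paper's (the support induction via \autoref{lem:rnna-props} and the $\alpha$-collapse to $\newletter b\newletter b\newletter b\cdots$ are the paper's steps, and your outside-in renaming needs only the standard care with the freshness side condition). Alternatively, you can avoid name-dropping: free names are determined prefix-wise and are $\alpha$-invariant on finite bar strings, so $\FN(x)=\FN(v')\seq\supp(q_0)$ for the literally accepted representative $v'=_\alpha x$ by \autoref{lem:rnna-props}; since the $a_i$ are pairwise distinct and avoid $\supp(q_0)$, already $x$ can contain no un-barred letter, and the contradiction is reached by working with $[x]_\alpha$ directly.
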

\begin{proof}
  \begin{enumerate}
  \item We have seen in \autoref{ex:rnna} that the language $K$ is
    accepted by a Büchi RNNA. We claim that $K$ is not accepted by any
    nDMA. Since the class of languages accepted by nDMA is closed
    under complement, it suffices to show that the
    language
    \[\ol{K}=\{ w\in \names^\omega\compr \text{each $a\in \names$ occurs
        only finitely often in $w$} \}\] is not accepted by any
    nDMA. Suppose towards a contradiction that $A=(Q,\delta,q_0,\F)$ is an
    nDMA accepting $\ol{K}$. Let $m$ be the maximum of all
    $|\supp(q)|$ where $q\in Q$.  Fix $m+1$ pairwise distinct names
    $a_1,\ldots, a_{m+1}\in \names$ and an arbitrary word
    $w_0\in \ol{K}$.

    Choose a factorization $w_0=v_1w_1$
    ($v_1\in \names^*, w_1\in \names^\omega)$ such that all
    occurrences of $a_1,\ldots, a_{m+1}$ in $w_0$ lie in the finite
    prefix $v_1$. Let $q_1$ be the state reached from $q_0$ on
    input~$v_1$.  Then $\supp(q_1)$ does not contain all of the names
    $a_1,\ldots, a_{m+1}$, say $a_i\fresh q_1$. Choose any name
    $a\in \names$ occurring in $w_1$ such that $a\# q_1$. Then
    $q_1=(a\, a_i)q_1$ accepts $(a\, a_i)w_1$ since by
    equivariance the run of $(a\, a_i)w_1$ from $q_1$ visits the
    same orbits as the run of $w_1$.

    Now repeat the same process with $q_1$ and $(a\,a_i)w_1$ in lieu
    of $q_0$ and $w_0$, and let \mbox{$(a\, a_i)w_1 = v_2w_2$} denote the
    corresponding factorization; note that $v_2$ is nonempty because
    $(a\,a_i)w_1$ contains the name $a_i$. Continuing in this fashion
    yields an infinite word $v=v_1v_2v_3\ldots$ such that each $v_i$
    $(i>1)$ contains at least one of the letters
    $a_1,\ldots, a_{m+1}$, and the run of~$v$ traverses the same
    orbits (in the same order) as the run of $w$. Thus $v$ is accepted
    by the nDMA $A$ although $v\not\in \ol{K}$, a contradiction.
    
  \item The language $L$ is accepted by the nDMA with states
    $\{q_0\} \cup \names\times \{0,1\}$ and transitions as displayed
    below, where $a,b$ range over distinct names in $\names$. The
    acceptance condition is given by
    $\F=\{\{\names\times \{0\}, \names\times \{1\}\}\}$.
    \begin{center}
      \begin{tikzpicture}[align=center,node distance=2cm, state/.style={circle, draw, minimum size=.85cm, inner sep=0pt}] 
        \node[state,initial] (q0) {$q_0$};
        \node[state, right of=q0] (a0) {$(a,0)$};
        \node[state, right of=a0] (b1) {$(b,1)$};
        \draw
        (q0) edge[above] node{$a$} (a0)
        (a0) edge[loop above] node{$a$} (a0)
        (a0) edge[above, bend left] node{$b$} (b1)
        (b1) edge[loop above] node{$b$} (b1)
        (b1) edge[below, bend left] node{$a$} (a0);
      \end{tikzpicture}
    \end{center}
    We claim that $L$ is not accepted by any Büchi RNNA. Suppose to
    the contrary that $A=(Q,R,q_0,F)$ is a Büchi RNNA with
    $D_\omega(A)=L$. By \autoref{thm:ncd_buechi_rnna} we may assume
    that $L_{0,\omega}(A)$ is closed under $\alpha$-equivalence.
    Fix an arbitrary word $w=a_1a_2a_3\cdots$ whose names are pairwise
    distinct and not contained in $\supp(q_0)$. Then $w\in L$, so
    there exists $v\in L_{0,\omega}(A)$ such that $\ub(v)=w$. We claim
    that $v=\newletter a_1\newletter a_2\newletter a_3\cdots$. Indeed,
    if the~$n$th letter of $v$ is $a_n$, then in an accepting run for
    $v$ in $A$ the state $q$ reached before reading~$a_n$ must have
    $a_n\in \supp(q)$ by
    \autoref{lem:rnna-props}.\ref{lem:rnna-props:3}. But this is
    impossible because
    $\supp(q)\seq \supp(q_0)\cup \{a_1,\ldots, a_{n-1}\}$ again by
    \autoref{lem:rnna-props}.

    Thus $v=_\alpha \newletter a\newletter a\newletter a\cdots$. This
    implies $a^\omega\in D_\omega(A)$ although $a^\omega\not\in L$, a
    contradiction.\qedhere
  \end{enumerate}
\end{proof}

Let us note that the above result does not originate in weakness of the Büchi acceptance condition. One may generalize Büchi RNNA to \emph{Muller RNNA} where the final states $F\seq Q$ are replaced
by a set $\F\seq \Pow(\orb(Q))$, and a bar string $w\in \barA^\omega$ is said to be
\emph{accepted} if there exists a run for $w$ such
that the set of orbits visited infinitely often lies in $\F$. However, as for
classical nondeterministic Büchi and Muller automata, this does 
not affect expressivity:
\begin{proposition}\label{prop:muller-buchi-equiv}
  A literal $\omega$-language is accepted by a Büchi RNNA if and only
  if it is accepted by a Muller RNNA.
\end{proposition}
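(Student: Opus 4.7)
The plan is to prove the equivalence by adapting the classical translations between nondeterministic Büchi and Muller automata to the nominal RNNA setting. The ``only if'' direction is immediate: given a Büchi RNNA $A = (Q, R, q_0, F)$, the final-state set $F$ is equivariant and hence a union of full orbits, and defining the Muller condition
\[
\F = \{\mathcal{S} \seq \orb(Q) \compr \mathcal{S} \text{ contains some orbit } \calO \seq F\},
\]
the finiteness of $\orb(Q)$ together with the pigeonhole principle yields that a run visits $F$ infinitely often if and only if the set of orbits it visits infinitely often lies in $\F$. Hence $(Q, R, q_0, \F)$ is a Muller RNNA accepting the same literal $\omega$-language as $A$.

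For the ``if'' direction, given a Muller RNNA $A = (Q, R, q_0, \F)$, I would apply the classical flag construction that replaces a Muller condition by a Büchi condition through nondeterministic guessing. For each accepting set $\mathcal{S} \in \F$, enumerate its orbits as $\mathcal{S} = \{\calO_1^\mathcal{S}, \ldots, \calO_{k_\mathcal{S}}^\mathcal{S}\}$, and introduce a dedicated sub-automaton tracking a cyclic counter $i \in \{1, \ldots, k_\mathcal{S}\}$ together with a Boolean wrap-flag. The state set of the new Büchi RNNA $A'$ is
\[
Q' \;=\; Q \;+\; \sum_{\mathcal{S} \in \F} \bigl(\bigcup \mathcal{S}\bigr) \times \{1, \ldots, k_\mathcal{S}\} \times \{0, 1\},
\]
with the trivial permutation action on the two finite factors. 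The transitions replicate those of $A$ on the $Q$-summand; for each $q \xto{\sigma} q'$ in $A$ with $q' \in \bigcup \mathcal{S}$ they include a branching $q \xto{\sigma} (q', 1, 0)$ into the $\mathcal{S}$-sub-automaton; and inside each $\mathcal{S}$-sub-automaton they lift those transitions of $A$ that stay within $\bigcup \mathcal{S}$, advancing the counter from $i$ to $(i \bmod k_\mathcal{S}) + 1$ exactly when the next state lies in orbit $\calO_i^\mathcal{S}$, and raising the wrap-flag on precisely the transition that carries the counter from $k_\mathcal{S}$ back to~$1$. The Büchi-final states are the states of the form $(q, i, 1)$.

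The main obstacle, and the key technical step, will be to verify both that $A'$ is an RNNA and that its literal $\omega$-language coincides with that of $A$. Orbit-finiteness of $Q'$ is clear, since all added factors are finite with trivial group action. Equivariance, $\alpha$-invariance and finite branching up to $\alpha$-invariance then transfer directly from $A$ to $A'$: permutations preserve orbit membership as well as the counter and wrap-flag components, so the additional bookkeeping interacts trivially with the nominal structure (using also that $\bigcup \mathcal{S}$ is equivariant as a union of orbits, and that $\alpha$-equivalent bound successors lie in the same orbit, so the counter update is preserved under $\alpha$-renaming). For language equivalence, any Muller-accepting run of $A$ whose infinitely-often-visited orbits form some $\mathcal{S} \in \F$ lifts to a Büchi-accepting run of $A'$ by branching into the $\mathcal{S}$-sub-automaton past a point beyond which no orbit outside $\mathcal{S}$ is ever visited; the counter then wraps infinitely often. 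Conversely, any Büchi-accepting run of $A'$ eventually commits to an $\mathcal{S}$-sub-automaton and thereafter stays in $\bigcup \mathcal{S}$, and the infinitely many wrap events force every orbit of $\mathcal{S}$ to be visited infinitely often; projecting back to $Q$ yields a Muller-accepting run of $A$.
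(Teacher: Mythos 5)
Your proposal is correct, and its overall strategy coincides with the paper's: the Büchi-to-Muller direction is the same one-liner (take the accepting families to be those containing an orbit inside $F$, using orbit-finiteness and equivariance of $F$), and the Muller-to-Büchi direction follows the same scheme of nondeterministically guessing an accepting set $\mathcal{S}\in\F$, jumping into a copy of the automaton restricted to $\bigcup\mathcal{S}$, and attaching finite bookkeeping data with trivial group action to turn ``every orbit of $\mathcal{S}$ recurs'' into a Büchi condition. The only genuine difference is the gadget: the paper accumulates the subset of orbits of $F_i$ seen since the last reset (third component $\Pow(F_i)$, final states those where the subset is full), whereas you run a cyclic counter through a fixed enumeration of the orbits of $\mathcal{S}$ plus a wrap flag. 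Both are classical and both are compatible with the nominal structure for the reason you give (the extra components carry the trivial action, orbit membership is permutation-invariant, and $\braket{a}q'=\braket{b}q''$ forces $q',q''$ into the same orbit, so $\alpha$-invariance and finite branching lift); your version has the minor advantage of a state blow-up linear rather than exponential in $|\mathcal{S}|$, which is irrelevant for the expressivity statement being proved. One point you should make explicit: the wrap flag must be $1$ exactly at targets of wrap transitions (i.e.\ it is reset to $0$ on every non-wrap step), as your word ``precisely'' suggests; if the flag were latched after the first wrap, a run could visit final states forever while abandoning some orbit of $\mathcal{S}$, and the backward direction of language equivalence would fail.
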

Finally, we mention a tight connection between Büchi RNNA and
\emph{session automata}~\cite{BolligEA14}. The data
($\omega$-)language associated to a (Büchi) RNNA uses a \emph{local
  freshness} semantics in the sense that its definition considers
possibly non-clean bar strings accepted by $A$. In some applications,
e.g.~nonce generation, a more suitable semantics is given by
\emph{global freshness} where only clean bar strings are admitted,
i.e.~one associates to $A$ the data languages
\begin{align*}
  D_{\#}(A)
  &= \{ \ub(w) \compr \text{$w\in \barA^*$ clean and $[w]_\alpha\in
    L_{\alpha}(A)$} \},
  \\
  D_{\omega,\#}(A)
  &= \{ \ub(w) \compr \text{$w\in \barA^\omega$ clean and $[w]_\alpha\in L_{\alpha,\omega}(A)$} \}. 
\end{align*}
For instance, for the Büchi RNNA $A$ from \autoref{ex:rnna} the
language $D_{\omega,\#}(A)$ consists of all infinite words
$w\in \names^\omega$ where exactly one name $a\in \names$ occurs
infinitely often and every name~$b\neq a$ occurs at most once.

Under global freshness semantics, it has been observed in previous
work~\cite{SchroderEA17} that a data language $L\seq \names^*$ is
accepted by some session automaton iff $L=D_{\#}(A)$ for some RNNA $A$
whose initial state $q_0$ has empty support. An analogous correspondence holds for Büchi
RNNA and data $\omega$-languages $L\seq \names^\omega$ if the original
notion of session automata is generalized to infinite words with a
Büchi acceptance condition.
\section{Conclusions and Future Work}

\noindent We have introduced \emph{B\"uchi regular nondeterministic
  nominal automata} (B\"uchi RNNAs), an automaton model for languages
of infinite words over infinite alphabets. B\"uchi RNNAs allow for
inclusion checking in elementary complexity (parametrized
polynomial space) despite the fact that they feature full
nondeterminism and do not restrict the number of registers
(contrastingly, even for register automata over finite
words~\cite{KaminskiFrancez94}, inclusion checking becomes decidable
only if the number of registers is bounded to at most~$2$).

 An
important further step will be to establish a logic-automata
correspondence of B\"uchi RNNAs with a suitable form of linear
temporal logic on infinite data words.

A natural direction for generalization is to investigate RNNAs over infinite trees with binders, modeled as coalgebras of type $\Pow_\ufs F$ for a functor $F$ associated to a binding signature and equipped with the ensuing notion of $\alpha$-equivalence due to Kurz et al.~\cite{kpsv13}. 

Finally, we would like to explore the bar language and data language semantics of Büchi RNNA from the perspective of coalgebraic trace semantics~\cite{ush16}, where infinite behaviours emerge as solutions of (nested) fixed point equations in Kleisli categories.

\bibliographystyle{plainurl}
\bibliography{coalgml}

\providecommand{\noopsort}[1]{}
\begin{thebibliography}{10}

\bibitem{barendregt85}
Hendrik~Pieter Barendregt.
\newblock {\em The lambda calculus - its syntax and semantics}, volume 103 of
  {\em Studies in logic and the foundations of mathematics}.
\newblock North-Holland, 1985.

\bibitem{BojanczykEA11}
Miko{\l}aj Boja{\'n}czyk, Claire David, Anca Muscholl, Thomas Schwentick, and
  Luc Segoufin.
\newblock Two-variable logic on data words.
\newblock {\em {ACM} Trans.\ Comput.\ Log.}, 12(4):27:1--27:26, 2011.
\newblock \href {http://dx.doi.org/10.1145/1970398.1970403}
  {\path{doi:10.1145/1970398.1970403}}.

\bibitem{BojanczykEA14}
Miko{\l}aj Boja{\'n}czyk, Bartek Klin, and S{\l}awomir Lasota.
\newblock Automata theory in nominal sets.
\newblock {\em Log.\ Methods Comput.\ Sci.}, 10(3), 2014.
\newblock \href {http://dx.doi.org/10.2168/LMCS-10(3:4)2014}
  {\path{doi:10.2168/LMCS-10(3:4)2014}}.

\bibitem{BolligEA14}
Benedikt Bollig, Peter Habermehl, Martin Leucker, and Benjamin Monmege.
\newblock A robust class of data languages and an application to learning.
\newblock {\em Log.\ Meth.\ Comput.\ Sci.}, 10(4), 2014.
\newblock \href {http://dx.doi.org/10.2168/LMCS-10(4:19)2014}
  {\path{doi:10.2168/LMCS-10(4:19)2014}}.

\bibitem{CianciaSammartino14}
Vincenzo Ciancia and Matteo Sammartino.
\newblock A class of automata for the verification of infinite,
  resource-allocating behaviours.
\newblock In {\em Trustworthy Global Computing, {TGC} 2014}, volume 8902 of
  {\em LNCS}, pages 97--111. Springer, 2014.
\newblock \href {http://dx.doi.org/10.1007/978-3-662-45917-1_7}
  {\path{doi:10.1007/978-3-662-45917-1_7}}.

\bibitem{ColcombetManuel14}
Thomas Colcombet and Amaldev Manuel.
\newblock Generalized data automata and fixpoint logic.
\newblock In {\em Foundation of Software Technology and Theoretical Computer
  Science, {FSTTCS} 2014}, volume~29 of {\em LIPIcs}, pages 267--278. Schloss
  Dagstuhl -- Leibniz-Zentrum f{\"{u}}r Informatik, 2014.
\newblock \href {http://dx.doi.org/10.4230/LIPIcs.FSTTCS.2014.267}
  {\path{doi:10.4230/LIPIcs.FSTTCS.2014.267}}.

\bibitem{CzerwinskiEA21}
Wojciech Czerwi{\'n}ski, S{\l}awomir Lasota, Ranko Lazi{\'c},
  J{\'{e}}r{\^{o}}me Leroux, and Filip Mazowiecki.
\newblock The reachability problem for {P}etri nets is not elementary.
\newblock {\em J.\ {ACM}}, 68(1):7:1--7:28, 2021.
\newblock \href {http://dx.doi.org/10.1145/3422822}
  {\path{doi:10.1145/3422822}}.

\bibitem{CzerwinskiOrlikowski21}
Wojciech Czerwi{\'n}ski and Lukasz Orlikowski.
\newblock Reachability in vector addition systems is {A}ckermann-complete.
\newblock {\em CoRR}, 2021.
\newblock \href {http://arxiv.org/abs/2104.13866} {\path{arXiv:2104.13866}}.

\bibitem{DemriLazic09}
St{\'{e}}phane Demri and Ranko Lazi{\'c}.
\newblock {LTL} with the freeze quantifier and register automata.
\newblock {\em {ACM} Trans.\ Comput.\ Log.}, 10(3):16:1--16:30, 2009.
\newblock \href {http://dx.doi.org/10.1145/1507244.1507246}
  {\path{doi:10.1145/1507244.1507246}}.

\bibitem{gabbay2011}
Murdoch~J. Gabbay.
\newblock Foundations of nominal techniques: logic and semantics of variables
  in abstract syntax.
\newblock {\em Bull.\ Symb.\ Log.}, 17(2):161--229, 2011.
\newblock \href {http://dx.doi.org/10.2178/bsl/1305810911}
  {\path{doi:10.2178/bsl/1305810911}}.

\bibitem{GabbayCiancia11}
Murdoch~J. Gabbay and Vincenzo Ciancia.
\newblock Freshness and name-restriction in sets of traces with names.
\newblock In {\em Foundations of Software Science and Computation Structures,
  {FOSSACS} 2011}, volume 6604 of {\em LNCS}, pages 365--380. Springer, 2011.
\newblock \href {http://dx.doi.org/10.1007/978-3-642-19805-2}
  {\path{doi:10.1007/978-3-642-19805-2}}.

\bibitem{GabbayPitts99}
Murdoch~J. Gabbay and Andrew~M. Pitts.
\newblock A new approach to abstract syntax involving binders.
\newblock In {\em Logic in Computer Science, LICS 1999}, pages 214--224. {IEEE}
  Computer Society, 1999.
\newblock \href {http://dx.doi.org/10.1109/LICS.1999.782617}
  {\path{doi:10.1109/LICS.1999.782617}}.

\bibitem{GraedelThomasWilke02}
Erich Gr{\"{a}}del, Wolfgang Thomas, and Thomas Wilke, editors.
\newblock {\em Automata, Logics, and Infinite Games: {A} Guide to Current
  Research}, volume 2500 of {\em LNCS}. Springer, 2002.
\newblock \href {http://dx.doi.org/10.1007/3-540-36387-4}
  {\path{doi:10.1007/3-540-36387-4}}.

\bibitem{GrigoreEA13}
Radu Grigore, Dino Distefano, Rasmus Petersen, and Nikos Tzevelekos.
\newblock Runtime verification based on register automata.
\newblock In {\em Tools and Algorithms for the Construction and Analysis of
  Systems, {TACAS} 2013}, volume 7795 of {\em LNCS}, pages 260--276. Springer,
  2013.
\newblock \href {http://dx.doi.org/10.1007/978-3-642-36742-7_19}
  {\path{doi:10.1007/978-3-642-36742-7_19}}.

\bibitem{GrumbergEA10}
Orna Grumberg, Orna Kupferman, and Sarai Sheinvald.
\newblock Variable automata over infinite alphabets.
\newblock In {\em Language and Automata Theory and Applications, {LATA} 2010},
  volume 6031 of {\em LNCS}, pages 561--572. Springer, 2010.
\newblock \href {http://dx.doi.org/10.1007/978-3-642-13089-2}
  {\path{doi:10.1007/978-3-642-13089-2}}.

\bibitem{KaminskiFrancez94}
Michael Kaminski and Nissim Francez.
\newblock Finite-memory automata.
\newblock {\em Theor.\ Comput.\ Sci.}, 134(2):329--363, 1994.
\newblock \href {http://dx.doi.org/10.1016/0304-3975(94)90242-9}
  {\path{doi:10.1016/0304-3975(94)90242-9}}.

\bibitem{KaraEA12}
Ahmet Kara, Thomas Schwentick, and Tony Tan.
\newblock Feasible automata for two-variable logic with successor on data
  words.
\newblock In {\em Language and Automata Theory and Applications, {LATA} 2012},
  volume 7183 of {\em LNCS}, pages 351--362. Springer, 2012.
\newblock \href {http://dx.doi.org/10.1007/978-3-642-28332-1_30}
  {\path{doi:10.1007/978-3-642-28332-1_30}}.

\bibitem{kv96}
Orna Kupferman and Moshe~Y. Vardi.
\newblock Verification of fair transisiton systems.
\newblock In {\em Computer Aided Verification, {CAV} 1996}, volume 1102 of {\em
  LNCS}, pages 372--382. Springer, 1996.
\newblock \href {http://dx.doi.org/10.1007/3-540-61474-5_84}
  {\path{doi:10.1007/3-540-61474-5_84}}.

\bibitem{KurtzEA07}
Klaas K{\"{u}}rtz, Ralf K{\"{u}}sters, and Thomas Wilke.
\newblock Selecting theories and nonce generation for recursive protocols.
\newblock In {\em Formal methods in security engineering, {FMSE} 2007}, pages
  61--70. {ACM}, 2007.
\newblock \href {http://dx.doi.org/10.1145/1314436.1314445}
  {\path{doi:10.1145/1314436.1314445}}.

\bibitem{kpsv13}
Alexander Kurz, Daniela Petrisan, Paula Severi, and Fer{-}Jan de~Vries.
\newblock Nominal coalgebraic data types with applications to lambda calculus.
\newblock {\em Log.\ Methods Comput.\ Sci.}, 9(4), 2013.
\newblock \href {http://dx.doi.org/10.2168/LMCS-9(4:20)2013}
  {\path{doi:10.2168/LMCS-9(4:20)2013}}.

\bibitem{Lazic06}
Ranko Lazi{\'c}.
\newblock Safely freezing {LTL}.
\newblock In {\em Foundations of Software Technology and Theoretical Computer
  Science, {FSTTCS} 2006}, volume 4337 of {\em LNCS}, pages 381--392. Springer,
  2006.
\newblock \href {http://dx.doi.org/10.1007/11944836_35}
  {\path{doi:10.1007/11944836_35}}.

\bibitem{Leroux21}
J{\'{e}}r{\^{o}}me Leroux.
\newblock The reachability problem for {P}etri nets is not primitive recursive.
\newblock {\em CoRR}, 2021.
\newblock \href {http://arxiv.org/abs/2104.12695} {\path{arXiv:2104.12695}}.

\bibitem{mu19}
Stefan Milius and Henning Urbat.
\newblock Equational axiomatization of algebras with structure.
\newblock In {\em Foundations of Software Science and Computation Structures,
  {FOSSACS} 2019}, volume 11425 of {\em LNCS}, pages 400--417. Springer, 2019.
\newblock {F}ull version available at \url{https://arxiv.org/abs/1812.02016}.
\newblock \href {http://dx.doi.org/10.1007/978-3-030-17127-8_23}
  {\path{doi:10.1007/978-3-030-17127-8_23}}.

\bibitem{petrisan11}
Daniela Petri\c{s}an.
\newblock {\em Investigations into Algebra and Topology over Nominal Sets}.
\newblock PhD thesis, University of Leicester, 2011.

\bibitem{Pitts13}
Andrew Pitts.
\newblock {\em Nominal Sets: Names and Symmetry in Computer Science}.
\newblock Cambridge University Press, 2013.

\bibitem{SchroderEA17}
Lutz Schr{\"{o}}der, Dexter Kozen, Stefan Milius, and Thorsten Wi{\ss}mann.
\newblock Nominal automata with name binding.
\newblock In {\em Foundations of Software Science and Computation Structures,
  {FOSSACS} 2017}, volume 10203 of {\em LNCS}, pages 124--142, 2017.
\newblock \href {http://dx.doi.org/10.1007/978-3-662-54458-7_8}
  {\path{doi:10.1007/978-3-662-54458-7_8}}.

\bibitem{StockhusenTantau13}
Christoph Stockhusen and Till Tantau.
\newblock Completeness results for parameterized space classes.
\newblock In {\em Parameterized and Exact Computation, {IPEC} 2013}, volume
  8246 of {\em LNCS}, pages 335--347. Springer, 2013.
\newblock \href {http://dx.doi.org/10.1007/978-3-319-03898-8}
  {\path{doi:10.1007/978-3-319-03898-8}}.

\bibitem{tzevelekos08}
Nikos Tzevelekos.
\newblock {\em Nominal Game Semantics}.
\newblock PhD thesis, University of Oxford, 2008.

\bibitem{ush16}
Natsuki Urabe, Shunsuke Shimizu, and Ichiro Hasuo.
\newblock Coalgebraic trace semantics for buechi and parity automata.
\newblock In {\em Concurrency Theory, {CONCUR} 2016}, volume~59 of {\em
  LIPIcs}, pages 24:1--24:15. Schloss Dagstuhl -- Leibniz-Zentrum f{\"{u}}r
  Informatik, 2016.
\newblock \href {http://dx.doi.org/10.4230/LIPIcs.CONCUR.2016.24}
  {\path{doi:10.4230/LIPIcs.CONCUR.2016.24}}.

\end{thebibliography}

\clearpage
\appendix

\section*{Appendix: Omitted Proofs}
This appendix provides all proofs omitted for space reasons.

\section{Details for \autoref{sec:alpha-equiv}}

\section*{Proof of \autoref{lem:alpha-equiv-coalgebraic}}
  The functor $F=\names\times (\dash) + [\names](\dash)$ preserves
  limits of $\omega^\op$-chains since such limits commute with
  $\times$ and $+$ (like in $\Set$) and the functor
  $[\names](\dash)$ is a right adjoint~\cite[Thm. 4.12]{Pitts13}, i.e.~preserves all limits. Therefore, the final
  coalgebra $\nu F$ arises as the limit of the final
  $\omega^\op$-chain of $F$:
  \[
    \nu F = \lim_{n\in \Nat} F^n 1.
  \]
  We prove by induction that
  \[
    F^n1 \cong \barA^n/{=_\alpha}\qquad\text{for every $n\in \Nat$},
  \]
  where $=_\alpha$ is $\alpha$-equivalence restricted to bar strings
  of length $n$.  The claim is obvious for $n=0$. For the induction
  step $n\to n+1$ we compute
  \[
    F^{n+1}1
    =
    F(F^n1)\cong F(\barA^n/{=_\alpha})
    =
    \names\times (\barA^n/{=_\alpha}) + [\names](\barA^n/{=_\alpha})
    \cong \barA^{n+1}/{=_\alpha}.
  \]
  Here, the first isomorphism uses the induction hypothesis and the
  second one is given by
  \[
    (a,[w]_\alpha) \mapsto  [aw]_\alpha
    \qquad\text{and}\qquad
    \braket{a} [w]_\alpha \mapsto [\newletter aw]_\alpha
    \qquad\text{for $a\in \names$ and $w\in \barA^n$.}
  \]
  This is clearly well-defined 
  according to the definition of $\alpha$-equivalence for finite
  words.

  Finally, we observe that the canonical map
  $e_{\alpha,n}\colon \nu G\to F^n1 \cong \barA^n/{=_\alpha}$ sends
  $w\in \barA^\omega_\fs$ to the $\alpha$-equivalence class of
  $w_n$ Since $e_\alpha$ merges two words iff each $e_{\alpha,n}$
  does, we are done. 

\section{Details for \autoref{sec:rnna}}

\section*{Details for \autoref{lem:restrict_num_names}}

  Put $m:=\degree(A)$. 

\begin{enumerate} 
\item Two finite runs \[ r_0\xto{\sigma_1} r_1\xto{\sigma_2} r_2\xto{\sigma_3}\cdots \xto{\sigma_n} r_n\quad\text{and}\quad r_0'\xto{\sigma_1'} r_1'\xto{\sigma_2'} r_2'\xto{\sigma_3'}\cdots \xto{\sigma_n'} r_n'  \]
in $A$ of the same length are called \emph{equivalent} if for each $i=0,\ldots, n$ the states $r_i$ and $r_i'$ lie in the same orbit. 
We claim that

\medskip\noindent \textbf{($\ast$)} For every $q\in Q$ and $w\in \barA^*$, if there exists a run of $w$ starting in $q$ then there exists an equivalent run starting in $q$ whose label $w'\in \barA^*$ satisfies
  \[
    w'=_\alpha w\qquad\text{and}\qquad |\supp (q) \cup \Names{w'}|\leq
    \degree(A)+1.
  \]
  The proof of ($\ast$) is by induction on the length of $w$. The base case
  $w=\epsilon$ is obvious.

  \noindent For the induction step, suppose $w=\sigma v$
  where $\sigma\in \barA$ and $v\in \barA^*$, and let $q\xto{\sigma}r$ be the
   first transition of a run of $w$ from $q$. Then $v$ has a run from $r$, so by induction there exists $v'\in \barA^*$ with an equivalent run from $r$ such 
  that $v'=_\alpha v$ and $|\supp(r)\cup \Names{v'}|\leq m+1$. Choose
  $m+1$ pairwise distinct names $a_1,\ldots,a_{m+1}$ such
  that \[\supp(r)\cup \Names{v'}\seq \{a_1,\ldots, a_{m+1}\}.\] We have $\sigma=a$ or $\sigma=\newletter a$ for some $a\in \names$ and consider the two cases separately:
  
  \smallskip\noindent\textbf{Case 1: $\sigma=a$}. Choose $m+1$ pairwise
  distinct names $b_1,\ldots,b_{m+1}$ such that
  \[
    \supp(q)\seq \{b_1,\ldots,b_{m+1}\},
  \]
  Since $q\xto{a} r$ we have $\supp(r)\seq \supp(q)$ by
  \autoref{lem:rnna-props}.\ref{lem:rnna-props:1}. Extend the inclusion map
  $\supp(r)\monoto \supp(q)$ to a bijection
  $\pi\colon \{a_1,\ldots,a_{m+1}\}\to \{b_1,\ldots,b_{m+1}\}$. Then $\pi\o v'$ has a run from
  $\pi\cdot r = r$ by equivariance, and
  $\pi\cdot v'=_\alpha v'$ because $\pi$ fixes all free names of $v'$
  (which are contained in $\supp(r)$ by \autoref{lem:rnna-props}.\ref{lem:rnna-props:3}). Put
  \[
    w' := a(\pi\cdot v').
  \]
  Then $w'$ has a run from $q$ equivalent to the given run of $w$: compose the transition $q\xto{a} r$ with the above run of $\pi\o v'$ from $r$. Moreover \[w'=_\alpha av'=_\alpha av = w\] and
  \[
    \supp(q)\cup \Names{w'} = \supp(q)\cup \Names{\pi\cdot v'}\cup
    \{a\}\seq \{b_1,\ldots, b_{m+1}\}
  \]
  using that $a\in \supp(q)$ because $q\xto{a} r$.
  
  \smallskip\noindent \textbf{Case 2: $\sigma = \newletter a$}. If $a\in \{a_1,\ldots,a_{m+1}\}$ one can choose $w':=\newletter av'$: composing the transition $q\xto{\newletter a} r$ with the given run of $v'$ from $r$ yields a run of $w'$ with the required properties. Thus suppose $a\not\in  \{a_1,\ldots,a_{m+1}\}$. Then $a\fresh r$ because $\supp(r)\seq \{a_1,\ldots,a_{m+1}\}$. Moreover, $a_i\fresh r$ for some $i$ because $|\supp(r)|\leq m$. We claim that the bar string
\[ w':= \newletter a_iv' \]
has the required properties. Indeed, we have the transition $q\xto{\scriptnew a_i} r$ in $A$ by $\alpha$-invariance because $q\xto{\scriptnew a} r$ and $\braket{a}r=\braket{a_i}r$. Composing this transition with the given run of $v'$ from $r$ yields a run of $w'$ from $q$ equivalent to the given run of $w$. Finally, note that the names $a$ and $a_i$ are not free in $v'$ because  $\FN(v')\seq \supp(r)$ by \autoref{lem:rnna-props}.\ref{lem:rnna-props:3}, so $v'=_\alpha (a\, a_i)v'$. This implies
\[ w'= \newletter a_iv' =_\alpha \newletter a (a\, a_i)v'=_\alpha \newletter av' =_\alpha \newletter av = w. \]
%
\item We are ready to prove the proposition. Let $w=\sigma_1\sigma_2\sigma_3\cdots$ and
  suppose that
  \[
    q_0 \xto{\sigma_1} q_1 \xto{\sigma_2} q_2 \xto{\sigma_3} \cdots
  \]
  is an accepting run for $w$. Choose $m+1$ pairwise distinct names
  $a_1,\ldots,a_{m+1}$ such that
  $\supp(q_0)\seq \{a_1,\ldots, a_{m+1}\}$. Consider the set of all
  partial runs
  \begin{equation}\label{eq:runpar}
    q_0 \xto{\sigma_1'} q_1' \xto{\sigma_2'} q_2' \xto{\sigma_3'} \cdots
    \xto{\sigma_n'} q_n'
  \end{equation}
  where (1)~$n\in \Nat$, (2)~$q_i'$ and $q_i$ lie in the same orbit
  for $i=1,\ldots, n$, (3)~$\sigma_1'\cdots \sigma_n' =_\alpha \sigma_1\cdots \sigma_n$, and
  (4)~$\Names{\sigma_1'\cdots \sigma_n'}\seq \{a_1,\ldots
  a_{m+1}\}$. This set organizes into a tree with the edge relation
  given by extension of runs. Thus, the nodes of depth $n$ are exactly
  the runs \eqref{eq:runpar}. By ($\ast$) in part 1 of the proof, at least one
  such run exist for each $n\in \Nat$, i.e.~the tree is infinite. It
  is finitely branching because
  $\supp(q_i')\seq \{a_1,\ldots, a_{m+1}\}$ for all $i$ by \autoref{lem:rnna-props}, i.e.~there
  are only finitely many runs \eqref{eq:runpar} for each $n$. By
  K\H{o}nig's
  lemma, the tree contains an infinite path, which yields an infinite run
  \[
    q_0 \xto{\sigma_1'} q_1' \xto{\sigma_2'} q_2' \xto{\sigma_3'}
    \cdots
  \]
  such that $q_i'$ and $q_i$ lie in the same orbit for each $i$ and
  $\sigma_1'\cdots \sigma_n'=_\alpha \sigma_1\cdots \sigma_n$ for each
  $n$. Thus, putting
  \[
    w'= \sigma_1'\sigma_2'\sigma_3'\cdots
  \]
  we see that $A$ accepts $w'$. Moreover, $w'=_\alpha w$ and
  $\supp(q_0)\cup \Names{w'}\seq \{a_1,\ldots, a_{m+1}\}$. 
\end{enumerate}

\section{Details for \autoref{sec:name-dropping}}

\section*{Proof of \autoref{prop:strong}}
  \begin{enumerate}
  \item Given a Büchi RNNA viewed as a coalgebra
    \[
      Q\xto{\gamma} 2\times \Pow_\ufs(\names\times Q)\times
      \Pow_\ufs([\names]Q),
    \]
    express the nominal set $Q$ of states as a $\supp$-nondecreasing
    quotient $e\colon P\epito Q$ of an orbit-finite strong nominal set
    $P$ using \autoref{rem:strong-nominal-sets}. Note that the type
    functor
    $F(\dash) = 2\times \Pow_\ufs(\names\times \dash)\times
    \Pow_\ufs([\names](\dash))$ preserves $\supp$-nondecreasing quotients
    because the functors
    $\Pow_\ufs(\dash), \names\times \dash$ and $[\names](\dash)$
    do. Therefore, the right vertical arrow in the diagram below is
    $\supp$-nondecreasing, and projectivity of $P$ yields an equivariant
    map $\beta$ making the diagram commute.
    \begin{equation}\label{eq:coalg-morphism}
      \begin{tikzcd}
        P \ar[d,"e"', two heads] \ar[dashed]{r}{\beta}
        &
        2\times \Pow_\ufs(\names\times P)\times \Pow_\ufs([\names]P)
        \ar[d,"2\times \Pow_\ufs(\names\times e)\times
        \Pow_\ufs({[\names]}e)", two heads]
        \\
        Q \ar[r,"\gamma"]
        &
        2\times \Pow_\ufs(\names\times Q)\times \Pow_\ufs([\names]Q) 
      \end{tikzcd}
    \end{equation}
    Thus $e$ is a coalgebra homomorphism from $(P,\beta)$ to
    $(Q,\gamma)$. The latter implies that the following three
    properties hold for each $p\in P$ and $\sigma\in \barA$:
    \begin{enumerate}[(1)]
    \item\label{hom1} If $p\xto{\sigma} r$ in $P$ then $e(p)\xto{\sigma} e(r)$ in $Q$.
    \item\label{hom2} If $e(p)\xto{\sigma} q$ in $Q$ then there exists
      $r\in P$ such that $p\xto{\sigma} r$ in $P$ and $e(r)=q$.
    \item\label{hom3} The state $p$ is final in $P$ iff $e(p)$ is final in $Q$.
    \end{enumerate}
    Indeed, clearly commutativity of the left component of
    \eqref{eq:coalg-morphism} yields~(3), and commutativity of the
    central component yields~(1) and~(2) for
    $\sigma\in \names$. It remains to show that commutativity of the right
    component yields~(1) and~(2) for $\sigma=\newletter a\in \barA$. Let
    \[\beta_r\colon P\to \Pow_\ufs([\names]P)\qquad\text{and}\qquad \gamma_r\colon Q\to \Pow_\ufs([\names]Q)\] denote the right
    components of $\beta$ and~$\gamma$.

    \smallskip\noindent Ad~(1). If $p\xto{\scriptnew a} r$ in $P$ we have
    $\braket{a} r\in \beta_r(p)$. Thus,
    $\braket{a} e(r) = [\names]e(\braket{a} r)\in \gamma_r(e(p))$
    by~\eqref{eq:coalg-morphism}. This shows $e(p)\xto{\scriptnew a} e(r)$ in
    $Q$.

    \smallskip\noindent Ad~(2). If $e(p)\xto{\scriptnew a} q$ in $Q$ we have
    $\braket{a}q\in \gamma_r(e(p))$. From \eqref{eq:coalg-morphism} it
    follows that there exists $b\in \names$ and $r\in P$ such that
    $\braket{b}r\in \beta_r(p)$ and $\braket{a}q = \braket{b}e(r)$. If
    $a=b$, this implies $p\xto{\scriptnew a}r$ in $P$ and $q=e(r)$, and we are
    done. If $a\neq b$, we have $a\fresh e(r)$ because
    $\supp(q)\setminus \{a\} = \supp(e(r))\setminus \{b\}$. Since $e$
    is $\supp$-nondecreasing, this implies $a\fresh r$. Therefore
    $\braket{a}(a\, b)r = \braket{b} r \in \beta_r(p)$,
    i.e.~$p\xto{\scriptnew a}(a\, b)r$ in $P$. Moreover, we have
    $\braket{a}q = \braket{b}e(r) = \braket{a}(a\, b)e(r) =
    \braket{a}e((a\,b)r)$ and therefore $e((a\,b)r)=q$, as required.

  \item Choose $p_0\in P$ such that $e(p_0)=q_0$. The coalgebra
    $(P,\beta)$ can be viewed as a Büchi RNNA equipped with the initial
    state $p_0$. We claim that the Büchi RNNAs $P$ and $Q$ accept the same
    literal ($\omega$-)language. In fact, this is immediate from part
    1: by~(1) and~(2) we see that for every run
    \[
      p_0 \xto{\sigma_1} p_1 \xto{\sigma_2} p_2 \xto{\sigma_3} \cdots
    \]
    in $P$ there exists a corresponding run
    \[
      q_0 \xto{\sigma_1} q_1 \xto{\sigma_2} q_2 \xto{\sigma_3} \cdots
    \]
    in $Q$ with $e(p_i)=q_i$ for all $i$, and vice versa. Moreover, by
    (3) the first run above is accepting iff the second one
    is. This concludes the proof. \qedhere
  \end{enumerate}

\section*{Details for \autoref{thm:ncd_buechi_rnna}}

\begin{lemma}
  The automaton $\tl{A}$ is an RNNA.
\end{lemma}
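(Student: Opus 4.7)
The plan is to verify each of the defining clauses of an RNNA for $\tl{A}$: orbit-finiteness of the state space, equivariance of the transition relation and the final state set, $\alpha$-invariance of the transitions, and finite branching up to $\alpha$-invariance. All five clauses are essentially routine, with the technical core being the permutation calculation needed for $\alpha$-invariance.

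First I would note that $\tl{Q} = \sum_{i\in I}\names^{\$ X_i}$ is an orbit-finite nominal set: since both~$I$ and each~$X_i$ is finite, the orbit of a state $(i,r)$ is determined by $i$ together with $\dom(r)\seq X_i$, so $\tl{Q}$ has precisely $\sum_{i\in I} 2^{|X_i|}$ orbits. Equivariance of the transition relation follows because each clause in \autoref{def:ndc} is phrased in terms of support inclusions, existence of extensions, and the transition relation of~$A$, all of which are preserved under permutations of~$\names$. The set of final states of $\tl{A}$ depends only on the component index $i$ (not on $r$), and is therefore equivariant since the set~$F$ of final states of~$A$ is.

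Next I would establish $\alpha$-invariance. Suppose $(i,r)\xto{\scriptnew a}(j,s)$ in $\tl{A}$ and $\braket{a}(j,s) = \braket{b}(j,s'')$. The case $a=b$ forces $s=s''$, so assume $a\neq b$; then $b\fresh s$ and $s'' = (a\,b)s$. The support condition is immediate: $\supp(s'') = (a\,b)\supp(s) = (\supp(s)\setminus\{a\})\cup\{b\}\seq \supp(r)\cup\{b\}$, using $b\fresh s$ for the elementwise action of $(a\,b)$. To produce the witnessing $A$-transition, take any $c\fresh s$ different from $a,b$ that works in the original transition, so that there exist $\ol{r},\ol{s}$ extending $r,(a\,c)s$ with $(i,\ol{r})\xto{\scriptnew c}(j,\ol{s})$ in $A$. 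A direct permutation computation, using $b\fresh s$ and $c\fresh s$, yields the identity $(b\,c)s'' = (b\,c)(a\,b)s = (a\,c)s$, so the very same $\ol{r},\ol{s}$ extend $r$ and $(b\,c)s''$. This exhibits $(i,r)\xto{\scriptnew b}(j,s'')$ as required. This permutation calculation is the main obstacle, but it is elementary once the fresh name $c$ is chosen to avoid $\{a,b\}$.

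Finally, I would verify finite branching up to $\alpha$-invariance. For $\{(a,(j,s)) \compr (i,r)\xto{a}(j,s)\}$, the defining condition $\supp(s)\cup\{a\}\seq \supp(r)$ forces every such pair to lie in the finite set $\supp(r)\times \sum_{j\in I}\{t\in \names^{\$ X_j}\compr \supp(t)\seq\supp(r)\}$. For $\{\braket{a}(j,s)\compr (i,r)\xto{\scriptnew a}(j,s)\}$, the support of each $\braket{a}(j,s)$ equals $\supp(s)\setminus\{a\}\seq \supp(r)$, so the set is uniformly finitely supported in the orbit-finite nominal set $[\names]\tl{Q}$, hence finite. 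This completes verification that $\tl{A}$ is an RNNA.
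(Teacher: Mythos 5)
Your proof is correct and follows essentially the same route as the paper's: equivariance of the transitions, $\alpha$-invariance via the same kind of permutation identity (your $(b\,c)(a\,b)s=(a\,c)s$ matches the paper's check that the given $\ol{s}$ extends the renamed register map), and finite branching via support bounds, with your appeal to uniform finite support in the orbit-finite set $[\names]\tl{Q}$ being only a cosmetic shortcut over the paper's explicit counting. Two tiny caveats: the assertion that some witness $c\fresh s$ with $c\notin\{a,b\}$ ``works in the original transition'' itself requires renaming the given witness using $\alpha$-invariance of $A$ (which the paper invokes explicitly), and your equality $\supp(s'')=(\supp(s)\setminus\{a\})\cup\{b\}$ is in general only an inclusion when $a\fresh s$ --- but the inclusion $\supp(s'')\subseteq\supp(r)\cup\{b\}$ that you actually need still holds.
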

\begin{proof}
  \begin{enumerate}
  \item Transitions are equivariant: Let $\pi\in \Perm(\names)$ and
    $a\in \names$.

    Suppose that $(i,r)\xto{a} (j,s)$ in $\tl{A}$. Then
    $(i,\ol{r})\xto{a} (j,\ol{s})$ in $A$ for some $\ol{r}, \ol{s}$
    extending $r,s$. By equivariance of transitions in $A$ we get
    $(i,\pi\cdot \ol{r})\xto{\pi(a)} (j,\pi\cdot \ol{s})$ in $A$. Then
    $\pi\cdot \ol{r},\pi\cdot \ol{s}$ extends
    $\pi\cdot {r}, \pi \cdot s$ and moreover
    \[
      \supp(\pi\cdot s)\cup \{\pi(a)\}
      =
      \pi(\supp(s)\cup \{a\})
      \seq
      \pi(\supp(r))
      =
      \supp(\pi\cdot r).
    \]
    Thus $(i,\pi\cdot {r})\xto{\pi(a)} (j,\pi\cdot {s})$ in $\tl{A}$.

    Similarly, if $(i,r)\xto{\scriptnew a} (j,s)$ in $\tl{A}$ we have
    $(i,\ol{r})\xto{\scriptnew b} (j,\ol{s})$ in $A$ for some $b\fresh s$ and
    $\ol{r}, \ol{s}$ extending $r, (a\, b) s$. By equivariance of
    transitions in $A$ it follows that
    \[
      (i,\pi\cdot \ol{r})\xto{\scriptnew \pi(b)} (j,\pi\cdot \ol{s})
      \qquad\text{in $A$.}
    \]
    Then $\pi(b)\fresh \pi\cdot s$ and
    $\pi\cdot \ol{r},\pi\cdot \ol{s}$ extends
    $\pi\cdot {r}, (\pi(a)\, \pi(b)) \pi\cdot s$. Moreover
    \[
      \supp(\pi\cdot s)
      =
      \pi(\supp(s)) \seq \pi(\supp(r)\cup \{a\})
      =
      \supp(\pi\cdot r)\cup \{\pi(a)\},
    \]
    and so we conclude
    $(i,\pi\cdot {r})\xto{\scriptnew \pi(a)} (j,\pi\cdot {s})$ in $\tl{A}$.

  \item Transitions are $\alpha$-invariant: Suppose that
    $(i,r)\xto{\scriptnew a} (j,s)$ in $\tl{A}$ and
    $\braket{a} (j,s)=\braket{a'} (j,s')$ for some $a'\neq a$ and some
    $s'$. The former means that $(i,\ol{r})\xto{\scriptnew b} (j,\ol{s})$ in $A$
    for some $b\fresh s$ and some $\ol{r},\ol{s}$ extending
    $r,(a\,b)s$, and the latter that $s'=(a\, a')s$ and $a'\fresh
    s$. Then $\ol {s}$ extends $(a'\, b)s'$. Since $A$ is
    $\alpha$-invariant, we may assume that $b\neq a'$, i.e.\
    $b\fresh s'$. Moreover,
    \[
      \supp(s')
      \seq
      \supp(s)\setminus \{a\}\cup \{a'\}
      \seq
      (\supp(r)\cup \{a\})\setminus \{a\} \cup \{a'\}
      \seq
      \supp(r)\cup \{a'\}.
    \]
    Thus, $(i,r)\xto{\scriptnew a'} (j,s')$ in $\tl{A}$.
  
  \item Transitions are finitely branching: Let $(i,r)$ be a state of
    $\tl{A}$.

    For every outgoing transition $(i,r)\xto{a}(j,s)$ one has
    $a\in \supp(r)$ and $\supp(s)\seq \supp(r)$. Thus, there are only
    finitely many choices for~$a$ (since $\supp(r)$ is finite) and
    for~$s$ (since~$X_j$ and $\supp(r)$ are finite). Moreover, there
    are only finitely many choices for~$j$, since~$I$ is finite.

    Similarly, for every outgoing transition $(i,r)\xto{\scriptnew a} (j,s)$ one
    has $\supp(s)\seq \supp(r)\cup \{a\}$. The equivalence class
    $\braket{a} s$ is uniquely determined by~(1) the position
    $x_a\in X_j$ such that $s(x_a)=a$ (if any) and~(2) the domain
    restriction $s'$ of $s$ to $\dom(s)\setminus \{x_a\}$. There are
    only finitely many possible choices for~(1), and since
    $\supp(s')\seq \supp(r)$ there are only finitely many possible
    choices for~(2). Moreover, there are, again, only finitely many
    choices for~$j$, so the set
    $\{ \braket{a} s \compr (i,r)\xto{\scriptnew a}(j,s) \}$ is finite.\qedhere
  \end{enumerate}
\end{proof}

We first show the correctness of the name-dropping modification for the case of finite bar strings:

\begin{theorem}\label{thm:ncd_rnna}
  For every RNNA $A$, the literal language $L_0(\tl{A})$ is the closure of $L_0(A)$
  under $\alpha$-equivalence.
\end{theorem}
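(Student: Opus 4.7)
The plan is to adapt the two-part strategy of the Büchi case (\autoref{thm:ncd_buechi_rnna}) to finite bar strings; the key simplification is that there is no need to invoke K\H{o}nig's lemma, so a straightforward induction on the length of the bar string will suffice in each step.

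\textbf{Step 1: $L_0(\tl A)$ is closed under $\alpha$-equivalence.} I would prove by induction on $n$ the following strengthened statement: for every pair of $\alpha$-equivalent bar strings $v = \sigma_1\cdots\sigma_n$ and $w = \sigma_1'\cdots\sigma_n'$ in $\barA^*$ and every run
\[ (i_0,r_0)\xto{\sigma_1}(i_1,r_1)\xto{\sigma_2}\cdots\xto{\sigma_n}(i_n,r_n) \]
in $\tl A$, there exists a run
\[ (i_0,r_0')\xto{\sigma_1'}(i_1,r_1')\xto{\sigma_2'}\cdots\xto{\sigma_n'}(i_n,r_n') \]
in $\tl A$ traversing the same control-state sequence $i_0,i_1,\ldots,i_n$. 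Since final states in $\tl A$ depend only on the control-state component, this yields closure of $L_0(\tl A)$ under $=_\alpha$. The inductive step analyses the first letter: if $\sigma_1 = a$ then $\sigma_1' = a$ and the step is immediate; if $\sigma_1 = \newletter a$ then $\sigma_1' = \newletter b$ with $\braket{a}v' = \braket{b}w'$ (where $v',w'$ are the remaining suffixes), and the definition of the $\newletter{}$-transitions of $\tl A$ via a swap $(a\,b)$ is precisely designed so that one can swap the register assignment and appeal to the induction hypothesis.

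\textbf{Step 2: $A$ and $\tl A$ accept the same bar language.} For the inclusion $L_\alpha(A)\subseteq L_\alpha(\tl A)$, every run in $A$ is already a run in $\tl A$ (choose $\ol r := r$, $\ol s := s$ in the definition of transitions of $\tl A$), so $L_0(A)\subseteq L_0(\tl A)$. For the converse inclusion $L_\alpha(\tl A)\subseteq L_\alpha(A)$, I would show by induction on $n$ that every run on $w = \sigma_1\cdots\sigma_n$ in $\tl A$ starting at $(i_0,r_0)$ can be lifted to a run on some $w' =_\alpha w$ in $A$ starting at $(i_0,\ol r_0)$ for any extension $\ol r_0$ of $r_0$ whose support contains $\supp(q_0)$; the $\alpha$-renaming absorbs the bound-name swap $b\fresh s$ appearing in clause~(5) of \autoref{def:ndc}.

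\textbf{Conclusion.} Combining Steps~1 and~2 gives $L_0(A)\subseteq L_0(\tl A)$ (closure contains the original) and conversely every $w\in L_0(\tl A)$ is $\alpha$-equivalent to some $w'\in L_0(A)$, which is exactly the statement that $L_0(\tl A)$ is the $\alpha$-closure of $L_0(A)$.

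\textbf{Expected obstacle.} The only delicate point is the bound-name case in Step~1, where one must manage simultaneously the swap $(a\,b)$ appearing in the definition of $\newletter{}$-transitions of $\tl A$ and the $\alpha$-renaming relating $v$ to $w$. The bookkeeping is routine but must be done carefully, since the registers $r$ and $r'$ may assign different names to the same index in $X_i$; everything hinges on the fact that $\supp$-changes across a transition are controlled precisely by the conditions in clauses~(4)--(5) of \autoref{def:ndc}. Once this case is handled, the rest is elementary induction and requires neither finite-branching arguments nor K\H{o}nig's lemma.
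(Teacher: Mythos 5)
Your decomposition (closure of $L_0(\tl A)$ under $\alpha$-equivalence, plus $L_\alpha(A)=L_\alpha(\tl A)$) is the same as the paper's, and your shortcut for one half of Step~2 is sound: every transition of $A$ is literally a transition of $\tl A$ (for bar transitions, if $a\in\supp(s)$ use $\alpha$-invariance of $A$ to produce the witness $b\fresh s$), final and initial states match, so $L_0(A)\subseteq L_0(\tl A)$. However, both remaining inductions have gaps at exactly the points where the real work lies. In Step~1, the inductive step for $\sigma_1=\newletter a$ does not go through as described: from $(i_0,r_0)\xto{\scriptnew a}(i_1,r_1)$ you need a transition $(i_0,r_0)\xto{\scriptnew b}(i_1,\cdot)$, but $\alpha$-invariance only yields this when $b\fresh r_1$, and $b$ may occur in a register of $r_1$ whose content is not free in the remaining suffix (dead content) --- precisely the phenomenon in the example opening \autoref{sec:name-dropping}. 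The missing idea is the paper's part~1 of \autoref{lem:ndc_alpha_closed}: along any run of $\tl A$ one may first restrict each $r_i$ to the registers holding names free in the remaining suffix, and only then rename. Note also the tension in your formulation: your second run starts at a possibly different register map $r_0'$, in which case it does not witness acceptance of $w$ from the actual initial state; if instead $r_0'=r_0$ is intended, the bound-name case fails without the restriction step. The paper resolves this by separating the restriction lemma (same word, shrunken registers, same start state) from closure under a \emph{single} generating renaming with the same start state, and then using that $=_\alpha$ on finite bar strings is generated by such steps.

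In Step~2 the hard inclusion $L_\alpha(\tl A)\subseteq L_\alpha(A)$ is essentially asserted. The difficulty is not just the swap $b\fresh s$ in clause~(5) of \autoref{def:ndc}: the witnesses $\ol r,\ol s$ there are chosen per transition and need not be consistent across consecutive steps, so lifting a run of $\tl A$ to $A$ requires reconciling the extension supplied by the induction hypothesis with the extension witnessing the current transition. The paper's \autoref{lem:ndc_lang} does this with a carefully chosen invariant --- the $A$-state agrees with the partial map on all registers holding names \emph{free in the suffix} --- together with permutations $\tau$ (and, in the bar case, a fresh name $c$ and $\pi=(a\,c)$, needed because the bound name $a$ may clash with dead register content of the chosen extension $\ol r$) which are then verified to fix the free names of the suffix, so the relabelled word remains $\alpha$-equivalent. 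Your ``for any extension $\ol r_0$ of $r_0$'' strengthening can be pushed through in the free-letter case (two total extensions agreeing on $\dom(r_0)$ are related by a permutation fixing $\supp(r_0)$, hence fixing the letter and the free names of the suffix by \autoref{lem:rnna-props}), but the bar-letter case is exactly where this freshness and clash bookkeeping is unavoidable, and ``the $\alpha$-renaming absorbs the bound-name swap'' does not yet identify, let alone carry out, that argument. So: same architecture as the paper and a valid simplification of one inclusion, but the two central inductive arguments are not yet proofs.
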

This follows from the two lemmas below:

\begin{lemma}\label{lem:ndc_alpha_closed}
  The literal language $L_0(\tl A)$ is closed under $\alpha$-equivalence.
\end{lemma}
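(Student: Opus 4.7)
The plan is to reduce the statement to the case of a single atomic $\alpha$-renaming step and then exploit the two defining features of $\tl{A}$: the $\alpha$-invariance of its transitions together with the freedom to drop register contents along any transition. Since $=_\alpha$ is, by \autoref{def:alpha-fin}, the equivalence closure of pairs $x\newletter a v \sim x\newletter b u$ with $\braket{a}v=\braket{b}u$, it suffices to prove the following: if $\tl{A}$ accepts $w=x\newletter a v$ and $w'=x\newletter b u$ is an atomic $\alpha$-variant of $w$, then $\tl{A}$ also accepts $w'$. Given an accepting run of $\tl{A}$ on $w$ split as
\[
(i_0,r_0)\trans{x}(i,r)\trans{\newletter a}(j,s)\trans{v}(k,t),
\]
I would leave the prefix $(i_0,r_0)\trans{x}(i,r)$ untouched and construct a matching suffix for $w'$. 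The case $a=b$ is trivial ($u=v$), so I henceforth assume $a\neq b$, $b\fresh v$ and $u=(a\,b)v$.

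The main step is to ``push $b$ out of the suffix''. Concretely, I would prove by induction on the length of $v$ that if $b\fresh v$ and a run $(j,s)\trans{v}(k,t)$ is given in $\tl{A}$, then deleting every occurrence of $b$ from each register along the run yields another valid run $(j,s^-)\trans{v}(k,t^-)$ in $\tl{A}$. The key point is that since $b$ appears neither as $b$ nor as $\newletter b$ in $v$, the name-dropping side conditions of \autoref{def:ndc}, namely $\supp(s_{l+1})\cup\{\sigma_{l+1}\}\seq\supp(s_l)$ for plain transitions and $\supp(s_{l+1})\seq\supp(s_l)\cup\{a_{l+1}\}$ for bar transitions, remain in force after removing $b$ from both source and target; moreover, the underlying $A$-witnesses $(i_l,\ol{s_l})\trans{\sigma_{l+1}}(i_{l+1},\ol{s_{l+1}})$ can be reused verbatim. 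An analogous argument replaces the preceding transition by $(i,r)\trans{\newletter a}(j,s^-)$ in $\tl{A}$. Once $b\fresh s^-$, $\alpha$-invariance of $\tl{A}$'s transitions gives $(i,r)\trans{\newletter b}(j,(a\,b)s^-)$, and equivariance of the transition relation turns $(j,s^-)\trans{v}(k,t^-)$ into $(j,(a\,b)s^-)\trans{u}(k,(a\,b)t^-)$.

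Concatenating the three pieces yields a run of $\tl{A}$ on $w'$ whose sequence of control-state components coincides with that of the original run on $w$. Since, by clause~(3) of \autoref{def:ndc}, finality of a state of $\tl{A}$ depends only on its control-state component, the new run is accepting whenever the original is; instantiating at the initial state gives $w'\in L_0(\tl{A})$. I expect the main technical obstacle to be the name-dropping induction: one must verify that deleting $b$ simultaneously from the source and target of \emph{every} transition in the suffix preserves all the side conditions of \autoref{def:ndc} and that the chosen $A$-witnesses remain valid, exploiting both $b\fresh v$ and the flexibility in the choice of bound name in the bar transitions of $A$.
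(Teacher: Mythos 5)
Your proof is correct and follows essentially the same route as the paper's: reduce to a single atomic $\alpha$-renaming step, use the lossiness of $\tl{A}$ to trim registers along the run so that the new bound name becomes fresh for the state reached after the binding transition, and then conclude via $\alpha$-invariance and equivariance, with acceptance preserved because finality in $\tl{A}$ depends only on the control component. The only (immaterial) difference is that the paper's normalization restricts each register assignment to the free names of the remaining suffix, whereas you delete just the single offending name $b$; both verifications of the side conditions of \autoref{def:ndc} and reuse of the $A$-witnesses go through as you describe.
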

\begin{proof}
  \begin{enumerate}
  \item\label{lem:ndc_alpha_closed:1} Let
    $\sigma_1\cdots \sigma_n\in \barA^*$ be a finite bar string and
    let $S_i=\free{\sigma_{i+1}\cdots \sigma_n}$ for
    $i=0,\ldots,n$. We claim that if
    \[
      (i_0,r_0)\xto{\sigma_1} (i_1, r_1)\xto{\sigma_2} \cdots \xto{\sigma_n} (i_n, r_n)
    \]
    is a run in $\tl{A}$ then also
    \[
      (i_0,r_0')\xto{\sigma_1} (i_1, r_1')\xto{\sigma_2} \cdots \xto{\sigma_n}
      (i_n, r_n')
    \]
    is a run in $\tl{A}$, where $r_i'$ is the restriction of $r_i$ to
    those $x\in \mathsf{dom}(r_i)$ such that $r_i(x)\in S_i$. (Note that
    $\supp(r_i')=S_i$ since $S_i\seq \supp(r_i)$ by \autoref{lem:rnna-props}.\ref{lem:rnna-props:3}.) The proof proceeds by
    induction on $n$.

    \smallskip\noindent The base case $n=0$ (i.e.~$w=\epsilon$) is
    trivial.
  
    \smallskip\noindent For the induction step, suppose that
    $n\geq 1$. If $\sigma_1=a$, then $S_0=S_1\cup \{a\}$ and so
    $\supp(r_1')\cup \{a\}=S_1\cup \{a\}\seq S_0 =\supp(r_0')$. Thus
    the transition $(i_0,r_0')\xto{a}(i_1,r_1')$ exists in
    $\tl{A}$. Similarly, if $\sigma_1=\newletter a$ we have
    $S_0=S_1\setminus \{a\}$, which implies $S_1\seq S_0\cup
    \{a\}$. Thus
    $\supp(r_1')=S_1 \seq S_0\cup \{a\} = \supp(r_0')\cup \{a\}$, so
    the transition $(i_0,r_0')\xto{\scriptnew a}(i_1,r_1')$ exists in
    $\tl{A}$. Now apply the induction hypothesis.

  \item Suppose that $w=_\alpha w'$. We show that every state $(i,r)$ of
    $\tl{A}$ accepting $w$ also accepts $w'$. We may assume w.l.o.g.\
    that $w=\newletter av$ and $w'=\newletter a'(a\, a')v$ for $a'\fresh v$. Moreover, by
    part \ref{lem:ndc_alpha_closed:1} we may assume that the
    first transition $(i,r)\xto{\scriptnew a} (j,s)$ of the accepting run
    for~$w$ is such that $\supp(s)$ is the set of free names of $v$;
    in particular, $a'\fresh s$. It follows that
    $\braket{a} s = \braket{a'} (a\, a')s$, so
    $(i,r)\xto{\scriptnew a'} (j,(a\, a')s)$ by $\alpha$-invariance. Since
    $(j,s)$ accepts $v$, the state $(j,(a\, a')s)$ accepts $(a\, a')v$
    by equivariance, and so we conclude that $(i,r)$ accepts
    the bar string $\newletter a'(a\, a')v$.\qedhere
  \end{enumerate}
\end{proof}
\begin{lemma}\label{lem:ndc_lang}
  The RNNAs $A$ and $\tl A$ accept the same bar language: $L_\alpha(A)=L_\alpha(\tl{A})$.
\end{lemma}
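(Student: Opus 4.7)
Plan: The inclusion $L_\alpha(A)\seq L_\alpha(\tl A)$ is immediate. Every total injection $\bar r\colon X_i\to\names$ is in particular a partial injection, so each state of $A$ is literally a state of $\tl A$; furthermore every $A$-transition $(i,\bar r)\xto{\sigma}(j,\bar s)$ is a $\tl A$-transition by taking the witnessing extensions in \autoref{def:ndc} to be $\bar r,\bar s$ themselves. Item~3 of that definition makes final $A$-states final in $\tl A$, so any accepting $A$-run on $w$ is an accepting $\tl A$-run on the same $w$. Hence $L_0(A)\seq L_0(\tl A)$ and a fortiori $L_\alpha(A)\seq L_\alpha(\tl A)$.

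For the nontrivial inclusion $L_\alpha(\tl A)\seq L_\alpha(A)$, I would prove the following lifting claim by induction on $n$: for every finite run
\[
(i_0,r_0)\xto{\sigma_1}(i_1,r_1)\xto{\sigma_2}\cdots\xto{\sigma_n}(i_n,r_n)
\]
in $\tl A$ and every total injection $\bar r_0\supseteq r_0$, there exists a run
\[
(i_0,\bar r_0)\xto{\sigma_1'}(i_1,\bar r_1)\xto{\sigma_2'}\cdots\xto{\sigma_n'}(i_n,\bar r_n)
\]
in $A$ such that each $\bar r_k$ is total with $\bar r_k\restrict\dom(r_k)=r_k$, and $\sigma_1'\cdots\sigma_n'=_\alpha\sigma_1\cdots\sigma_n$. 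For the inductive step, unfolding the $\tl A$-transition $(i_n,r_n)\xto{\sigma_{n+1}}(i_{n+1},r_{n+1})$ yields an $A$-transition $(i_n,\hat r_n)\xto{\tau}(i_{n+1},\hat r_{n+1})$ for some total extension $\hat r_n$ of $r_n$. Since both $\hat r_n$ and the currently-constructed $\bar r_n$ are total extensions of $r_n$, they agree on $\dom(r_n)$, so one can pick a permutation $\pi\in\Perm(\names)$ fixing $\supp(r_n)$ with $\pi\cdot\hat r_n=\bar r_n$. By equivariance, $A$ has the transition $(i_n,\bar r_n)\xto{\pi\tau}(i_{n+1},\pi\cdot\hat r_{n+1})$. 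In the case $\sigma_{n+1}=a$ one has $\tau=a\in\supp(r_n)$, hence $\pi\tau=a$; in the case $\sigma_{n+1}=\newletter a$ the constructed label $\pi\tau=\newletter{\pi b}$ is a bound-name variant of $\newletter a$, and the left-congruence of $=_\alpha$ (\autoref{rem:alpheq-technical}) together with $\alpha$-invariance of $A$-transitions lets one choose the bound name freshly enough to maintain $\sigma_1'\cdots\sigma_{n+1}'=_\alpha\sigma_1\cdots\sigma_{n+1}$.

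The lemma then follows by instantiating the lifting claim with $r_0=q_0$, which is total since $q_0\in Q=\sum_i\names^{\#X_i}$; finality transfers because, by item~3 of \autoref{def:ndc}, $(i_n,r_n)$ is final in $\tl A$ iff any total extension $(i_n,\bar r_n)$ is final in $A$. Thus every $\alpha$-class accepted by $\tl A$ has a representative accepted by $A$, so $L_\alpha(\tl A)\seq L_\alpha(A)$.

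The main obstacle is the $\newletter a$-case of the inductive step: the witnessing $A$-transition supplied by \autoref{def:ndc} uses some bound name $b$ with $\hat r_{n+1}$ extending $(a\,b)r_{n+1}$ rather than $r_{n+1}$ itself, and the alignment permutation $\pi$ may further perturb $b$. Choosing $b$ fresh enough for $\bar r_n$, $r_{n+1}$, and the portion of the run still to be constructed—and then invoking $\alpha$-invariance of $A$ to rename the bound name so that the preserved invariant $\bar r_{n+1}\restrict\dom(r_{n+1})=r_{n+1}$ continues to hold—is the delicate bookkeeping step, but is always possible thanks to the unbounded supply of names in $\names$.
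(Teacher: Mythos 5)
Your easy inclusion is essentially fine (one small slip: for a bar transition $q\xto{\scriptnew a}q'$ of $A$ with $a\in\supp(q')$ the witnessing extensions in \autoref{def:ndc} cannot be $\bar r,\bar s$ themselves, since $\ol s$ must extend $(a\,b)\bar s$ for some $b\fresh\bar s$; the witness does exist, but via $\alpha$-invariance of $A$). The real problem is your lifting claim for $L_\alpha(\tl A)\seq L_\alpha(A)$: it is false as stated, and it carries the whole argument. Take the automaton $A$ from the beginning of \autoref{sec:name-dropping}, with strong state set $\{q_0\}+\names+\names^{\# 2}$ and transitions $q_0\xto{\scriptnew a}a$, $a\xto{\scriptnew b}(a,b)$, $(a,b)\xto{\scriptnew b}(a,b)$ for $a\neq b$. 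In $\tl A$ there is the run $q_0\xto{\scriptnew a}a\xto{\scriptnew a}(\bot,a)$, where $(\bot,a)$ is the partial register map defined only on the second register, with value $a$: the second step is a legitimate $\tl A$-transition, witnessed by $a\xto{\scriptnew b}(a,b)$ for any $b\neq a$, since $b\fresh(\bot,a)$ and $(a,b)$ extends $(a\,b)(\bot,a)=(\bot,b)$. Your claim (with $\bar r_0=q_0$) demands an $A$-run $q_0\xto{\sigma_1'}\bar r_1\xto{\sigma_2'}\bar r_2$ in which $\bar r_1$ totally extends the state $a$, hence $\bar r_1=a$, and $\bar r_2$ totally extends $(\bot,a)$, i.e.\ literally stores $a$ in its second register. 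But every transition of $A$ out of $a$ has the form $a\xto{\scriptnew d}(a,d)$ with $d\neq a$, so no successor of $a$ stores $a$ in the second register, no matter how you $\alpha$-rename the bound label. The invariant $\bar r_k\restrict\dom(r_k)=r_k$ \emph{on all of} $\dom(r_k)$ is exactly what name-dropping destroys in shadowing situations; the obstruction is not a shortage of fresh names, so the proposed "choose the bound name freshly enough" repair cannot work.

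The paper's proof uses the correct, weaker invariant: the constructed total state $\ol r$ need only agree with $r$ at registers whose content is \emph{free in the remaining input}, and the residual mismatch is absorbed by a permutation $\tau$ with $\tau\cdot\ol s=\ol s'$ (possible because $\names^{\# X_j}$ is a single orbit), at the price of replacing the word by an $\alpha$-equivalent $w'$; accordingly the induction runs over suffixes, where the left-congruence property of $=_\alpha$ is the one that applies. Your front-to-back induction also has a second, independent defect even in the free-name step: $u'=_\alpha u$ does not imply $u'a=_\alpha ua$ (for instance $\newletter{b}=_\alpha\newletter{a}$ but $\newletter{b}a\neq_\alpha\newletter{a}a$), so maintaining $\alpha$-equivalence of prefixes does not by itself license appending the next letter; fixing this forces you to track which names must be preserved literally, which is precisely the "free in the remaining word" bookkeeping of the paper's suffix induction.
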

\begin{proof}
  ($\seq$) We show for every state $(i,r)$ of $A$ and every bar string
  $w\in \barA^*$: if $(i,r)$ accepts $w$ in $A$, then $(i,r)$ accepts
  some $w'\in \barA^*$ such that $w'=_\alpha w$ in $\tl{A}$. The proof
  is by induction on the length of $w$.

  \smallskip\noindent The base case $w=\epsilon$ is trivial. 

  \smallskip\noindent If $w=av$ for $a\in \names$ and $v\in \barA^*$,
  let $(i,r)\xto{a} (j,s)$ be the first transition of an accepting run
  for $w$ in $A$. Then the same transition exists in $\tl{A}$ and by
  induction, $(j,s)$ accepts some $v'\in \barA^*$ such that
  $v'=_\alpha v$ in $\tl{A}$. Thus $(i,r)$ accepts the word $w'=av'$
  in $\tl{A}$, and we have $w'=av'=_\alpha av = w$.

  \smallskip\noindent If $w=\newletter bv$ for some $b\in \names$ and
  $v\in \barA^*$, let $(i,r)\xto{\scriptnew b} (j,s)$ be the first
  transition of an accepting run for $w$ in $A$. Choose $a\in \names$
  such that $a\fresh s,v$. Then we have the transition
  $(i,r)\xto{\scriptnew a} (j,(a\, b)s)$ in $\tl{A}$. By induction,
  the state $(j,s)$ of $\tl{A}$ accepts some $v'=_\alpha v$. Thus
  $(j, (a\, b) s)$ accepts $(a\,b)v'$ by equivariance. It follows that
  $(i,r)$ accepts $w'=\newletter a(a\, b)v'$ in $\tl{A}$ and moreover
  $w=\newletter bv =_\alpha \newletter a(a\, b)v =_\alpha \newletter
  a(a\, b) v'=w'$.

  \medskip\noindent($\supseteq$) We show that for every state $(i,r)$
  of $\tl{A}$ accepting $w\in \barA^*$, there exists a state
  $(i,\ol{r})$ of $A$ such that
  \begin{enumerate}
  \item\label{cond:1} $\ol{r}(x)=r(x)$ for every $x\in \dom(r)$ with $r(x)$ free in
    $w$, and
  \item\label{cond:2} $(i,\ol{r})$ accepts some $w'\in \barA^*$ such that $w'=_\alpha w$.
  \end{enumerate}
  The proof is again by induction on the length of $w$.

  \smallskip\noindent The base case $w=\epsilon$ is trivial.

  \smallskip\noindent For the induction step, suppose that $(i,r)$
  accepts $w=\sigma v$ for $\sigma \in \barA$ and $v\in \barA^*$. Let
  $(i,r)\xto{\sigma} (j,s)$ be the first transition of an accepting
  run for $w$ in $\tl{A}$. By induction, there exists $(j,\ol{s})$ in
  $A$ such that
  \begin{enumerate}
  \item $\ol{s}(x)=s(x)$ for every $x\in \dom(s)$ with $s(x)$ free in $v$.
  \item $(j,\ol{s})$ accepts some $v'\in \barA^*$ such that $v'=_\alpha v$.
  \end{enumerate}
  We treat the cases of free and bound transitions separately:

  \smallskip\noindent\textbf{Case $\sigma=a$:} \\
  Since $(i,r)\xto{a} (j,s)$ in $\tl{A}$ we have
  $(i,\ol{r})\xto{a}(j,\ol{s}')$
  in $A$ for some
  $\ol{r}, \ol{s}'$ extending $r,s$. Then clearly the state
  $(i,\ol{r})$ satisfies~\ref{cond:1}. To prove condition~\ref{cond:2}, choose a
  permutation $\tau\in \Perm(\names)$ such that
  $\tau\cdot \ol{s}=\ol{s}'$; such a $\tau$ exists because
  $(j,\ol{s})$ and $(j,\ol{s}')$ are states of $A$, so
  $\ol{s},\ol{s}'\colon X_j\to \names$ are total injective maps,
  whence elements of $A^{\# X_j}$ which has just one orbit.
  We show that the word
  \[
    w':=a(\tau\cdot v')
  \]
  satisfies the desired condition. Indeed, since $(j,\ol{s})$ accepts
  $v'$, the state $(j,\ol{s}')=(j,\tau\cdot\ol{s})$ accepts
  $\tau\cdot v'$ by equivariance and thus $(i,\ol{r})$ accepts
  $w'$. Moreover, since the free variables of~$v'$ (equivalently,
  those of $v$) occur at the the same positions of $s$, $\ol{s}$ and
  $\ol{s}'$, the permutation~$\tau$ fixes them. Thus
  $\tau\cdot v' =_\alpha v'$ and therefore
  \[ w'=a(\tau\cdot v') =_\alpha av' =_\alpha av = w.\]

  \smallskip\noindent \textbf{Case $\sigma=\newletter a$:} \\
  Since $(i,r)\xto{\scriptnew a} (j,s)$ in $\tl{A}$ we have
  $(i,\ol{r})\xto{\scriptnew b}(j,\ol{s}')$ in $A$ for some
  $b\fresh s$ and some $\ol{r}, \ol{s}'$ extending $r,(a\,
  b)s$. Choose a name $c\fresh a,b,\ol{r},v'$ and put $\pi = (a\, c)$.
  Then we have the transition
  $(i,\pi\cdot \ol{r})\xto{\scriptnew \pi(b)} (j, \pi\cdot \ol{s}')$
  in ${A}$ by equivariance. The state $(i,\pi\cdot \ol{r})$ satisfies
  condition~\ref{cond:1} because $a$ is not free in $w=\newletter
  av$. To prove condition~\ref{cond:2}, choose a permutation
  $\tau\in \Perm(\names)$ such that
  $\tau\cdot \ol{s}=\pi \cdot \ol{s}'$ and $a\fresh \tau\cdot
  v'$. This choice is possible because $a\fresh \pi\cdot \ol{s}'$. We
  show that the word
  \[
    w':= \newletter \pi(b)(\tau\cdot v')
  \]
  satisfies the desired. Indeed, since $(j,\ol{s})$ accepts
  $v'$, the state $(j,\pi\cdot \ol{s}')=(j,\tau\cdot \ol{s})$
  accepts $\tau\cdot v'$ by equivariance and thus
  $(i,\pi\cdot \ol{r})$ accepts $w'$.

  We now prove that
  \begin{equation}\label{eq:tau}
    (a\, \pi(b))\tau\cdot v' =_\alpha v'.
  \end{equation}
  To see this we will show that the permutation
  $(a\, \pi(b))\tau$ fixes all free variables of~$v'$ (equivalently,
  those of~$v$). Indeed, if $a$ is free in $v$, then $a$ occurs in $s$; thus,
  at the same position $a$ occurs in $\ol{s}$ and $\pi(b)$ occurs at
  that position in $\pi\cdot \ol{s}'$. Since $\tau \cdot \ol s = \pi
  \cdot \ol{s}'$, we see that $\tau$ maps $a$ to $\pi(b)$, whence 
  $(a\, \pi(b))\tau$ maps $a$ to $a$. For every other free variable
  $a'\neq a$ of $v$, the name $a'$ occurs in $s$, $\ol{s}$,
  $\ol{s}', \pi\cdot \ol{s}'$ at the same position. Thus
  $\tau$ maps $a'$ to $a'$, which implies that $(a\, \pi(b))\cdot \tau$ maps
  $a'$ to $a'$ because $a'\neq a,\pi(b)$, using that
  $\pi(b)\in \{b,c\}$ and $b\fresh s$, $c\fresh v'$, so both $b,c$ are
  not free variables of $v$.

  Finally, we obtain 
  \[
    w'
    =
    \newletter \pi(b)(\tau\cdot v')
    =_\alpha
    \newletter a(a\, \pi(b))\tau\cdot v'
    \overset{\eqref{eq:tau}}{=_\alpha}
    \newletter av'
    =_\alpha
    \newletter av
    = w,
  \]
  where, in the second step, we use the definition of $\alpha$-equivalence
  and that $a\fresh \tau\cdot v'$. 
\end{proof}

We are ready to prove the correctness of the name-dropping modification for Büchi RNNAs (\autoref{thm:ncd_buechi_rnna}). As for the case of finite bar strings established above  (\autoref{thm:ncd_rnna}) we split the proof into two lemmas:
\begin{lemma}\label{lem:alpha_closure_buechi}
  The literal $\omega$-language of $\tl{A}$ is closed under $\alpha$-equivalence.
\end{lemma}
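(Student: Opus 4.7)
The plan is to follow the strategy outlined in part~\ref{thm:ncd_buechi_rnna:1} of the sketch of \autoref{thm:ncd_buechi_rnna}: first establish a finite-run analogue by induction on length, then lift the result to infinite runs via K\H{o}nig's lemma. The crucial observation is that finality of a state $(i,r)$ in $\tl{A}$ depends only on the control index $i \in I$ (clause~3 of \autoref{def:ndc}), so it suffices to preserve the sequence of control indices when rewriting a run along $\alpha$-equivalence.

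The first step will be to prove by induction on $n$ the following finite analogue: for finite bar strings $v = \sigma_1\cdots\sigma_n$ and $v' = \sigma_1'\cdots\sigma_n'$ with $v =_\alpha v'$, every run
\[
  (i_0, r_0) \xto{\sigma_1} (i_1, r_1) \xto{\sigma_2} \cdots \xto{\sigma_n} (i_n, r_n)
\]
in $\tl{A}$ is matched by a run
\[
  (i_0, r_0) \xto{\sigma_1'} (i_1, r_1') \xto{\sigma_2'} \cdots \xto{\sigma_n'} (i_n, r_n')
\]
from the same initial state and with the same sequence of control indices $i_0,\ldots,i_n$. This refines \autoref{lem:ndc_alpha_closed} and is proved by essentially the same argument: the $\alpha$-invariant transition of $\tl A$ used to replace a binder preserves its target control index, and free-letter transitions are unchanged; only the register contents $r_k'$ may differ from $r_k$.

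For the infinite case, given $w =_\alpha w'$ and an accepting run $\rho = (i_0,r_0)\xto{\sigma_1}(i_1,r_1)\xto{\sigma_2}\cdots$ for $w$ in $\tl A$, I will consider the tree whose nodes at depth $n$ are length-$n$ runs $(i_0,r_0)\xto{\sigma_1'}(i_1,r_1')\xto{\sigma_2'}\cdots\xto{\sigma_n'}(i_n,r_n')$ of the $n$-prefix of $w'=\sigma_1'\sigma_2'\cdots$ in $\tl A$ whose control indices agree with those of $\rho$, with edges given by one-step extensions. Because the defining prefix condition on $=_\alpha$ (\autoref{def:alpha-fin}) implies $w_n =_\alpha w_n'$ for every $n$, the finite analogue ensures that the tree is nonempty at every depth. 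It is finitely branching since a straightforward induction on $k$, using the register constraints in \autoref{def:ndc}, yields $\supp(r_k') \seq \supp(r_0) \cup \Names{\sigma_1'\cdots\sigma_k'}$ (a finite set), while each $r_k'$ is a partial injection into this set from the fixed finite domain $X_{i_k}$. K\H{o}nig's lemma then supplies an infinite run of $\tl A$ on $w'$ whose sequence of control indices coincides with that of $\rho$; since finality in $\tl A$ depends only on the control index, this run is accepting, so $w' \in L_{0,\omega}(\tl A)$.

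The main obstacle is the finite analogue with explicit tracking of control indices: one must check that every $\alpha$-rewriting step performed on an existing run does not perturb the target control index of any transition. This is bookkeeping on top of \autoref{lem:ndc_alpha_closed} and presents no genuine difficulty; once the finite analogue is in place, establishing finite branching and invoking K\H{o}nig's lemma are routine.
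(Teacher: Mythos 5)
Your proposal is correct and follows essentially the same route as the paper: a refined finite-run version of \autoref{lem:ndc_alpha_closed} established by induction, followed by a K\H{o}nig's-lemma tree argument over partial runs of prefixes of $w'$, with finite branching coming from the support bound of \autoref{lem:rnna-props} and orbit-finiteness. The only cosmetic difference is that you track the sequence of control indices while the paper tracks finality of each state directly; by clause~3 of \autoref{def:ndc} these amount to the same invariant.
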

\begin{proof}
  \begin{enumerate}
  \item\label{lem:alpha_closure_buechi:1} Recall from \autoref{lem:ndc_alpha_closed} that $L_0(\tl{A})$
    is closed under $\alpha$-equivalence; in fact, the proof of that
    lemma shows that if
    $\sigma_1\sigma_1\cdots \sigma_n=_\alpha \sigma_1'\sigma_2'\cdots
    \sigma_n'$ and
    \[
      q_0\xto{\sigma_1} q_1\xto{\sigma_2} \cdots \xto{\sigma_n} q_n
    \]
    is a run for $\sigma_1\sigma_2\cdots \sigma_n$ in $\tl{A}$, then
    there exists a run
    \begin{equation}\label{eq:run}
      q_0\xto{\sigma_1'} q_1'\xto{\sigma_2'} \cdots \xto{\sigma_n'}
      q_n'
    \end{equation}
    in $\tl{A}$ such that $q_i'$ is final iff $q_i$ is final for
    $i=1,\ldots, n$.%
    
  \item Let $w=\sigma_1\sigma_1\sigma_2\cdots$ and
    $w'=\sigma_1'\sigma_2'\sigma_3'\cdots$ be two infinite bar strings
    such that $w=_\alpha w'$, and suppose that $\tl{A}$ accepts $w$
    via the accepting run
    \[
      q_0\xto{\sigma_1} q_1\xto{\sigma_2} q_2\xto{\sigma_3}\cdots.
    \]
    We need to construct an accepting run for $w'$. Consider the tree
    whose nodes are the partial runs \eqref{eq:run} where $n\in \Nat$
    and $q_i'$ final iff $q_i$ final for $i=1,\ldots, n$, and whose
    edge relation is given by extension of runs. Thus, the nodes of
    depth $n$ are precisely the runs \eqref{eq:run}. This tree is
    infinite since a run \eqref{eq:run} exists for each $n$ by
    part \ref{lem:alpha_closure_buechi:1}, but finitely branching since there are only finitely
    many such runs for each $n$, using that
    $\supp(q_i')\seq \supp(q_0)\cup \Names{\sigma_1'\sigma_2'\cdots
      \sigma_n'}$ for $i=1,\ldots,n$ by \autoref{lem:rnna-props} and that the orbit-finite set $\tl{Q}$
    contains only finitely many elements with a given support. Thus,
    the tree contains an infinite path by K\H{o}nig's lemma. That is,
    we obtain an infinite run
    \[
      q_0\xto{\sigma_1'} q_1'\xto{\sigma_2'} q_2'\xto{\sigma_3'}\cdots
    \]
    with $q_i'$ final iff $q_i$ final for each $i$. This is an
    accepting run for $w'$, as required.\qedhere
  \end{enumerate}
\end{proof}
\begin{lemma}\label{lem:alpha_lang_buechi}
  The Büchi RNNAs $A$ and $\tl A$ accept the same bar
  $\omega$-language.
\end{lemma}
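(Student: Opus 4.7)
My plan is to establish $L_{\alpha,\omega}(A) = L_{\alpha,\omega}(\tl{A})$ by proving both inclusions separately, in each direction lifting the corresponding finite-word inclusion from \autoref{lem:ndc_lang} to infinite bar strings via K\H{o}nig's lemma, in exact analogy with the argument used for \autoref{lem:alpha_closure_buechi}. The key preparatory observation is that, since $Q = \sum_{i\in I} \names^{\# X_i}$ and $\tl{Q} = \sum_{i\in I} \names^{\$ X_i}$, every state of $A$ or $\tl{A}$ decomposes as a pair $(i, r)$ with control component $i \in I$; by \autoref{def:ndc}(3) finality depends only on~$i$. Moreover, inspection of the inductive constructions in the proof of \autoref{lem:ndc_lang} shows that both the $(\subseteq)$ and $(\supseteq)$ translations preserve the control component of every intermediate state, not merely of the final state. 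This step-wise matching of control states is precisely what will enable the B\"uchi acceptance condition to transfer between $A$ and $\tl{A}$.

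For the inclusion $L_{\alpha,\omega}(A) \subseteq L_{\alpha,\omega}(\tl{A})$, I would fix an accepting run
\[ q_0 \xto{\sigma_1} q_1 \xto{\sigma_2} q_2 \xto{\sigma_3} \cdots \]
in $A$ with $q_k = (i_k, \ol{r}_k)$, and a finite set $S \subseteq \names$ containing $\supp(q_0)$ of size $\degree(\tl{A}) + 1$. I then form the tree whose nodes at depth $n$ are the finite runs in $\tl{A}$ of the form $q_0 \xto{\sigma_1'} (i_1, \tl{r}_1) \xto{\sigma_2'} \cdots \xto{\sigma_n'} (i_n, \tl{r}_n)$ with $\sigma_1'\cdots\sigma_n' =_\alpha \sigma_1\cdots\sigma_n$ and $\Names{\sigma_1'\cdots\sigma_n'} \subseteq S$, ordered by run extension. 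Nonemptiness at every depth is given by \autoref{lem:ndc_lang}$(\subseteq)$ combined with \autoref{lem:restrict_num_names} applied to $\tl{A}$; finite branching follows because $\supp(\tl{r}_k) \subseteq S$ admits only finitely many possible states at each node. K\H{o}nig's lemma then extracts an infinite branch, yielding a run in $\tl{A}$ for some $w' =_\alpha w$ whose control states match those of the original $A$-run position by position; since the latter are final infinitely often, so is the former, giving an accepting run.

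The reverse inclusion $L_{\alpha,\omega}(\tl{A}) \subseteq L_{\alpha,\omega}(A)$ is established symmetrically, using \autoref{lem:ndc_lang}$(\supseteq)$ together with \autoref{lem:restrict_num_names} applied to $A$. A mild subtlety concerning the initial state---the state produced by the finite-case construction lies in the same orbit as $q_0$ but is not literally equal to it---is handled by a single equivariant re-indexing by a permutation that fixes all free names of the bar string, so the renamed bar string is still $\alpha$-equivalent to the original.

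The principal technical obstacle I foresee is the careful verification that the finite-case constructions in \autoref{lem:ndc_lang} preserve control states at every intermediate step, not merely at the final position. In the $(\subseteq)$ direction each single transition built in $\tl{A}$ directly mirrors a transition of $A$ and thus keeps its target control component; in the $(\supseteq)$ direction the target $(j, \ol{s})$ of each $A$-transition shares its control component $j$ with the corresponding $\tl{A}$-state. Once this is spelled out explicitly, the remainder is a routine K\H{o}nig-lemma extraction of the kind already used in \autoref{lem:alpha_closure_buechi}.
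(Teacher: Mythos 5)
Your proposal is correct and takes essentially the same route as the paper's proof: both lift the finite-word run correspondence from \autoref{lem:ndc_lang} (with finality matched position by position, which in your formulation follows from preservation of the control components together with the fact that finality depends only on them) to infinite bar strings by restricting names to a finite set $S$ and applying K\H{o}nig's lemma, in each direction. Your explicit handling of the initial-state subtlety---re-indexing by a permutation fixing the free names, since the finite-case construction only returns a start state in the same orbit as $q_0$---addresses a point the paper glosses over, but it is a refinement of, not a departure from, the same argument.
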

\begin{proof}
  \begin{enumerate}
  \item Recall from \autoref{lem:ndc_lang} that
    $L_\alpha(A)=L_\alpha(\tl{A})$. In fact, the proof of that lemma
    shows that for every $\sigma_1\sigma_2\cdots \sigma_n\in \barA^*$
    and every run
    \[
      q_0\xto{\sigma_1} q_1\xto{\sigma_2} \cdots \xto{\sigma_n} q_n
    \]
    in $\tl{A}$ there exist $\sigma_1'\sigma_2'\cdots \sigma_n'\in \barA^*$ and a run
    \begin{equation}\label{eq:run2}
      q_0\xto{\sigma_1'} q_1'\xto{\sigma_2'} \cdots \xto{\sigma_n'}
      q_n'
    \end{equation}
    in ${A}$ such that
    $\sigma_1'\sigma_2'\cdots \sigma_n'=_\alpha \sigma_1\sigma_2\cdots
    \sigma_n$ and $q_i'$ final in $A$ iff $q_i$ final in $\tl{A}$ for
    $i=1,\ldots, n$; and vice versa with interchanged roles of $A$ and
    $\tl{A}$.
    
  \item To prove $L_{\alpha,\omega}(\tl{A})\seq L_{\alpha,\omega}(A)$,
    let $w=\sigma_1\sigma_2\sigma_3\cdots$ be an infinite bar string
    in $L_{0,\omega}(\tl{A})$ and 
    \[
      q_0\xto{\sigma_1} q_1\xto{\sigma_2} q_2 \xto{\sigma_3}\cdots
    \]
    be an accepting run for $w$ in $\tl{A}$; we need to find $w'\in
    L_{0,\omega}(A)$ such that $w'=_\alpha w$. Let
    $m=\degree(A)$ and choose a set $S\seq \names$ of
    $m+1$ distinct names containing
    $\supp(q_0)$. Consider the tree whose nodes are the partial runs
    \eqref{eq:run2} in $A$ where (1)~$n\in \Nat$,
    (2)~$q_i'$ final in $A$ iff
    $q_i$ final in $\tl{A}$ for $i=1,\ldots, n$, (3)~$\Names{\sigma_1'\sigma_2'\cdots \sigma_n'}\seq S$, and (4)~$\sigma_1'\sigma_2'\cdots \sigma_n'=_\alpha
    \sigma_1\sigma_2\cdots
    \sigma_n$. The edge relation is given by extension of runs. Thus,
    the nodes of depth
    $n$ are precisely the runs \eqref{eq:run2}. This tree is infinite
    since, by part 1 and the proof of \autoref{lem:restrict_num_names},
    for each $n\in
    \Nat$ there exists a run \eqref{eq:run2} satisfying the above
    conditions \mbox{(2)--(4)}. It is finitely branching because for
    every such run one has $\supp(q_i')\seq S$ and $\sigma_i'\in
    S$ for each
    $i$, so there are only finitely many nodes of depth
    $n$. Thus, the tree contains an infinite path by K\H{o}nig's
    lemma. That is, we obtain an infinite run
    \[
      q_0\xto{\sigma_1'} q_1'\xto{\sigma_2'} q_2'\xto{\sigma_3'}\cdots
    \]
    with $q_i'$ final iff $q_i$ final for each $i$, and
    $\sigma_1'\sigma_2'\cdots \sigma_n'=_\alpha \sigma_1\sigma_2\cdots
    \sigma_n$ for each $n$. Putting
    \[
      w':= \sigma_1'\sigma_2'\sigma_3'\cdots
    \]
    we obtain $w'=_\alpha w$ and $w\in L_{0,\omega}(A)$, as required.
    
  \item The proof of the reverse inclusion
    $L_{\alpha,\omega}(A)\seq L_{\alpha,\omega}(\tl{A})$ is
    symmetric.\qedhere
  \end{enumerate}
\end{proof}

This concludes the proof of \autoref{thm:ncd_buechi_rnna}.

\section{Details for \autoref{sec:decidability}}

\section*{Proof of \autoref{lem:word-inclusion}}
  ($\Leftarrow$)~Suppose that there exists
  $w'\sqsupseteq w$ such that $[w']_\alpha\in L$, and let $u\in
  D(w)$. Then there exists $v\in \barA^\omega$ such that $v=_\alpha w$
  and $\ub(v)=u$. By \autoref{lem:order-rel}, there exists
  $v'\sqsupseteq v$ such that $v'=_\alpha w'$. Thus
  $[v']_\alpha = [w']_\alpha\in L$, so $u=\ub(v)=\ub(v')\in D(L)$.

  \smallskip\noindent ($\Rightarrow$)~Suppose that $D(w)\seq D(L)$,
  and let $A$ be a Büchi RNNA accepting $L$ with the initial state
  $q_0$. By \autoref{thm:ncd_buechi_rnna} we may assume that
  $L_{0,\omega}(A)$ is closed under $\alpha$-equivalence. Choose a
  clean bar string $v\in \barA^\omega$ (\autoref{lem:clean}) such that
  $v=_\alpha w$ and such that for every $a\in \supp(q_0)$ the
  letter $\newletter a$ does not appear in $v$.  Then
  $\ub(v)\in D(w)\seq D(L)$, so there exists $v'\in \barA^\omega$ such
  that $\ub(v')=\ub(v)$ and $v'\in L_{0,\omega}(A)$.

  We now prove that $v\sqsubseteq v'$. To see this, let
  $v=\sigma_1\sigma_2\sigma_3\cdots$ and
  $v'=\sigma_1'\sigma_2'\sigma_3'\cdots$ and suppose that
  $\sigma_n=\newletter a$ for some $a\in \names$. Since
  $\ub(v')=\ub(v)$ we know that $\sigma_n'\in \{a,\newletter a\}$, and
  we need to prove that $\sigma_n'=\newletter a$. Suppose to the
  contrary that $\sigma_n'=a$. Then $a\fresh \supp(q_0)$ by the above
  choice of $v$. By \autoref{lem:rnna-props}.\ref{lem:rnna-props:3}
  the name $a$ cannot occur freely in $v'$. This implies that there
  exists $m<n$ such that $\sigma'_m=\newletter a$. Thus, $\ub(v')$ has
  the name $a$ at position~$m$ and~$n$. On the other hand, since
  $\sigma_n=\newletter a$ and $v$ is clean, $\sigma_m$ is not equal to
  $a$ or $\newletter a$, that is, $\ub(v')$ has distinct names at
  position $m$ and $n$. Thus $\ub(v')\neq \ub(v)$, a
  contradiction. 

  Since $v=_\alpha w$ and $v\sqsubseteq v'$, by
  \autoref{lem:order-rel} there exists $w'\sqsupseteq w$ such that
  $w'=_\alpha v'$. Then $[w']_\alpha=[v']_\alpha\in L$, as required.

\section{Details for \autoref{sec:relation}}

\section*{Proof of \autoref{prop:muller-buchi-equiv}}
  We show how to convert a Büchi RNNA into an equivalent Muller RNNA
  and vice versa. The two constructions are completely analogous to the
  classical ones.
  \begin{enumerate}
  \item If $A=(Q,R,q_0,F)$ is a Büchi RNNA accepting $L$, then the
    Muller RNNA $A'=(Q,R,q_0,\F)$ where
    $\F = \{ X\seq Q \compr X\cap F\neq \emptyset \}$ accepts $L$.
    
  \item Suppose that $A=(Q,R,q_0,\F)$ is a Büchi RNNA accepting $L$. Let $\F=\{F_1,\ldots, F_n\}$ where $F_i\seq \orb(Q)$, and denote by $\ol{F_i}\seq Q$ the union $\bigcup_{O\in F_i} O$. Then the following Büchi RNNA $A'=(Q',R',q_0,F)$ accepts $L$:
    \begin{itemize}
    \item $Q' = Q \cup \bigcup_{i=1}^n \{i\}\times \ol{F_i} \times \Pow(F_i)$.
    \item For each transition $q\xto{\sigma} q'$ in $A$ we have the following transitions in $A'$:
      \begin{enumerate}[(1)]
        \setlength{\itemindent}{0.35cm}
      \item $q\xto{\sigma}q'$;
      \item $q\xto{\sigma} (i,q',\emptyset)$ for all $i$ such that $q'\in \ol{F_i}$;
      \item $(i,q,R)\xto{\sigma} (i,q',R\cup \{q'\})$ for all $i$ with
        $q,q'\in \ol{F_i}$ and all $R\seq F_i$ with $R\neq F_i$;
        
      \item $(i,q,F_i)\xto{\sigma} (i,q',\emptyset)$ for all $i$ such that $q,q'\in \ol{F_i}$.
      \end{enumerate}
    \item $F=\{ (i,q, F_i) \compr i=1,\ldots, n, \, q\in \ol{F_i} \}$. 
    \end{itemize} 
  \end{enumerate}
  While simulating $A$ the automaton $A'$ guesses that from some point
  on, only states from $\ol{F_i}$ for some $i$ will be visited, and
  uses the third component of states to make sure that each orbit from
  $F_i$ is visited infinitely often.

\end{document}